\documentclass{article}
\PassOptionsToPackage{numbers}{natbib}

\usepackage[preprint]{neurips_2026}

\usepackage[utf8]{inputenc} 
\usepackage[T1]{fontenc}    
\usepackage{hyperref}       
\usepackage{url}            
\usepackage{booktabs}       
\usepackage{amsfonts}       
\usepackage{nicefrac}       
\usepackage{microtype}      
\usepackage{xcolor}         
\usepackage{graphicx} 
\usepackage{amsmath}
\usepackage{subcaption}
\usepackage{placeins}
\usepackage{amsthm}
\usepackage{wrapfig}
\usepackage{algorithm}
\usepackage{algorithmic}
\usepackage{comment}

\newtheorem{proposition}{Proposition}
\newtheorem{lemma}{Lemma}

\title{CORE-BREW: LLR-Based Soft Decoding for Robust Multi-Bit LLM Watermarking}

\author{%
  Joeun Kim \\
  Department of AI \\
  DGIST \\
  Daegu, Republic of Korea \\
  \texttt{jowithu@dgist.ac.kr} \\
  \And
  HoEun Kim \\
  Department of AI \\
  DGIST \\
  Daegu, Republic of Korea \\
  \texttt{hoeun.kim@dgist.ac.kr} \\
  \And
  Young-Sik Kim\thanks{Corresponding author.} \\
  Department of AI, Department of EECS \\
  DGIST \\
  Daegu, Republic of Korea \\
  \texttt{ysk@dgist.ac.kr} \\
}

\begin{document}

\maketitle

\begin{abstract}
Reliable provenance for LLM outputs requires multi-bit watermarks that remain robust under editing while maintaining strict false-positive control. Existing ECC-based LLM watermarks rely largely on hard-decision decoding, discarding token-level reliability information. We propose \textbf{CORE-BREW}, a \textbf{CO}nstant-hit-\textbf{R}ate \textbf{E}mbedding extension of block-wise BREW for robust multi-bit watermarking. CORE-BREW calibrates the watermark channel by targeting a fixed hit-rate $p^\star$, yielding closed-form per-token log-likelihood ratios (LLRs) for principled soft-decision decoding. It supports two detection modes: \textbf{Strict-Safe}, which preserves the bounded-distance designated-codeword acceptance region, and \textbf{FPR-Calibrated}, which uses likelihood-based scoring and lightweight list decoding to characterize the FPR--TPR trade-off. Experiments on open-source LLMs under token-level edits and paraphrasing demonstrate improved low-FPR discrimination and robustness over prior multi-bit watermarking baselines while maintaining comparable semantic quality.
\end{abstract}


\section{Introduction}
Large language models (LLMs) are now used to generate text at scale in education, journalism, software development, and public policy. This rapid deployment raises urgent questions about provenance, accountability, and safety: when a text appears online, who---or what---produced it, and under what conditions? Watermarking has emerged as a promising mechanism to embed verifiable provenance signals at generation time~\cite{kirchenbauer2023watermark} rather than relying solely on post-hoc classifiers~\cite{mitchell2023detectgpt}.
Recent work has moved beyond zero-bit ``presence tests'' toward \emph{multi-bit} watermarks that embed payloads such as user identifiers or policy tags. A particularly effective line couples keyed token partitioning with error-correcting codes (ECCs) and \emph{designated-codeword} verification, achieving extremely low false positive rates (FPR) by accepting only when a pre-specified codeword is recovered in enough separately seeded blocks~\cite{qu2025provably,chao2024watermarking,kim2026blockwisecodewordembeddingreliable}. In these schemes, the detector typically maps each generated token into a single bit (e.g., membership in a block-specific green/red list) and then applies a bounded-distance ECC decoder. This hard-decision interface is simple and provides clean combinatorial FPR bounds, but it discards the model's token probabilities and can therefore suffer elevated false negatives under moderate synonym substitution, insertion/deletion misalignment, or paraphrasing, even when text quality and FPR are tightly controlled.

A natural idea is to apply \emph{soft-decision} decoding, which is well known to improve reliability in classical communication systems when calibrated log-likelihood ratios (LLRs) are available~\cite{lin2004error,richardson2008modern}. However, in existing LLM watermarking pipelines, the induced ``channel'' from hidden bits to observed token categories is highly context-dependent: the probability of sampling a green token given a target bit can vary substantially across prompts and time steps. Without a principled, position-homogeneous channel model, LLRs are ill-defined, and off-the-shelf soft decoding becomes ad hoc.

We propose \textbf{CORE-BREW}, a Constant Hit-Rate extension of block-wise BREW that makes soft decoding \emph{principled} by explicitly \emph{shaping} the watermark channel.
On the embedding side, it introduces a \emph{constant hit-rate} rule that adaptively chooses a logit offset at each time step so that the \emph{total} probability mass assigned to the target token list is fixed to $p^\star \in (0,1)$, independent of context. This calibration makes each token position behave, from the watermark's perspective, like an approximately stationary binary symmetric channel (BSC) with crossover probability $1-p^\star$ in the fully-embedded regime (i.e., when no erasure safeguard is triggered). As a consequence, each observed token category yields a closed-form per-token LLR of constant magnitude $\log\frac{p^\star}{1-p^\star}$ with sign determined by list membership, enabling a clean soft-information interface for ECC-based multi-bit watermarking. 
Constant hit-rate embedding can, in principle, require large logit shifts in low-entropy contexts where the base model assigns extremely small mass to the target list. To prevent severe quality degradation and ensure the empirical stability of the channel model, we introduce \emph{distortion guards}: we cap the bias magnitude and optionally skip embedding at low-entropy positions, treating them as neutral ``erasures'' (zero LLR contribution) during detection. This improves the robustness--quality trade-off while remaining compatible with block-wise ECC aggregation.

On the detection side, CORE-BREW integrates calibrated LLRs into a block-wise BCH framework with window shifting~\cite{kim2026blockwisecodewordembeddingreliable}. Since preserving the bounded-distance acceptance region and expanding detection power are distinct goals, we provide two complementary modes.
\textbf{Strict-Safe} converts erasures into key-derived pseudorandom bits and accepts only when the resulting hard-decision vector lies within Hamming radius $t$ of the designated codeword. Thus, its acceptance region remains contained in the baseline bounded-distance region, preserving the corresponding combinatorial FPR behavior.
\textbf{FPR-Calibrated} preserves erasures as zero-LLR positions and applies likelihood-based scoring with statistical thresholds. This allows controlled acceptance beyond the bounded-distance region when the evidence for the designated codeword is strong, improving detection power while explicitly characterizing the FPR--TPR trade-off.
We evaluate CORE-BREW on open-source LLMs under clean generation, token-level attacks, and paraphrasing, comparing against strong ECC-based baselines~\cite{qu2025provably,chao2024watermarking,kim2026blockwisecodewordembeddingreliable}. Across settings, CORE-BREW improves low-FPR discrimination while maintaining comparable semantic quality, with Strict-Safe serving as a conservative alternative.


\paragraph{Contributions.}
Our main contributions are:
\begin{enumerate}
\item \textbf{CORE-BREW and channel shaping for principled LLRs.}
We propose CORE-BREW, which extends block-wise BREW with Constant Hit-Rate embedding to calibrate the watermark channel to a position-homogeneous reliability scale with target parameter $p^\star$. This enables closed-form token-level and bit-level LLR computation.
\item \textbf{Two-mode decoding that separates guarantees from detection power.}
We provide (i) a \textbf{Strict-Safe} decoder that preserves the designated-codeword bounded-distance acceptance region, and (ii) an \textbf{FPR-Calibrated} likelihood-based decoder that exploits soft evidence and explicitly characterizes the FPR--TPR trade-off.
\item \textbf{Entropy-aware erasure safeguards.}
We introduce practical safeguards that mitigate quality degradation in low-entropy contexts and integrate naturally into the LLR pipeline as zero-evidence positions, yielding a BSEC-style interpretation.
\item \textbf{Robustness and quality evaluation.}
We evaluate CORE-BREW across models, datasets, token-level attacks, and paraphrasing, showing improved low-FPR discrimination and analyzing the associated quality--robustness trade-off.
\end{enumerate}

\section{Background and Related Work}
\subsection{Watermarking for LLM-generated text}
Text watermarking methods for large language models are broadly divided into (i) \emph{zero-bit} schemes that test only for watermark presence and (ii) \emph{multi-bit} schemes that embed recoverable payloads. A canonical zero-bit approach is the keyed green--red partition method of Kirchenbauer et al.~\cite{kirchenbauer2023watermark}, which biases sampling toward a secret ``green'' subset and detects by statistical tests on green-token counts. Subsequent work studies robustness, distributional fidelity, and detectability limits of such partition-based signals~\cite{wu2024dipmark,zhao2024provable,takezawa2023necessary}. Post-hoc detectors such as DetectGPT provide complementary evidence but do not offer keyed provenance guarantees~\cite{mitchell2023detectgpt}.

\subsection{Multi-bit Watermarking with Error-Correcting Codes}
Multi-bit watermarking embeds recoverable metadata, such as user or policy identifiers, into generated text. Early designs, such as MPAC~\cite{yoo2024advancing}, demonstrate the feasibility of multi-bit payload embedding via direct token-level signaling, but their recovery reliability can degrade under editing and paraphrasing attacks. This motivates ECC-based approaches that introduce redundancy for robust payload recovery. Qu et al.~\cite{qu2025provably} study multi-bit watermarking with classical ECCs and a security-oriented evaluation protocol emphasizing robustness to edits, while Chao et al.~\cite{chao2024watermarking} adopt LDPC-style and window-based constructions to improve recovery on short texts.
A central challenge in ECC-based multi-bit watermarking is false-positive control. If a detector accepts any decoded codeword, or relies on nearest-codeword recovery without a designated target, watermark-free text may still be mapped to a valid payload, inflating the false positive rate (FPR). Recent block-wise designs address this issue with a \emph{designated-codeword} architecture, as in BREW~\cite{kim2026blockwisecodewordembeddingreliable}, where each block is associated with a pre-specified keyed codeword, and detection is successful only when sufficient blocks match their designated targets. This structure yields strong FPR control because unwatermarked text is unlikely to fall near the designated codewords.

\subsection{Soft information, channel modeling, and decoding}
In classical coding theory, soft-decision decoding improves reliability when calibrated log-likelihood ratios (LLRs) are available. For \emph{algebraic block codes} such as BCH---which offer well-understood minimum-distance properties and bounded-distance decoders---Chase-type and ordered-statistics algorithms provide standard practical realizations of soft decoding~\cite{chase1972,lin2004error,richardson2008modern}. Despite these gains, most ECC-based LLM watermarking methods still rely on hard decisions: each token is mapped to a single bit and decoded using bounded-distance, nearest-codeword, or any-codeword acceptance rules~\cite{qu2025provably,chao2024watermarking,kim2026blockwisecodewordembeddingreliable}. The key obstacle is that the induced watermark channel is highly context-dependent: under fixed logit biasing, the probability of sampling from the intended list varies across prompts and time steps, making LLRs ill-defined without calibration. These issues motivate explicit channel calibration. Our approach induces a position-homogeneous channel from token-level statistics, enabling principled LLR-based decoding and separating two detection perspectives: (i) a bounded-distance, acceptance-region-constrained mode aligned with classical designated-codeword FPR analyses, and (ii) an FPR-calibrated likelihood-test mode whose detection-FPR trade-off is directly quantified. We instantiate both modes on BCH codes (Section~\ref{sec:method}), which support bounded-distance decoding while remaining compatible with soft-decision algorithms.

\subsection{Attacks, paraphrasing, and evaluation protocols}
Robust watermark evaluation considers both token-level edits (e.g., synonym substitution, insertion, deletion) and higher-level paraphrasing or rewriting~\cite{Wieting2018,morris2020textattack,krishna2023paraphrasing}. Prior work shows that sufficiently strong paraphrasing can substantially degrade watermark signals, motivating evaluation across a range of attack strengths rather than at a single operating point~\cite{wolff2020attacking}. From a cryptographic perspective, pseudorandom code constructions highlight inherent trade-offs between robustness and indistinguishability~\cite{christ2024pseudorandom}. Together, this literature suggests that practical watermark systems should (i) analyze detection performance in terms of the FPR--TPR trade-off, (ii) characterize robustness under both token-level and semantic-level attacks, and (iii) clearly specify the acceptance criterion used to control false positives. Our evaluation follows these principles by explicitly analyzing the FPR--TPR trade-off.

\section{Problem Setup \& Baseline}
\label{sec:setup}

We adopt a hypothesis-testing formulation for watermark detection. 
Given a secret key $K$ shared between an embedder and a detector, the detector decides whether a text is watermarked ($H_1$) or unwatermarked ($H_0$), and performance is measured by the false positive rate (FPR) and the true positive rate (TPR). As a strong designated-codeword baseline, we consider BREW~\cite{kim2026blockwisecodewordembeddingreliable}, a recent block-wise multi-bit watermarking scheme that combines keyed token partitions with error-correcting codes (ECCs) and \emph{designated-codeword} verification. 
Generation is partitioned into fixed-length blocks, each associated with a pre-specified keyed codeword. 
Detection is accepted only when a block decodes to its designated codeword, yielding strong control of false positives, with FPR decreasing exponentially in the number of blocks.
These guarantees rely on a hard-decision acceptance rule: a block is accepted only if the observed hard-decision bit vector lies within the bounded-distance decoding region of the designated codeword. 
While this ensures robust FPR control, it discards token-level probability information and induces a context-dependent, uncalibrated channel. 
These limitations motivate our channel-calibrated embedding and soft-information decoding framework introduced in Section~\ref{sec:method}. 
Formal definitions and analysis of the baseline are provided in Appendix~\ref{app:baseline-details}.


\section{Proposed Method}
\label{sec:method}  

\begin{figure}[t]
    \centering
    \includegraphics[width=\linewidth]{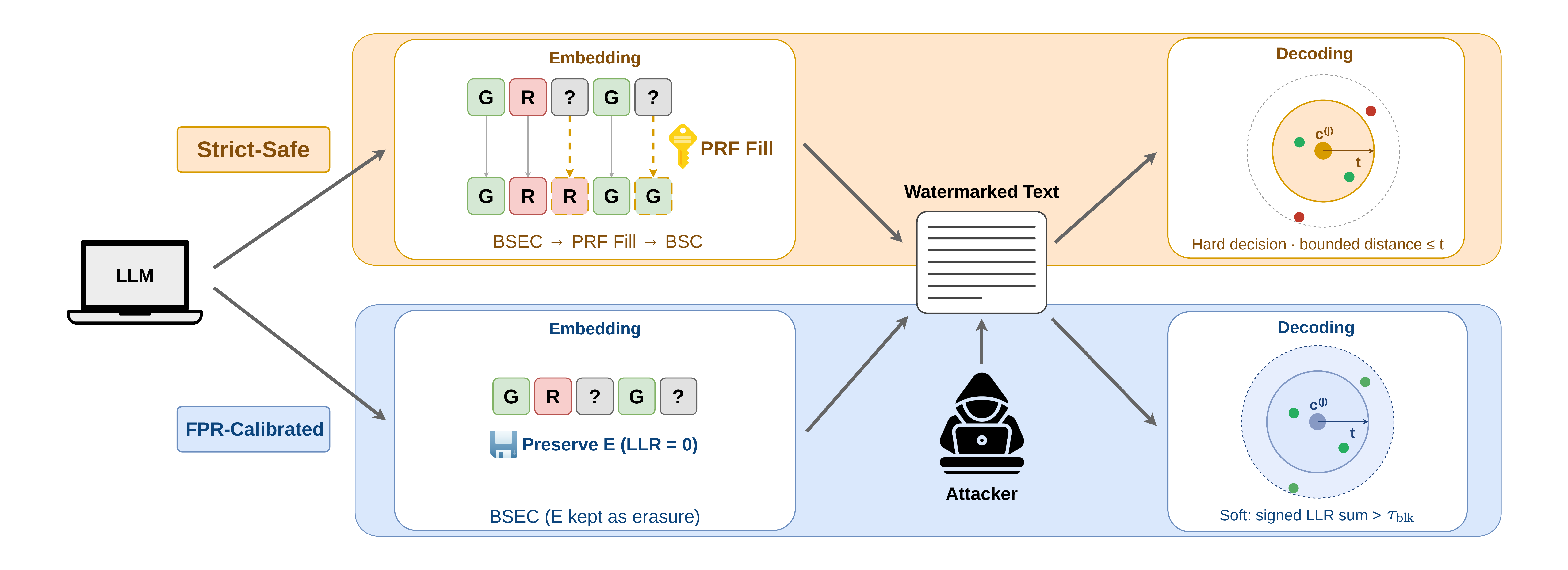}
    \caption{Overview of CORE-BREW. Constant Hit-Rate embedding provides calibrated reliability scores shared by two detection modes: Strict-Safe fills erasures with key-derived hard decisions for bounded-distance checking, while FPR-Calibrated uses score-thresholded list decoding with erasure-aware candidate evaluation.}
    \label{fig:overview}
\end{figure}

We propose \textbf{CORE-BREW}, a \textbf{CO}nstant-hit-\textbf{R}ate \textbf{E}mbedding extension of block-wise BREW~\cite{kim2026blockwisecodewordembeddingreliable} for robust multi-bit LLM watermarking. 
CORE-BREW keeps the block-wise designated-codeword structure of the baseline (Section~\ref{sec:setup}), while modifying the \emph{embedding} rule to calibrate the watermark channel and the \emph{decoding} rule to exploit calibrated log-likelihood evidence. 
The same calibrated embedding supports two complementary detection modes (Figure~\ref{fig:overview}): \textbf{Strict-Safe}, which fills erasures with PRF bits and applies bounded-distance decoding to preserve baseline-style FPR control, and \textbf{FPR-Calibrated}, which preserves erasures and uses likelihood-based decoding to improve detection power under a statistical acceptance rule. 
Section~\ref{sec:method_decode} details both modes.

\subsection{Constant Hit-Rate Soft Embedding}
\label{sec:method_embedding}  

At each generation step $t$, let $p_t(v)$ be the base model distribution over the vocabulary $v \in \mathcal{V}$ (after any standard sampling temperature, if used).
Given a block index $j$ and a designated codeword bit $z_t \in \{0,1\}$ for that position, we define the \emph{target list} $L_t := L_{z_t}$ from the fixed keyed vocabulary partition (Section~\ref{sec:setup}). 
Let $m_t := \sum_{v \in L_t} p_t(v)$ be the base model probability mass assigned to the target list.

\paragraph{Constant hit-rate calibration.}
We choose a single global hit-rate target $p^\star \in (1/2,1)$ and compute a
logit offset $\delta_t$ so that the \emph{post-bias} distribution assigns exactly
$p^\star$ total mass to the target list:
\begin{equation}
\delta_t \;=\; \log\frac{p^\star}{1-p^\star} \;-\; \log\frac{m_t}{1-m_t}.
\end{equation}
We then bias logits by adding $\delta_t$ to tokens in $L_t$:

\begin{equation}
\begin{aligned}
\tilde{\ell}_t(v)
&=
\begin{cases}
\ell_t(v)+\delta_t, & v \in L_t,\\
\ell_t(v), & v \notin L_t,
\end{cases}
\qquad
\tilde{p}_t(v)
&=
\frac{\exp(\tilde{\ell}_t(v))}
{\sum_{u\in\mathcal{V}} \exp(\tilde{\ell}_t(u))}.
\end{aligned}
\end{equation}
This choice makes $\sum_{v\in L_t} \tilde{p}_t(v)=p^\star$ (see Appendix~\ref{app:chr-proof}).
Unlike fixed logit biasing, Constant Hit-Rate explicitly fixes the target-list mass to $p^\star$, yielding a position-homogeneous reliability scale for LLR computation.

\paragraph{Entropy-aware erasures for quality safety.}
A practical issue is that when the base distribution is \emph{low-entropy} (e.g., near-deterministic next-token), forcing a fixed $p^\star$ can require arbitrarily large $|\delta_t|$, which can harm fluency and factuality (a form of degeneration under distributional perturbations~\cite{holtzman2020curious}). To prevent this, we introduce an \textbf{entropy-aware erasure safeguard}:

\begin{itemize}
  \item \textbf{Bias cap:} we first compute the raw constant hit-rate bias and clamp it as $\delta_t \leftarrow \mathrm{clip}(\delta_t, -\delta_{\max}, \delta_{\max})$. We then compute the post-bias target-list mass induced by the clamped bias.
  \item \textbf{Erasure condition (feasibility):} if, after clamping, the induced target-list mass is not safely above $1/2$ (i.e., the embedding cannot guarantee positive correlation), we \emph{skip embedding} at step $t$ and treat it as an erasure. Equivalently, we set $\delta_t = 0$ and mark this position as erased.
  \item \textbf{Erasure condition (entropy/peakness):} we also skip embedding if the base distribution is too peaked or too low-entropy, i.e., $\max_v p_t(v)\ge q_{\max}$ or $H_t < H_{\min}$.
\end{itemize}

In decoding, erased positions contribute \emph{zero} LLR (Section~\ref{sec:method_channel}), yielding a binary-symmetric channel with erasures. This mitigates quality degradation in low-entropy regions while preserving erasure-tolerant detection.

\smallskip
\noindent
\textbf{Detector access note.}
The erasure rule depends on $p_t(\cdot)$, and hence on quantities such as $m_t$, entropy, or peak probability. A detector that uses erasures therefore needs access to the generator or an equivalent scoring model to recompute these quantities from the observed prefix. This model-aware assumption makes CORE-BREW most suitable for provider-side verification, where such access is available. Token-only detection remains possible by disabling erasures, but it does not benefit from the entropy-aware safeguard and is not the focus of our evaluation.

\subsection{Channel Modeling and Per-Token LLRs}
\label{sec:method_channel}  

Define the observed watermark bit at step $t$ as $Y_t := f_K(s_t)\in\{0,1\}$, where $s_t$ is the sampled token and $f_K$ indicates whether $s_t \in L_1$ under the fixed keyed partition (Section~\ref{sec:setup}).
Under Constant Hit-Rate embedding (without erasure), the induced watermark bit channel is approximately position-homogeneous and can be modeled as a binary symmetric channel (BSC):
$
\Pr(Y_t = z_t) = p^\star, 
\Pr(Y_t \neq z_t) = 1-p^\star,
$
with a target hit probability that is independent of the local context. This justifies a principled per-token log-likelihood ratio (LLR):
\[
\Lambda_t(Y_t) \;=\; \log \frac{\Pr(Y_t \mid z_t=1)}{\Pr(Y_t \mid z_t=0)}
\;=\;
\begin{cases}
\;\;\log\frac{p^\star}{1-p^\star}, & Y_t = 1,\\[4pt]
-\log\frac{p^\star}{1-p^\star}, & Y_t = 0.
\end{cases}
\]
Under attacks, these LLRs should be interpreted as calibrated pre-attack reliability scores rather than exact post-attack likelihoods. Estimating attack-dependent channel parameters at detection time is an interesting direction for future work. When an erasure safeguard is triggered at step $t$, we set $\Lambda_t(Y_t)=0$, so that erased positions contribute no evidence.


\paragraph{Acceptance region and soft decoding.}
In the FPR-Calibrated mode, the decoder is not restricted to the raw hard-decision acceptance region. We use a bounded-distance BCH hard decoder as a subroutine inside a controlled list-decoding wrapper with score-based rejection (Section~\ref{sec:method_decode}). 
This can recover the designated codeword beyond the nominal radius $t$ of the raw hard decision in some cases, but the gains also arise from erasure-aware scoring and LLR-based candidate evaluation within the bounded-distance regime. False positives are controlled by (i) limiting the list size and shifts, (ii) requiring equality to the designated codeword, and (iii) requiring strong log-likelihood evidence; Section~\ref{sec:Theory} provides the analysis.

\subsection{From Token-Level to Bit-Level LLRs}
\label{sec:method_bitllr}  


Let the code length be $n$. For each block $j$ and candidate shift $s$, we construct a bit-level reliability vector $\Lambda^{(j,s)} \in \mathbb{R}^n$ by assigning shifted token positions to BCH bit positions. Let $u_b^{(j,s)}$ denote the token index assigned to bit position $b$ under the shifted candidate block. 
For each bit position, we set
\[
\Lambda^{(j,s)}_b =
\begin{cases}
\Lambda_{u_b^{(j,s)}}\!\left(Y_{u_b^{(j,s)}}\right), 
& \text{if } u_b^{(j,s)} \text{ is valid},\\
0, 
& \text{otherwise}.
\end{cases}
\]
Invalid positions are treated as erasures and contribute zero evidence. The shift $s$ is applied when constructing the candidate block representation, while the fixed keyed partition is used to map each observed token to its hard bit.

\subsection{Soft-Decision BCH Decoding with Window Shifting}
\label{sec:method_decode}  

Edits and paraphrases can introduce insertion/deletion noise that misaligns token positions. We therefore test candidate shifts $s \in [-s_{\max}, s_{\max}]$ for each block $j$. For a shift $s$, we compute a block-level alignment score for the designated codeword
$c^{(j)}\in\{0,1\}^n$:
\[
S^{(j)}(s) \;=\; \sum_{b=0}^{n-1} \left(2c^{(j)}[b]-1\right)\,\Lambda^{(j,s)}_b.
\]
Intuitively, $S^{(j)}(s)$ is the log-likelihood evidence for the designated codeword under the calibrated channel.


\paragraph{Safe list-based soft decoding.}
Given $\Lambda^{(j,s)}$, we run a \emph{safe} list-decoding wrapper that returns either
(i) an accepted designated-codeword match, or
(ii) a rejection symbol $\bot$ when no candidate both decodes to the designated codeword and provides sufficient evidence.
We implement this via a Chase-style list-decoding wrapper~\cite{chase1972,lin2004error}: it enumerates a small set of erasure-filled candidate hard-decision vectors, applies the bounded-distance BCH hard decoder to each candidate, and accepts a block only when the decoded codeword equals the designated codeword and its block score exceeds $\tau_{\mathrm{blk}}$.
This procedure can sometimes recover the designated codeword even when the raw hard decision is outside radius $t$. However, its gains also come from erasure-aware scoring and reliability-aware candidate evaluation within the bounded-distance regime, so using a bounded-distance decoder as a subroutine remains compatible with improved detection power.


\paragraph{Block and text-level decision.}
In Strict-Safe mode, the list-decoding step is replaced by a direct bounded-distance check between the erasure-filled hard-decision vector and the designated codeword.
For each block $j$, we accept a match if
$
\exists s \in [-s_{\max}, s_{\max}]
\;\;\text{s.t.}\;\;
\hat{c}^{(j)}_s = c^{(j)}.
$
Let $M_{\mathrm{match}}$ be the number of matched blocks among $M$. We decide watermarked if the match ratio exceeds $\theta$:
$
\frac{M_{\mathrm{match}}}{M} \;\ge\; \theta.
$
This decision rule is compatible with both token-only decoding (no erasures) and erasure-aware decoding (recommended).

\subsection{Design Choices and Practical Considerations}
\label{sec:method_design}  


\paragraph{Choosing $p^\star$.}
The target hit-rate $p^\star$ controls the reliability of the induced bit channel: larger values yield larger LLR magnitudes and stronger watermark evidence, but may require larger logit shifts when $m_t$ is far from the target. We treat $p^\star$ as a tunable parameter, analyzing detection reliability in Section~\ref{sec:experiments} and detailed hit-rate/text-quality trends in Appendices~\ref{app:hit-rate-sweep} and~\ref{app:text-quality-sweep}.

\paragraph{Low-entropy safeguards.}
The entropy-aware erasure mechanism prevents extreme biasing when the base model is near-deterministic. This preserves text quality and turns difficult positions into erasures (LLR $=0$), which modern soft decoders handle well~\cite{lin2004error,richardson2008modern}. In practice, we tune $\delta_{\max}$ and $H_{\min}$ (or $q_{\max}$) to trade off capacity vs. quality.


\paragraph{Why TPR can improve without violating FPR control.}
FPR-Calibrated improves TPR through reliability-aware list decoding and erasure-aware LLR scoring; beyond-radius recovery is possible but not the only source of gain. FPR remains controlled because we (i) limit the number of tested shifts and candidates, (ii) require equality to the designated codeword, and (iii) impose a likelihood threshold (and optional margin). Section~\ref{sec:Theory} derives the resulting bounds, and Section~\ref{sec:experiments} validates the resulting FPR--TPR trade-off empirically.


\paragraph{Implementation.}
We compute $m_t$ using masked softmax sums or stable log-sum-exp over the target list and its complement. Decoding scales with $(2s_{\max}+1)$ shifts and a small Chase-style candidate list; detailed thresholds and list-decoding parameters are given in Appendix~\ref{app:impl_repro_details}.


\section{Theoretical Analysis}
\label{sec:Theory}

Our analysis separates two notions often conflated in prior multi-bit watermarking work: 
(i) applying a bounded-distance acceptance rule to the detector's hard-decision representation, which yields designated-codeword-style false positive control under the stated analysis model, and 
(ii) improving detection power using soft information, which requires a statistical acceptance rule and may expand the acceptance region. 
Under Constant Hit-Rate embedding, non-erased positions approximately follow a position-homogeneous binary symmetric channel with target hit rate $p^\star$, enabling principled reliability scoring and LLR computation.
We therefore provide two detection modes:
\emph{Strict-Safe} detection, which applies bounded-distance verification to the PRF-filled hard-decision representation and follows the designated-codeword FPR analysis, and
\emph{FPR-Calibrated} detection, which uses calibrated block scores and score-thresholded list decoding to control the FPR--TPR trade-off.
Formal assumptions, false positive bounds, and detection power analyses are provided in Appendix~\ref{app:theory-proofs}. 
For window-shift FPR control under $H_0$, we employ Hunter's spanning-tree inequality~\cite{Hun76}, which tightens the loose union bound by incorporating adjacent-shift joint probabilities $\Pr(E_s \cap E_{s+1})$. 
These joint probabilities admit a closed-form expression determined by the codeword shift autocorrelation $a(1) = |\{i : c[i+1] = c[i]\}|$, derived in Appendix~\ref{app:theory-proofs}.

\section{Experiments}
\label{sec:experiments} 

Our experiments address four questions: (Q1) clean detection performance; (Q2) robustness under token-level attacks; (Q3) robustness under paraphrasing and semantic rewriting; and (Q4) sensitivity to key parameters. All experiments are conducted within the MarkLLM framework~\cite{pan-etal-2024-markllm}, using official baseline implementations and a unified pipeline for generation, attacks, and detection. Unless otherwise stated, we report mean performance over multiple random seeds, with 95\% confidence intervals for the main numerical results.

\subsection{Experimental Setup}

\paragraph{Models and datasets.}
We evaluate OPT-1.3B~\cite{zhang2022} and Mistral-7B~\cite{jiang2023mistral7b} on C4 and OpenGen, where OpenGen consists of 3,000 two-sentence prompts derived from WikiText-103. Unless otherwise specified, OPT-1.3B on C4 is used as the default setting, and we generate continuations of length $T=500$ tokens.

\paragraph{Schemes compared.}
We compare MPAC~\cite{yoo2024advancing}, Qu et al.~\cite{qu2025provably}, BREW~\cite{kim2026blockwisecodewordembeddingreliable}, and our two variants: CORE-BREW-Strict and CORE-BREW-Cal. All methods are evaluated under comparable payload budgets, BCH code settings, and block segmentation when applicable. BREW serves as the block-wise designated-codeword baseline, using bounded-distance BCH decoding with window-shifting detection. 
CORE-BREW-Strict uses Strict-Safe block acceptance to isolate the effect of channel shaping, whereas CORE-BREW-Cal additionally uses calibrated block scores and lightweight erasure-aware list decoding.



\paragraph{Code, detection parameters, and metrics.}
Unless otherwise stated, we use BCH parameters $(n,k,t)=(63,7,15)$~\cite{lin2004error}, window shifting with $s_{\max}=10$, and $p^\star=0.9$ as the default CORE-BREW setting after sweeping $p^\star \in \{0.6,0.7,0.8,0.9\}$. We report text-level TPR/FPR with 95\% confidence intervals for the main numerical results, distinct designated-codeword match rate for block-level recovery, and PPL, BLEU~\cite{Papineni2002}, and BERTScore~\cite{zhang2020bertscore} for text quality. Details on code-length selection, entropy-aware safeguards, list-decoding parameters, interval calculation, and text-quality results are provided in Appendices~\ref{app:code-length}, \ref{app:impl_repro_details}, and~\ref{app:text-quality-sweep}.

\subsection{Clean Detection Performance (Q1)}
We first evaluate detection performance on clean generations, with a particular focus on false positive behavior. Figure~\ref{fig:clean} reports ROC curves under the clean setting, comparing the observed FPR--TPR trade-offs of all methods. MPAC~\cite{yoo2024advancing} and Qu et al.~\cite{qu2025provably} exhibit weak discriminative power, with ROC curves close to the random-guessing diagonal and low AUC values. In contrast, BREW~\cite{kim2026blockwisecodewordembeddingreliable} and the proposed CORE-BREW variants achieve strong separation, maintaining high TPR in the low-FPR region. Among them, CORE-BREW-Cal achieves the best overall performance, while CORE-BREW-Strict provides more conservative detection behavior. Detailed numerical results and diagnostics are reported in Appendix~\ref{app:clean-breakdown}.

\begin{figure}[t]
    \centering
    \begin{minipage}[t]{0.49\linewidth}
        \centering
        \includegraphics[width=\linewidth]{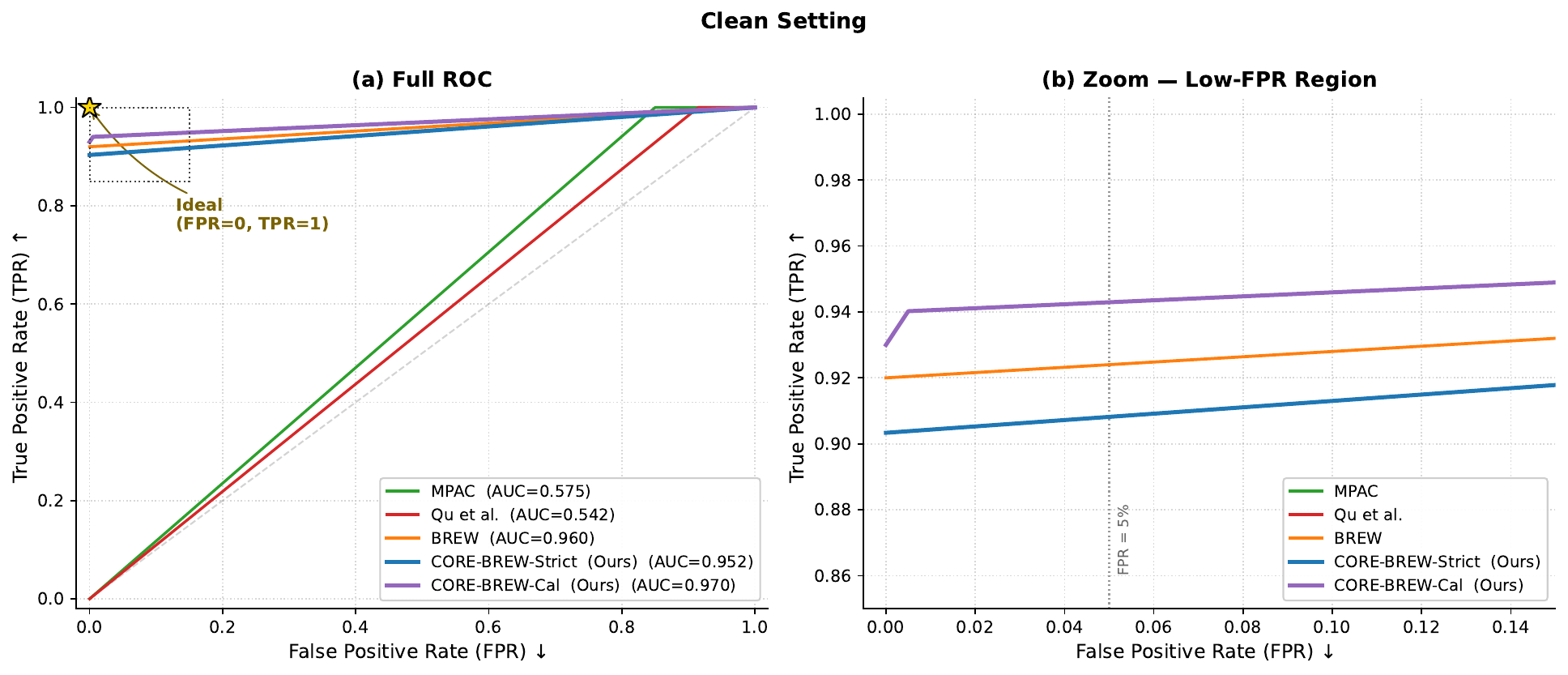}
        \caption{ROC curves under the clean setting. (a) Full ROC comparison. (b) Zoomed view of the low-FPR region. CORE-BREW variants and BREW~\cite{kim2026blockwisecodewordembeddingreliable} achieve strong discrimination, while MPAC~\cite{yoo2024advancing} and Qu et al.~\cite{qu2025provably} remain close to random guessing.}
        \label{fig:clean}
    \end{minipage}
    \hfill
    \begin{minipage}[t]{0.49\linewidth}
        \centering
        \includegraphics[width=\linewidth]{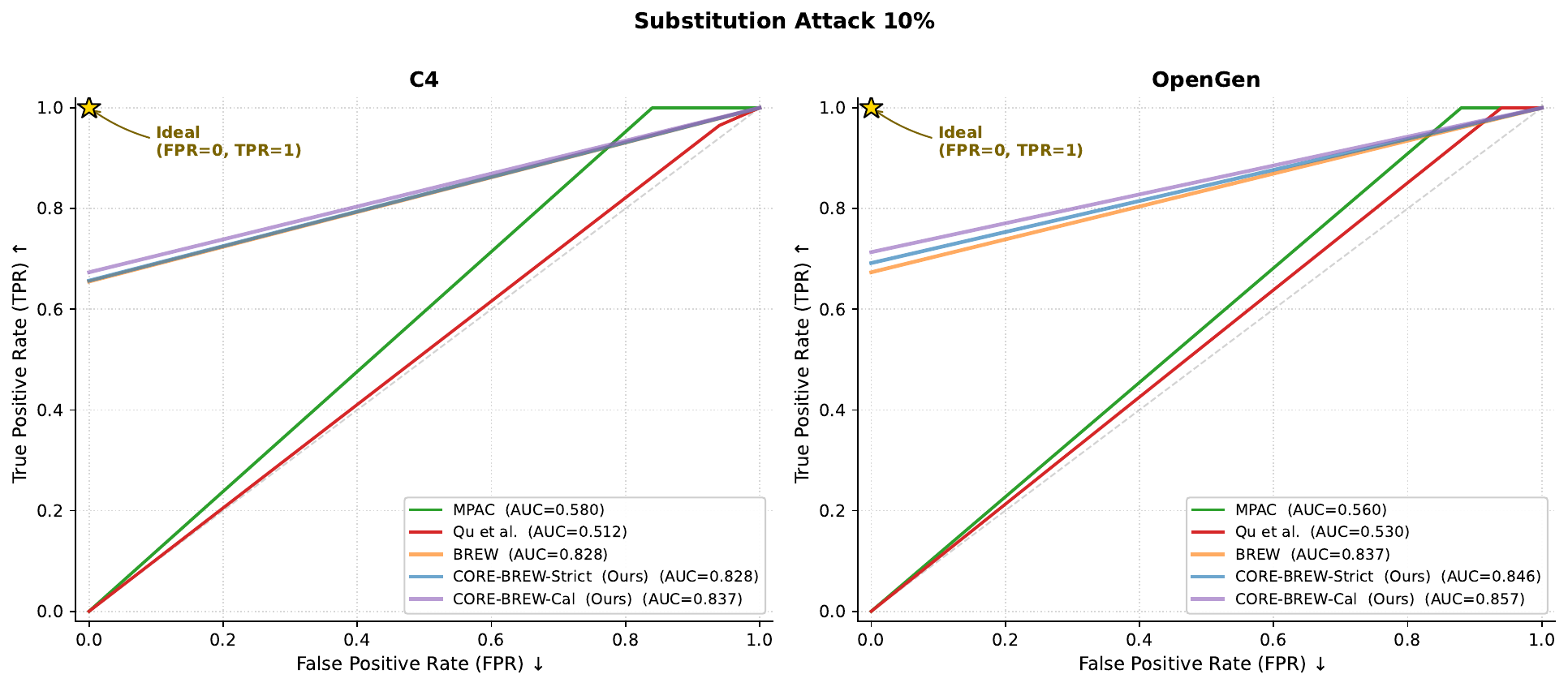}
        \caption{ROC curves under 10\% token-level substitution on C4 (left) and OpenGen (right). CORE-BREW and BREW~\cite{kim2026blockwisecodewordembeddingreliable} maintain strong discrimination, while MPAC~\cite{yoo2024advancing} and Qu et al.~\cite{qu2025provably} degrade toward near-random performance.}
        \label{fig:substitution}
    \end{minipage}
\end{figure}
\FloatBarrier

\subsection{Synthetic Token-Level Attacks (Q2)}
\label{sec:synthetic-attacks}


We study robustness to controlled token-level perturbations applied to both watermarked and unwatermarked texts at attack rate $\alpha=0.1$.
We consider three attack types:
\emph{token-preserving substitution}, which replaces tokens with synonyms while preserving tokenizer length using TextAttack-style transformations~\cite{morris2020textattack};
\emph{deletion-like edits}, which remove tokens independently with probability $\alpha$; and
\emph{insertion-like edits}, which insert additional tokens or replace tokens with longer variants.

For insertion and deletion attacks, we apply window-shifting detection with maximum offset $s_{\max}$ to compensate for alignment shifts. This allows the detector to search over local offsets around each block boundary, whereas substitution preserves token positions and does not require alignment correction.

\begin{figure}[t]
    \centering
    \begin{minipage}[t]{0.49\linewidth}
        \centering
        \includegraphics[width=\linewidth]{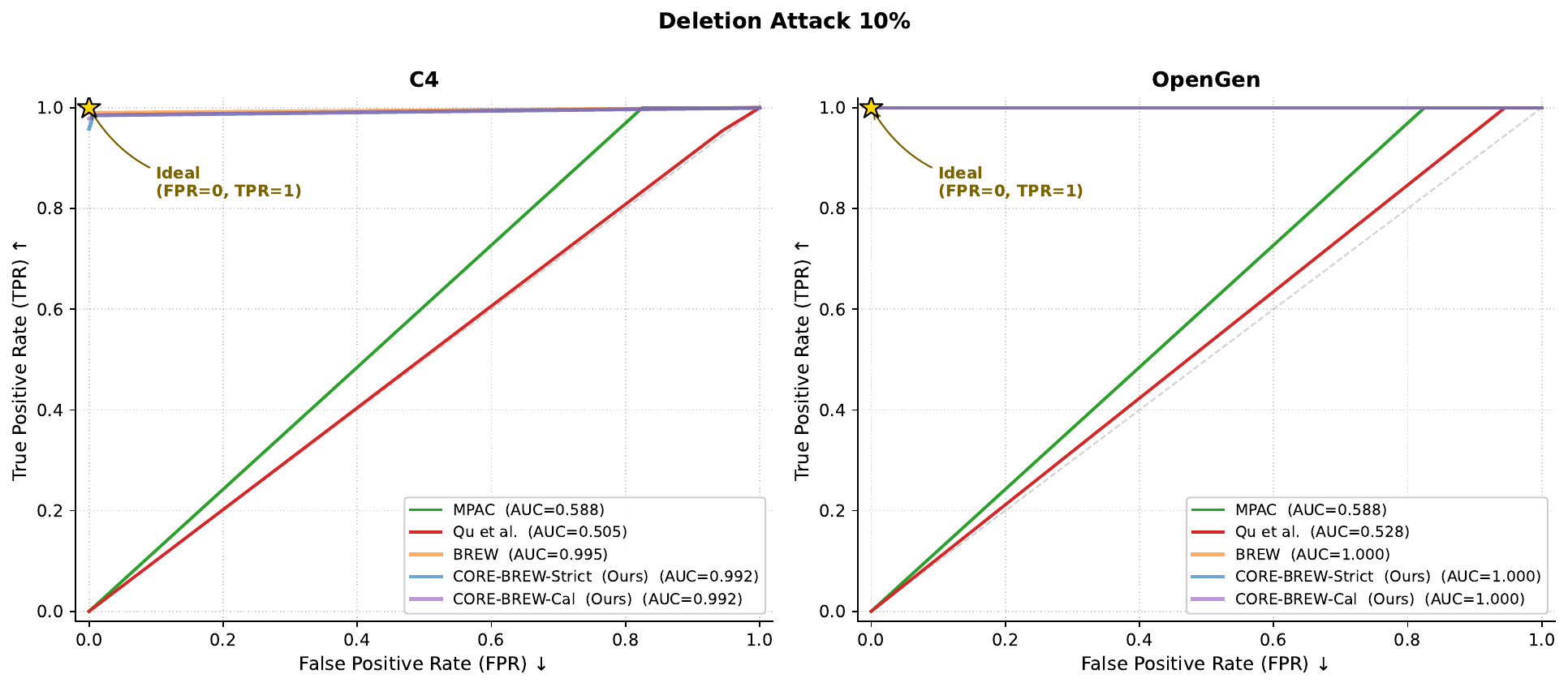}
        \caption{
        ROC curves under 10\% token-level deletion on C4 (left) and OpenGen (right). CORE-BREW and BREW~\cite{kim2026blockwisecodewordembeddingreliable} achieve near-perfect discrimination, while MPAC~\cite{yoo2024advancing} and Qu et al.~\cite{qu2025provably} remain close to random guessing.}
        \label{fig:deletion}
    \end{minipage}
    \hfill
    \begin{minipage}[t]{0.49\linewidth}
        \centering
        \includegraphics[width=\linewidth]{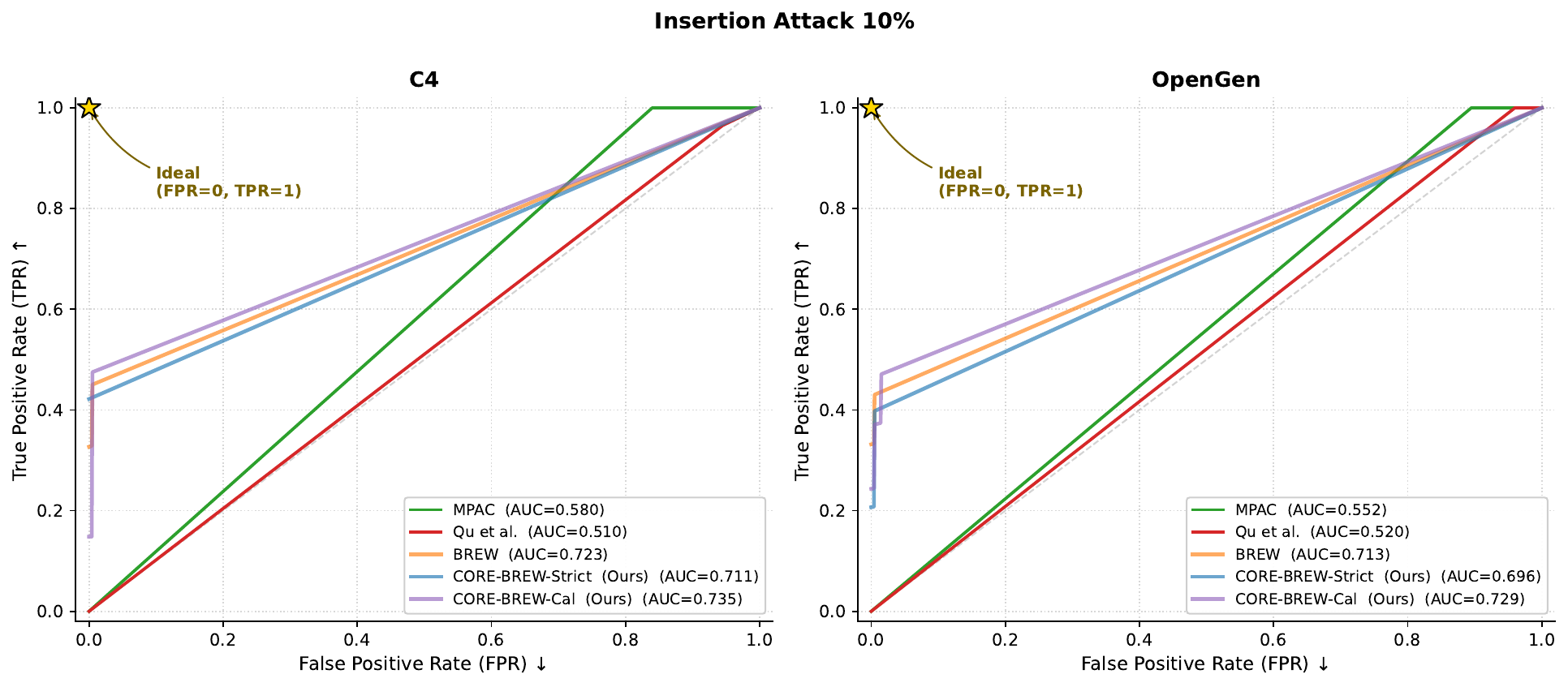}
        \caption{
        ROC curves under 10\% token-level insertion on C4 (left) and OpenGen (right). All methods degrade under insertion, but CORE-BREW and BREW~\cite{kim2026blockwisecodewordembeddingreliable} maintain stronger performance than MPAC~\cite{yoo2024advancing} and Qu et al.~\cite{qu2025provably}.}
        \label{fig:insertion}
    \end{minipage}
\end{figure}

\paragraph{Results.}
Figures~\ref{fig:substitution}, \ref{fig:deletion}, and \ref{fig:insertion} report ROC curves under substitution, deletion, and insertion attacks, respectively. Detailed numerical TPR/FPR values, including 95\% confidence intervals, are reported in Appendix~\ref{Appendix:Synthetic_Token_Level_Attacks}. MPAC~\cite{yoo2024advancing} and Qu et al.~\cite{qu2025provably} show weak discrimination across attacks, with ROC curves close to the random-guessing diagonal. In contrast, BREW~\cite{kim2026blockwisecodewordembeddingreliable} and the CORE-BREW variants remain effective in the low-FPR region.
Under substitution, CORE-BREW achieves the strongest low-FPR separation, outperforming BREW while MPAC and Qu et al. degrade toward random guessing. Under deletion, the detection task simplifies significantly, with CORE-BREW and BREW achieving near-perfect ROC behavior. Insertion is the most challenging setting: all methods degrade, but CORE-BREW maintains stronger low-FPR discrimination than MPAC and Qu et al.
Overall, these results show that CORE-BREW provides reliable detection under diverse token-level perturbations, with a consistent advantage in low-FPR regimes.

\subsection{Paraphrasing and Semantic Editing (Q3)}

\begin{wrapfigure}{r}{0.50\linewidth}
    \vspace{-1.2em}
    \centering
    \includegraphics[width=\linewidth]{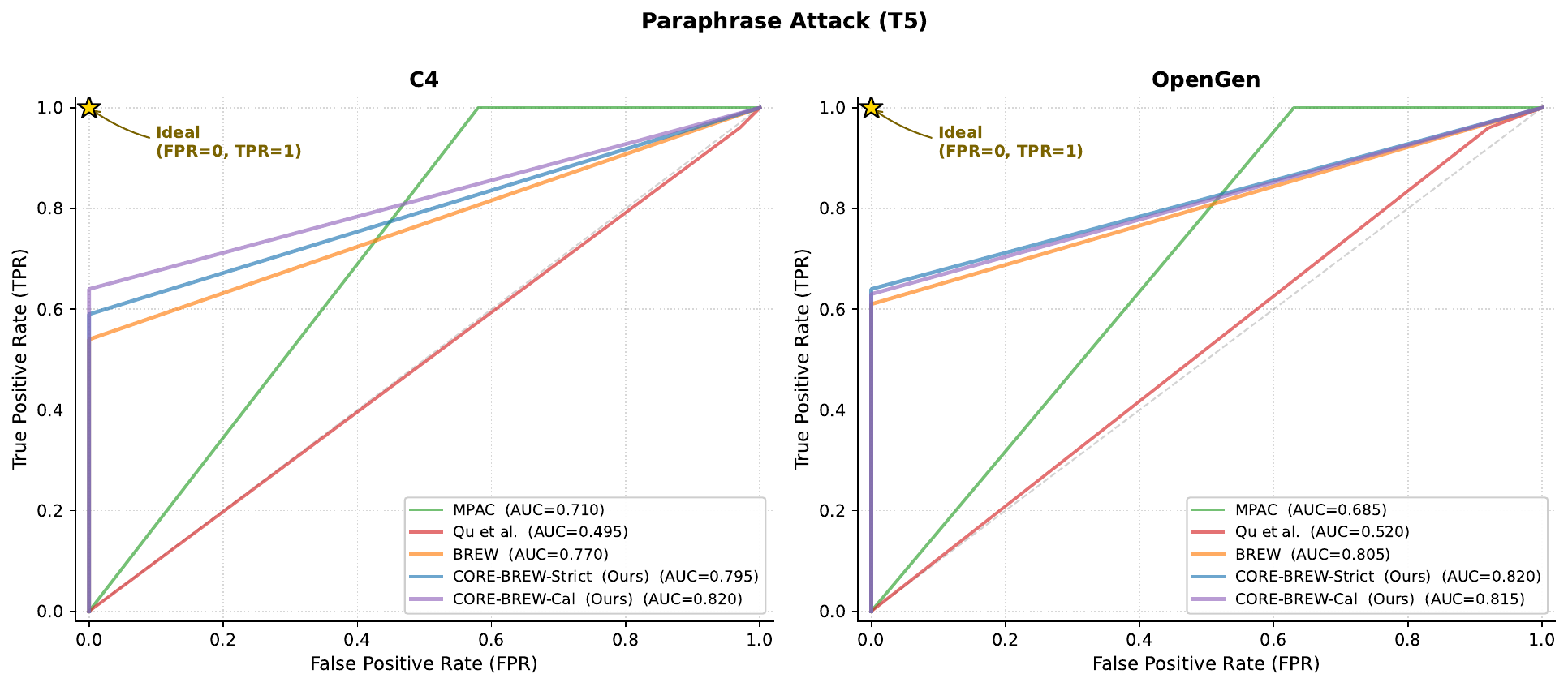}
    \caption{ ROC curves under T5-based paraphrasing on C4 and OpenGen. CORE-BREW maintains strong low-FPR discrimination, while MPAC and Qu et al.~\cite{qu2025provably} suffer from high FPR.}
    \label{fig:Paraphrasing_Attack}
    \vspace{-1.0em}
\end{wrapfigure}

We evaluate robustness against realistic paraphrasing attacks that aim to remove watermark signals while preserving semantic content. Paraphrases are generated using models distinct from the watermark generator, following prior work~\cite{Wieting2018,morris2020textattack,krishna2023paraphrasing,wolff2020attacking}. 
To ensure semantic fidelity with non-trivial surface variation, we retain paraphrases with BERTScore $\ge 0.5$ and BLEU $\ge 0.3$. Because paraphrasing and semantic filtering are substantially more expensive than token-level attacks, we use continuations of length $T=200$ tokens for this experiment. Detection performance is evaluated by comparing the resulting $(\mathrm{FPR}, \mathrm{TPR})$ operating points under a T5-based paraphraser~\cite{raffel2020exploring} on C4 and OpenGen (Figure~\ref{fig:Paraphrasing_Attack}), using representative parameter settings: $\delta=3$ for all baselines (MPAC~\cite{yoo2024advancing}, Qu et al.~\cite{qu2025provably}, and BREW~\cite{kim2026blockwisecodewordembeddingreliable}), and $p^\star=0.9$ for CORE-BREW. Detailed numerical results are provided in Appendix~\ref{Appendix:Paraphrasing_attacks}.

Figure~\ref{fig:Paraphrasing_Attack} shows that MPAC and Qu et al. achieve high TPR under paraphrasing but incur severely elevated FPR, with Qu et al. approaching near-random behavior. In contrast, BREW maintains strict FPR control but suffers from lower TPR due to its conservative bounded-distance acceptance. CORE-BREW achieves a more favorable trade-off: both variants maintain near-zero FPR while improving TPR over BREW. In particular, CORE-BREW-Cal shows the strongest overall performance, combining high detection sensitivity with strict false-positive control. These results highlight that explicit false-positive control is essential for reliable watermark detection under semantic rewriting.

\subsection{Targeted Ablations on Key Parameters (Q4)}
We conduct targeted ablations on key design components that directly affect robustness and reliability under token-level attacks. Specifically, we analyze sensitivity to the hit-rate $p^\star$, which controls the reliability of the induced bit channel; the window-shift budget $s_{\max}$, which governs tolerance to insertion and deletion noise; and the entropy-aware erasure mechanism, which acts as a distortion-control safeguard in low-entropy contexts. A detailed sweep of $p^\star$ is reported in Appendix~\ref{app:hit-rate-sweep}, the effect of erasures is analyzed in Appendix~\ref{app:erasure-ablation}, and sensitivity to $s_{\max}$ is analyzed in Appendix~\ref{app:Window_shift}.

\subsection{Limitations and Threat Model}
\label{sec:limitations}
We discuss practical deployment considerations and robustness limits under strong semantic rewriting, with further details provided in Appendix~\ref{app:limitations}. Model-aware detection incurs additional computational cost, but this overhead can be mitigated using lightweight proxy models that approximate entropy and erasure patterns. CORE-BREW operates at moderately higher perplexity, reflecting a likelihood--robustness trade-off rather than consistent semantic-quality degradation; detailed text-quality evaluations are provided in Appendix~\ref{app:text-quality-sweep}. Improving this trade-off, for example via adaptive $p^\star$, remains an important direction for future work.

\section{Conclusion}
\label{sec:Conclusion}

We present \textbf{CORE-BREW}, a block-wise multi-bit watermarking framework that makes soft-decision decoding principled by shaping the induced watermark channel. 
Constant Hit-Rate embedding yields calibrated LLRs for BCH-based decoding, while window shifting supports alignment-robust detection. 
The framework separates two detector modes: \textbf{Strict-Safe}, which preserves the bounded-distance designated-codeword acceptance region, and \textbf{FPR-Calibrated}, which uses likelihood-based scoring and lightweight list decoding to characterize the FPR--TPR trade-off. 
With entropy-aware distortion guards, CORE-BREW improves robustness under token-level attacks and paraphrasing while maintaining strong false-positive control across models and datasets.

\section*{Acknowledgments}
This work was partly supported by Institute of Information \& communications Technology Planning \& Evaluation (IITP) grant funded by the Korea government (MSIT) (RS-2024-00399401, Development of Quantum-Safe Infrastructure Migration and Quantum Security Verification Technologies, 50\%) and Institute of Information \& communications Technology Planning \& Evaluation (IITP) grant funded by the Korea government (MSIT) (RS-2024-00442085, Development of V2X Infra Security Core Technologies for Autonomous Vehicle Services, 50\%).

\bibliographystyle{plainnat}  
\bibliography{Soft_decision_paper}     

\appendix

\section{Baseline Full Details}
\label{app:baseline-details}

\subsection{Block-wise Designated-Codeword Watermarking}
This appendix summarizes the generic block-wise designated-codeword watermarking framework used by BREW~\cite{kim2026blockwisecodewordembeddingreliable} and related ECC-based watermarking schemes~\cite{qu2025provably,chao2024watermarking}. 
The framework combines keyed token partitions with error-correcting codes (ECCs), and accepts a block only when the decoded codeword matches its pre-specified designated codeword. 
While the framework generalizes to various block codes, in this work we use a binary BCH instantiation with parameters $(n,k,t)=(63,7,15)$, as discussed in Appendix~\ref{app:code-length}.

\paragraph{Scope of this appendix.}
This appendix describes the generic block-wise designated-codeword baseline framework used for comparison and for the standard false-positive analysis. 
Our evaluated CORE-BREW instantiation in Appendix~\ref{app:algorithms} uses a fixed keyed vocabulary partition shared across positions to improve reconstruction stability under insertion/deletion attacks. 
Accordingly, the false-positive discussion below should be read as a standard designated-codeword analysis model rather than an exact independence guarantee for every partition schedule.

\paragraph{Code and blocks.}
Let $C \subseteq \{0,1\}^n$ be a binary BCH code of length $n$, dimension $k$, and minimum Hamming distance $d_{\min}$. Let
\[
t = \left\lfloor \frac{d_{\min}-1}{2} \right\rfloor
\]
denote its unique-decoding radius under bounded-distance decoding~\cite{lin2004error}. 
Generation positions after the prompt are partitioned into blocks of length $n$. 
For a continuation position $r=t-N_p$, we define:
\[
j = \left\lfloor \frac{t-N_p}{n} \right\rfloor,\qquad b = (t-N_p) \bmod n,
\]
where $j \in \{0,1,\ldots\}$ is the block index and $b \in \{0,\ldots,n-1\}$ is the within-block offset.

\paragraph{Designated codewords.}
For each block $j$, the embedder and detector share a \emph{designated} codeword $c^{(j)} \in C$, derived from the payload and the key. 
A typical construction uses a per-block pseudorandom mask before encoding to balance codeword weights and to avoid fixed patterns~\cite{qu2025provably,kim2026blockwisecodewordembeddingreliable}. 
The detector later checks specifically whether the decoded block equals $c^{(j)}$; it does \emph{not} accept arbitrary decoded codewords.

\paragraph{Keyed token partitions.}
A common designated-codeword construction uses block-dependent keyed vocabulary partitions. 
For each block $j$, a keyed partition of the vocabulary is constructed as two disjoint lists $L^{(j)}_0$ and $L^{(j)}_1$ (``red/green'') using a cryptographic hash seeded by $(K,j)$:
\begin{align*}
L^{(j)}_0 &= \{v \in V: H(K \Vert j \Vert v)\bmod 2 = 0\},\\
L^{(j)}_1 &= \{v \in V: H(K \Vert j \Vert v)\bmod 2 = 1\},
\end{align*}
where $H(\cdot)$ is modeled as a pseudorandom mapping for analysis~\cite{Goldreich2001,kirchenbauer2023watermark,qu2025provably}. 
This induces a deterministic map $f_j:V\to\{0,1\}$ by membership, used both during embedding and detection. 
Appendix~\ref{app:algorithms} specifies the fixed-partition instantiation used in our evaluated CORE-BREW implementation.

\subsection{Baseline Embedding with Fixed Logit Bias (Hard-BCH)}
At each generation step $t$ in block $j$ and offset $b$, the next code bit to embed is
\[
z_t = c^{(j)}[b] \in \{0,1\}.
\]
Let $\ell_t(v)$ be the base model logit for token $v \in V$ at step $t$. 
For the block-dependent baseline notation in this appendix, the target list is indexed by the block-specific partition. 
The hard-decision baseline applies a \emph{fixed} logit bias $\delta>0$ to tokens in the target list $L^{(j)}_{z_t}$:
\[
\tilde{\ell}_t(v)=
\begin{cases}
\ell_t(v)+\delta, & v\in L^{(j)}_{z_t},\\
\ell_t(v), & \text{otherwise.}
\end{cases}
\]
This formulation follows the partition-based biasing paradigm~\cite{kirchenbauer2023watermark} adapted for the designated-codeword block-wise setting, consistent with recent approaches~\cite{qu2025provably,chao2024watermarking,kim2026blockwisecodewordembeddingreliable}. 
Importantly, with fixed $\delta$, the probability mass on the target list varies with context; thus the induced bit channel is generally non-stationary and not calibrated for LLR computation.

\subsection{Baseline Detection with Window Shifting and Bounded-Distance Decoding}
Given a candidate text $(s_0,\ldots,s_T)$ and the prompt length $N_p$, the detector forms a block-wise binary sequence by mapping each token through the keyed partition. 
For each position $t$ in block $j$ and offset $b$, it sets
\[
B^{(j)}[b] = f_j(s_t) \in \{0,1\},
\]
yielding $B^{(j)} \in \{0,1\}^n$.

\paragraph{Window shifting for insertions/deletions.}
Insertions and deletions can shift the observed token positions relative to the expected block boundaries. 
For each designated codeword block $j$, the detector searches over a bounded set of local offsets around the nominal block anchor $jn$. 
Let $s_{\max}\in\mathbb{N}$ and define
\[
\mathcal{S}=\{-s_{\max},\ldots,-1,0,1,\ldots,s_{\max}\},\qquad
S=|\mathcal{S}|=2s_{\max}+1.
\]
For each $s\in\mathcal{S}$, the detector constructs a candidate block vector $B^{(j,s)}\in\{0,1\}^n$ from the $n$ token positions starting at the shifted anchor $jn+s$.

\paragraph{Bounded-distance BCH decoding and designated-codeword verification.}
For each shift $s\in\mathcal{S}$, the detector applies a bounded-distance BCH decoder to the candidate block vector $B^{(j,s)}$. 
A block $j$ is counted as a match if there exists a shift $s$ such that the decoder outputs \emph{exactly} the designated codeword $c^{(j)}$. 
Over $M$ blocks, let $M_{\mathrm{match}}$ be the number of matched blocks and define the match ratio
\[
\rho = \frac{M_{\mathrm{match}}}{M}.
\]
The text is declared watermarked if $\rho \ge \theta$ for a threshold $\theta \in (0,1]$.

\subsection{Baseline False Positive Behavior and Guarantees}
The designated-codeword architecture provides a transparent FPR control mechanism under the standard analysis model. 
Under $H_0$ and pseudorandom partition assumptions, we model the detector-facing candidate block vector as approximately uniform over $\{0,1\}^n$~\cite{qu2025provably,kim2026blockwisecodewordembeddingreliable}. 
For a fixed alignment, the probability that such a random block vector lies within Hamming distance at most $t$ of a fixed designated codeword is
\[
p_0 = \frac{V_2(n,t)}{2^n},\qquad
V_2(n,t) = \sum_{i=0}^{t}\binom{n}{i},
\]
the volume of a Hamming ball in $\{0,1\}^n$~\cite{lin2004error}. 
With window shifting, a union bound gives
\[
p^{(\mathrm{shift})}_0 \le S\,p_0.
\]
For aggregate text-level bounds, we use an idealized approximate block-independence model. 
Under this approximation, the number of spurious matches under $H_0$ can be modeled as a binomial random variable with success probability $p^{(\mathrm{shift})}_0$. 
This gives the standard Chernoff-style intuition that, when $\theta > p^{(\mathrm{shift})}_0$, the aggregate FPR decays exponentially with the number of blocks $M$~\cite{lin2004error}. 
For the fixed-partition instantiation in Appendix~\ref{app:algorithms}, this should be interpreted as an idealized analysis tool complemented by empirical calibration and validation on held-out unwatermarked text, since repeated tokens and natural-language correlations can induce dependencies across positions and blocks.

\paragraph{Baseline limitations (motivating calibrated soft information).}
The baseline's guarantees are tied to its \emph{acceptance region}: it accepts a block only when the detector-facing hard-decision vector falls within the bounded-distance decoding neighborhood of the designated codeword. 
This yields strong combinatorial-style FPR control under the analysis model, but it also (i) discards token probability information and (ii) operates over a context-dependent, uncalibrated induced channel due to fixed bias $\delta$. 
These limitations motivate our channel-calibrated embedding and LLR-based detection in Appendix~\ref{app:algorithms}. 
We explicitly separate two detection perspectives: a strict mode that applies a bounded-distance rule to the detector's hard-decision representation, and a likelihood-calibrated mode that controls FPR statistically while improving robustness at low-FPR operating points.

\section{Algorithmic Details}
\label{app:algorithms} 

This appendix provides concrete algorithmic descriptions for the main components of our watermarking scheme.
It mirrors the block-wise designated-codeword framework used by recent robust multi-bit watermarks~\cite{qu2025provably,chao2024watermarking,kim2026blockwisecodewordembeddingreliable} while incorporating (i) Constant Hit-Rate channel calibration and
(ii) two explicit detector modes---\emph{Strict-Safe} and \emph{FPR-Calibrated}---to avoid acceptance-region ambiguity.
Both evaluated modes use a fixed keyed vocabulary partition and a key-derived alternating pair of designated BCH codewords. 
They differ in how candidate blocks are accepted: Strict-Safe uses a conservative bounded-distance check, whereas FPR-Calibrated uses score-thresholded list decoding.

\subsection{Keyed Partitions and Codeword Assignment}
\label{app:blockwise}
We recall the ingredients from Sections~\ref{sec:setup}--\ref{sec:method}. Let $\mathcal{V}$ denote the vocabulary and $K$ a secret key shared by the
embedder and detector. Let $C\subseteq\{0,1\}^n$ be a binary BCH code of length $n$ and dimension $k$.
We use a cryptographic hash function $H$ modeled as a random oracle~\cite{Goldreich2001}.


\paragraph{Fixed keyed partitions.}
In the implementation used for our experiments, we use a single keyed vocabulary split that is shared across all block positions. 
Given a secret key $K$, we define
\[
L_0 = \{v\in\mathcal{V}: H(K\|v)\bmod 2 = 0\},\qquad
L_1 = \{v\in\mathcal{V}: H(K\|v)\bmod 2 = 1\}.
\]
This follows the keyed green--red paradigm~\cite{kirchenbauer2023watermark}, but unlike position-dependent partition schedules, the token-to-bit mapping is independent of the bit offset. 
This fixed mapping makes shifted block reconstruction more stable under insertion and deletion attacks, because the same token is mapped to the same hard bit even when its position changes.


\paragraph{Designated codeword assignment.}
In our evaluated instantiation, we use a key-derived pair of nonzero BCH codewords, denoted by $c_0$ and $c_1$. 
The two codewords are selected so that they are well separated, and blocks alternate between them:
\[
c^{(j)} = c_{j\bmod 2}.
\]
This instantiation is sufficient for watermark detection experiments, where the detector verifies whether each reconstructed block matches its designated codeword. 
The more general payload-carrying variant can be obtained by replacing this alternating pair with block-dependent encoded payloads.

\subsection{Constant Hit-Rate Soft Embedding with Distortion-Aware Erasures}
Algorithm~\ref{alg:chr_embedding} summarizes online embedding for a single sequence. At each generation step, it determines the current block and code-bit offset, computes the base target-list mass $m_t$, and chooses a Constant Hit-Rate bias after applying distortion guards. If the clamped bias cannot make the target-list mass safely exceed $1/2$, if the base distribution is too peaked, or if the base entropy is below a minimum threshold, the step is skipped and marked as an erasure. This mitigates quality degradation and keeps non-erased steps close to the calibrated BSC model used in Section~\ref{sec:Theory}. The model-aware detector applies the same erasure rule through $\mathrm{EraseCheck}$ in Algorithm~\ref{alg:detect}, ensuring consistent zero-LLR treatment of skipped positions.

\begin{algorithm}[tb]
  \caption{Constant Hit-Rate block-wise embedding with distortion-aware erasures}
  \label{alg:chr_embedding}
  \begin{algorithmic}
    \STATE {\bfseries Input:}
    prompt tokens $(s_0,\ldots,s_{N_p-1})$, secret key $K$, BCH code $C$, block length $n$, target hit-rate $p^\star$, language model $\mathrm{LM}$, bias cap $\delta_{\max}$, peak threshold $q_{\max}$, entropy threshold $H_{\min}$, numerical clamp $\varepsilon$
    \STATE {\bfseries Output:}
    watermarked continuation tokens $(s_{N_p},\ldots,s_T)$

    \FOR{$t = 0$ {\bfseries to} $N_p-1$}
        \STATE Emit prompt token $s_t$ without modification
    \ENDFOR

    \FOR{$t = N_p$ {\bfseries to} $T$}
        \STATE $(j,b) \gets \mathrm{divmod}(t-N_p,n)$

        \IF{fixed partition not initialized}
            \STATE Construct the keyed fixed partition $L_0,L_1$ from $K$
        \ENDIF
        \IF{designated codeword pair not initialized}
            \STATE Derive two nonzero BCH codewords $c_0,c_1$ from $K$
        \ENDIF
        \STATE $c^{(j)} \gets c_{j\bmod 2}$

        \STATE $z_t \gets c^{(j)}[b]$
        \STATE $L_t \gets L_{z_t}$

        \STATE Query $\mathrm{LM}$ to obtain logits $\ell_t(v)$
        \STATE $p_t(v) \gets \mathrm{softmax}(\ell_t)(v)$
        \STATE $p_{\max} \gets \max_v p_t(v)$
        \STATE $H_t \gets -\sum_{v\in\mathcal{V}} p_t(v)\log p_t(v)$
        \STATE $m_t \gets \sum_{v\in L_t} p_t(v)$
        \STATE $m_t \gets \min(\max(m_t,\varepsilon),1-\varepsilon)$

        \STATE $\delta_t^{\mathrm{raw}} \gets 
        \log\!\frac{p^\star}{1-p^\star}
        -
        \log\!\frac{m_t}{1-m_t}$

        \STATE $\delta_t \gets \mathrm{clip}(\delta_t^{\mathrm{raw}},-\delta_{\max},\delta_{\max})$

        \STATE $\widetilde{m}_t \gets \dfrac{e^{\delta_t}m_t}{e^{\delta_t}m_t + (1-m_t)}$

        \IF{$p_{\max} \ge q_{\max}$ {\bfseries or} $H_t < H_{\min}$ {\bfseries or} $\widetilde{m}_t \le 1/2$}
            \STATE $\delta_t \gets 0$
            \STATE Mark position $t$ as erased
        \ENDIF

        \STATE Bias logits toward $L_t$ using $\delta_t$
        \STATE Sample $s_t$ from the biased distribution
    \ENDFOR
  \end{algorithmic}
\end{algorithm}

\paragraph{Relation to the hard-decision baseline.}
The hard-decision baseline is obtained by replacing the adaptive Constant Hit-Rate bias $\delta_t$ with a fixed constant bias $\delta$ and omitting the dependence on $m_t$. 
In this case, the target-list probability varies with context rather than being explicitly calibrated.

\FloatBarrier

\subsection{Bit Sequence Extraction and Block Reconstruction}
On the detection side, the first step is to map observed tokens back to bit sequences using the same fixed keyed partition derived from $K$.
Algorithm~\ref{alg:extract_bits} summarizes this mapping for the unshifted case. 
For each complete block, the detector reuses the fixed partition $(L_0,L_1)$ and maps each token to a hard bit according to its membership in $L_1$.
The resulting vector $B^{(j)}\in\{0,1\}^n$ is used as the hard-decision representation for bounded-distance decoding and corresponds to the hard decisions induced by the LLR signs in soft decoding. 
Shifted candidates used in Algorithm~\ref{alg:detect} are constructed by applying the same token-to-bit mapping and then evaluating shifted block representations within the allowed shift budget.

\begin{algorithm}[tb]
  \caption{Bit sequence extraction and block reconstruction}
  \label{alg:extract_bits}
  \begin{algorithmic}
    \STATE {\bfseries Input:}
    text tokens $(s_0,\ldots,s_T)$, prompt length $N_p$, secret key $K$, block length $n$
    \STATE {\bfseries Output:}
    complete block-wise bit sequences $B^{(0)},\ldots,B^{(M-1)} \in \{0,1\}^n$

    \STATE $U \gets T - N_p + 1$ \COMMENT{number of post-prompt positions}
    \STATE $M \gets \left\lfloor U/n \right\rfloor$ \COMMENT{number of complete blocks}
    \IF{$M \le 0$}
        \STATE \textbf{return} $[\,]$
    \ENDIF

    \STATE Construct the fixed keyed partition $(L_0,L_1)$ from $K$
    \FOR{$j = 0$ {\bfseries to} $M-1$}
        \STATE Initialize $B^{(j)}$ as an array of length $n$
    
        \FOR{$b = 0$ {\bfseries to} $n-1$}
            \STATE $t \gets N_p + jn + b$
            \STATE $B^{(j)}[b] \gets \mathbb{I}[s_t \in L_1]$
        \ENDFOR
    \ENDFOR

    \STATE \textbf{return} $(B^{(0)},\ldots,B^{(M-1)})$
  \end{algorithmic}
\end{algorithm}

\FloatBarrier

\subsection{LLR Computation and Window-Shifting Detection (Strict-Safe vs.\ FPR-Calibrated)}
Algorithm~\ref{alg:detect} summarizes detection with window shifting and LLR-based scoring. On non-erased positions, the calibrated hit-rate model uses the constant reliability magnitude 
$\lambda=\log\!\left(\frac{p^\star}{1-p^\star}\right)$. We support two detector modes:

\begin{algorithm}[tb]
  \caption{Window-Shifting Detection with Fixed Partition and Alternating Codeword Pair}
  \label{alg:detect}
  \begin{algorithmic}[1]
    \STATE \textbf{Input:} tokens $(s_0,\ldots,s_T)$, prompt length $N_p$, key $K$, BCH code $C$ with $\mathrm{HardDecode}$, decoding radius $t_{\mathrm{BCH}}$, block length $n$, max shift $s_{\max}$, text threshold $\theta$, hit-rate $p^\star$, mode $\in\{\texttt{strict},\texttt{cal}\}$
    \STATE \textbf{Parameters:} block threshold $\tau_{\mathrm{blk}}$, erasure parameters $\delta_{\max},q_{\max},H_{\min},\varepsilon$, optional base model $\mathrm{LM}$, list-decoding parameters $w_{\max},L_{\max},\ell$
    \STATE \textbf{Output:} \texttt{watermarked} or \texttt{unwatermarked}

    \STATE Construct the fixed keyed partition $(L_0,L_1)$ from $K$
    \STATE Derive a key-dependent pair of nonzero BCH codewords $(c_0,c_1)$
    \STATE $\lambda \gets \log\!\left(\frac{p^\star}{1-p^\star}\right)$,\quad $M_{\mathrm{match}} \gets 0$
    \STATE Determine the number of complete candidate blocks $M$

    \FOR{$j=0$ {\bfseries to} $M-1$}
        \STATE $c^{(j)} \gets c_{j \bmod 2}$
        \STATE $\mathrm{matched} \gets \texttt{false}$

        \FOR{each shift $s \in \{-s_{\max},\ldots,s_{\max}\}$}
            \STATE Construct a shifted candidate block representation for shift $s$
            \STATE Initialize $\Lambda \gets \mathbf{0}\in\mathbb{R}^n$

            \FOR{$b=0$ {\bfseries to} $n-1$}
                \STATE $u \gets u_b$
                \IF{$u < N_p$ {\bfseries or} $u > T$}
                    \STATE \textbf{continue} \COMMENT{invalid position is treated as an erasure}
                \ENDIF

                \IF{model-aware erasures enabled}
                    \STATE $\mathrm{erased} \gets \mathrm{EraseCheck}(s_{<u},L_{c^{(j)}[b]},\mathrm{LM},p^\star,\delta_{\max},q_{\max},H_{\min},\varepsilon)$
                    \IF{$\mathrm{erased}$}
                        \STATE \textbf{continue} \COMMENT{keep $\Lambda_b = 0$}
                    \ENDIF
                \ENDIF

                \STATE $\Lambda_b \gets
                \lambda\!\left(\mathbb{I}[s_u \in L_1] - \mathbb{I}[s_u \in L_0]\right)$
            \ENDFOR

            \STATE $S^{(j)}(s) \gets \sum_{b=0}^{n-1} (2c^{(j)}[b]-1)\Lambda_b$

            \IF{$\mathrm{mode}=\texttt{cal}$ {\bfseries and} $S^{(j)}(s) < \tau_{\mathrm{blk}}$}
                \STATE \textbf{continue}
            \ENDIF

            \IF{$\mathrm{mode}=\texttt{strict}$}
                \STATE Form $\widetilde{B}^{(j,s)}$ from the hard signs of $\Lambda$, filling erasures $(\Lambda_b=0)$ with $\mathrm{PRF}(K,j,b)\bmod 2$
                \IF{$d_H(\widetilde{B}^{(j,s)}, c^{(j)}) \le t_{\mathrm{BCH}}$}
                    \STATE $\mathrm{matched} \gets \texttt{true}$
                    \STATE \textbf{break}
                \ENDIF
            \ELSE
                \STATE $\hat{c} \gets \mathrm{ChaseListDecode}(\Lambda,C,\mathrm{HardDecode},t_{\mathrm{BCH}},L_{\max},\ell)$
                \IF{$\hat{c} = c^{(j)}$}
                    \STATE $\mathrm{matched} \gets \texttt{true}$
                    \STATE \textbf{break}
                \ENDIF
            \ENDIF
        \ENDFOR

        \IF{$\mathrm{matched}$}
            \STATE $M_{\mathrm{match}} \gets M_{\mathrm{match}} + 1$
        \ENDIF
    \ENDFOR

    \STATE \textbf{output} \texttt{watermarked} iff $\frac{M_{\mathrm{match}}}{M} \ge \theta$
  \end{algorithmic}
\end{algorithm}

\begin{itemize}
\item \textbf{Strict-Safe:} erasures are filled with key-derived PRF bits, and a block is accepted only if the filled hard-decision vector lies within Hamming distance $t_{\mathrm{BCH}}$ of its designated codeword.
\item \textbf{FPR-Calibrated:} erasures remain zero-LLR positions, and blocks are evaluated using reliability-aware list decoding with score-based rejection. 
This improves detection power through erasure-aware LLR scoring and candidate evaluation, while explicitly characterizing the FPR--TPR trade-off.
\end{itemize}


Here, $\mathrm{EraseCheck}$ uses the same distortion-aware erasure rule as Algorithm~\ref{alg:chr_embedding}, returning true if the base distribution is too peaked ($p_{\max}\ge q_{\max}$), if the base entropy is below $H_{\min}$, or if the clipped bias cannot make the target-list mass safely exceed $1/2$.

\subsection{Decoding Wrapper (Strict-Safe vs.\ FPR-Calibrated)}
\label{app:decoding-wrapper}
The FPR-Calibrated mode uses a score-thresholded list-decoding wrapper: it enumerates a small set of erasure-filled candidates, runs a bounded-distance decoder on each candidate, and accepts a block only when the decoded codeword matches the designated codeword and the block score exceeds the calibrated threshold $\tau_{\mathrm{blk}}$~\cite{chase1972,lin2004error,richardson2008modern}.
Unlike Strict-Safe mode, the calibrated wrapper does \emph{not} claim that the acceptance region is contained in the Hamming ball around the raw hard decision; instead, it uses an explicit score threshold and a small search budget as the primary safety control.

\FloatBarrier
\section{Theoretical Proofs}
\label{app:theory-proofs}
This appendix provides detailed proofs of the main analytical claims in Section~\ref{sec:Theory}. We first prove the Constant Hit-Rate property of our embedding rule, then analyze false positives under the null hypothesis in \emph{Strict-Safe} mode, and finally provide score-based tail bounds that underlie the \emph{FPR-Calibrated} mode. We conclude with standard concentration results that justify finite-sample estimation of empirical detection rates used in Section~\ref{sec:experiments}.

\subsection{Proof of Constant Hit-Rate Embedding}
\label{app:chr-proof}
We restate and prove the constant hit-rate property used in Sections~\ref{sec:method}--\ref{sec:Theory}.

\begin{proposition}[Constant hit-rate property]
Let $p^\star\in(0,1)$ be a fixed target hit-rate and $L_t\subseteq\mathcal{V}$ be a target token list at position $t$.
Let $\ell_t(v)$ denote the base model logit for token $v\in\mathcal{V}$ and let
\[
p_t(v) = \frac{\exp(\ell_t(v))}{\sum_{u\in\mathcal{V}}\exp(\ell_t(u))}
\]
be the corresponding softmax distribution. Define
\[
m_t = \sum_{v\in L_t} p_t(v),
\]
and choose the bias
\[
\delta_t = \log\!\left(\frac{p^\star}{1-p^\star}\right) - \log\!\left(\frac{m_t}{1-m_t}\right).
\]
Define biased logits by $\tilde{\ell}_t(v)=\ell_t(v)+\delta_t$ for $v\in L_t$ and $\tilde{\ell}_t(v)=\ell_t(v)$ otherwise,
and let $\tilde{p}_t=\mathrm{softmax}(\tilde{\ell}_t)$. Then the biased distribution satisfies
\[
\sum_{v\in L_t}\tilde{p}_t(v)=p^\star.
\]
\end{proposition}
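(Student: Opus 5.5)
The plan is a direct computation: write the biased target-list mass in terms of the base mass $m_t$ and the bias, then substitute the chosen $\delta_t$. Let $Z_t=\sum_{u\in\mathcal{V}}\exp(\ell_t(u))$ be the base partition function. Split it as $Z_t = A_t + B_t$, where $A_t=\sum_{v\in L_t}\exp(\ell_t(v)) = m_t Z_t$ and $B_t=\sum_{v\notin L_t}\exp(\ell_t(v))=(1-m_t)Z_t$. Since the bias adds the same constant $\delta_t$ to every logit in $L_t$ and leaves the rest unchanged, the biased numerator over $L_t$ is $e^{\delta_t}A_t$. The biased partition function is $e^{\delta_t}A_t+B_t$.

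Hence
\[
\sum_{v\in L_t}\tilde p_t(v)=\frac{e^{\delta_t}m_t}{e^{\delta_t}m_t+(1-m_t)},
\]
after dividing numerator and denominator by $Z_t$. Equivalently, the post-bias log-odds of $L_t$ equal $\delta_t+\log\frac{m_t}{1-m_t}$; in other words, logit biasing acts as a pure shift on the log-odds of the list mass.

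Substituting the prescribed $\delta_t$ gives $e^{\delta_t}=\frac{p^\star}{1-p^\star}\cdot\frac{1-m_t}{m_t}$. Plugging this in, the numerator becomes $\frac{p^\star}{1-p^\star}(1-m_t)$ and the denominator becomes $(1-m_t)\bigl(\frac{p^\star}{1-p^\star}+1\bigr)=\frac{1-m_t}{1-p^\star}$. Their ratio is $p^\star$, which proves the claim.

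The only real obstacle is well-definedness. We need $m_t\in(0,1)$ so that $\log\frac{m_t}{1-m_t}$ is finite and the division by $m_t$ and $1-m_t$ is legitimate. Since softmax probabilities are strictly positive, this holds whenever $L_t$ is a nonempty proper subset of $\mathcal{V}$. I would state that as a standing assumption; in the implementation it is also enforced by the $\varepsilon$-clamp. With $p^\star\in(0,1)$, the bias $\delta_t$ is then a finite real number and every step above is valid. I would also remark that the identity concerns the unclipped bias only. Under the cap $\delta_{\max}$, the same formula gives the induced mass $\widetilde m_t$ used in Algorithm~\ref{alg:chr_embedding}.
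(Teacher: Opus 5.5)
Your proof is correct and follows the paper's argument essentially verbatim: split the partition function into the target-list part $A_t$ and its complement $B_t$, write the biased mass as $e^{\delta_t}A_t/(e^{\delta_t}A_t+B_t)$, and substitute $e^{\delta_t}=\frac{p^\star}{1-p^\star}\cdot\frac{1-m_t}{m_t}$ (the paper writes this as $\frac{p^\star}{1-p^\star}\cdot\frac{B_t}{A_t}$). Your added well-definedness condition is useful and left implicit in the paper: $m_t\in(0,1)$ holds exactly when $L_t$ is a nonempty proper subset of $\mathcal{V}$.
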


\begin{proof}
Let
\[
A_t=\sum_{v\in L_t}\exp(\ell_t(v)),\qquad B_t=\sum_{v\notin L_t}\exp(\ell_t(v)).
\]
Then $m_t=A_t/(A_t+B_t)$ and $(1-m_t)=B_t/(A_t+B_t)$, so $A_t/B_t=m_t/(1-m_t)$.
Under the bias, the total mass on $L_t$ becomes
\[
\sum_{v\in L_t}\tilde{p}_t(v)
=\frac{\sum_{v\in L_t}\exp(\ell_t(v)+\delta_t)}{\sum_{u\in\mathcal{V}}\exp(\tilde{\ell}_t(u))}
=\frac{e^{\delta_t}A_t}{e^{\delta_t}A_t+B_t}.
\]
Substitute $e^{\delta_t}=\frac{p^\star}{1-p^\star}\cdot \frac{1-m_t}{m_t}
=\frac{p^\star}{1-p^\star}\cdot \frac{B_t}{A_t}$ to obtain
\[
\frac{e^{\delta_t}A_t}{e^{\delta_t}A_t+B_t}
=\frac{\frac{p^\star}{1-p^\star}B_t}{\frac{p^\star}{1-p^\star}B_t+B_t}
=\frac{\frac{p^\star}{1-p^\star}}{\frac{p^\star}{1-p^\star}+1}
=p^\star.
\]
\end{proof}

\paragraph{Remark (distortion guards and erasures).}
When distortion guards are enabled (bias cap or skip), the equality above may not hold at every position.
In that case we treat skipped positions as erasures (LLR set to $0$) and retain boundedness
of the per-position contribution used in the concentration arguments below.




\subsection{Strict-Safe: False Positive Bounds under $H_0$}
We derive the false positive bounds for the \emph{Strict-Safe} detector used in Section~\ref{sec:Theory}. 
Under the null hypothesis $H_0$, the text is unwatermarked and independent of the secret key $K$. 
Our analysis separates two levels of modeling. 
First, for a single block, we use the standard designated-codeword approximation that the hard-decision block induced by the keyed partition behaves approximately as a uniform element of $\{0,1\}^n$ under $H_0$. 
Second, for text-level aggregation, we use an idealized block-level independence approximation for analytical intuition. 
This aggregate independence argument does not require exact bit-wise independence within a block.

\begin{proposition}[Single-block false positive probability, Strict-Safe]
\label{prop:single-block-fpr}
Let $C\subseteq\{0,1\}^n$ be a binary BCH code with minimum distance $d_{\min}$ and unique-decoding radius
$t=\lfloor(d_{\min}-1)/2\rfloor$~\cite{lin2004error}. 
Fix a designated codeword $c\in C$. 
Under the standard uniform-block approximation for $H_0$, the single-block false match probability is
\[
p_0 
= \Pr\!\bigl(\mathrm{dist}_H(B,c)\le t\bigr)
= \frac{V_2(n,t)}{2^n},
\qquad
V_2(n,t)=\sum_{i=0}^{t}\binom{n}{i}.
\]
\end{proposition}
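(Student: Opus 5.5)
Under the uniform-block approximation for $H_0$, $B$ is uniformly distributed on $\{0,1\}^n$, so the false match probability is simply a counting ratio:
\[
\Pr\!\bigl(\mathrm{dist}_H(B,c)\le t\bigr)=\frac{|\{x\in\{0,1\}^n:\mathrm{dist}_H(x,c)\le t\}|}{2^n}.
\]
It therefore suffices to show that the Hamming ball of radius $t$ around $c$ contains exactly $V_2(n,t)$ vectors. In the Strict-Safe mode the match event is defined directly as a Hamming-ball condition, with no decoder in the loop. Erasures are first filled with $\mathrm{PRF}(K,j,b)$ bits, and we then check $d_H(\widetilde{B}^{(j,s)},c^{(j)})\le t_{\mathrm{BCH}}$. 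So this is exactly the event in the statement, provided the filled vector $\widetilde{B}$ is again uniform.

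We count the ball by translation. The map $x\mapsto x\oplus c$ is a bijection of $\{0,1\}^n$ that preserves Hamming distance, and it sends the ball around $c$ onto the ball around $0$. That ball is the set of vectors of weight at most $t$, and there are exactly $\binom{n}{i}$ vectors of weight $i$. Summing over $i=0,\dots,t$ gives $V_2(n,t)$. Dividing by $2^n$ yields the formula for $p_0$.

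Two side remarks tie this to the bounded-distance decoder. First, since $t=\lfloor (d_{\min}-1)/2\rfloor$, the balls of radius $t$ around distinct codewords are disjoint, by the triangle inequality. Hence a bounded-distance decoder outputs $c$ if and only if $B$ lies in this ball, so the BREW-style decoder formulation and the direct distance check have the same acceptance region. Second, the specific choice of $c\in C$ does not matter, because translation invariance makes $p_0$ independent of $c$.

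The one step needing care is justifying that the PRF-filled vector $\widetilde{B}$ is still uniform under $H_0$. I would argue this by modeling the PRF outputs as independent uniform bits, independent of the text. Then, conditional on the erasure pattern, the erased coordinates are uniform and independent of the others. The non-erased coordinates are uniform by the keyed-partition approximation. Hence $\widetilde{B}$ is uniform on $\{0,1\}^n$, and the counting argument above applies unchanged.
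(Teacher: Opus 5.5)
Your proof is correct and is the paper's argument: under the uniform-block approximation the false-match probability is the size of the radius-$t$ Hamming ball divided by $2^n$. The paper simply asserts the ball size, while you verify it by translating to the ball around $0$ and counting by weight. Your final paragraph is not needed, since the paper just posits that the detector-facing (PRF-filled) block is uniform, and the independence step there is only heuristic. $\mathrm{EraseCheck}$ depends on the base-model mass on the designated list $L_{c^{(j)}[b]}$, so for text resembling base-model output, non-erasure correlates with the token landing in that list, and the non-erased bits need not be uniform given the erasure pattern.
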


\begin{proof}
Under the uniform-block approximation, $B$ is treated as uniformly distributed over $\{0,1\}^n$. 
The event $\mathrm{dist}_H(B,c)\le t$ holds iff $B$ lies in the Hamming ball of radius $t$ around $c$.
That ball contains exactly $V_2(n,t)$ binary strings. 
Therefore, the probability is $V_2(n,t)/2^n$.
\end{proof}


\paragraph{Text-level aggregation model.}
For text-level FPR analysis, we use an idealized approximation in which block-level spurious match indicators are treated as approximately independent Bernoulli variables. 
This approximation is standard in designated-codeword analyses and is useful for understanding how text-level FPR scales with the number of complete blocks. 
However, for the fixed-partition instantiation in Appendix~\ref{app:algorithms}, the approximation is not an exact consequence of independently seeded block partitions. 
The same vocabulary partition is reused across blocks, so repeated tokens and natural-language correlations can introduce dependencies. 
Accordingly, we use the aggregate Chernoff bound below as an idealized analysis tool, complemented by empirical calibration and validation on held-out unwatermarked text.



For the shifted-window analysis below, we continue to use the same uniform-block approximation for each candidate block vector. 
This approximation is used to estimate single-block false-match probabilities and shift-overlap corrections; text-level aggregation is handled separately through the idealized block-level aggregation model described above.


\begin{lemma}[Adjacent-shift joint probability under the uniform-block model]
\label{lem:joint}
Let $\Delta \in \{1, \ldots, n-1\}$ and let $c \in \{0,1\}^n$ be a fixed designated codeword. Define the codeword shift autocorrelation
\[
a(\Delta) = |\{i \in [0, n-\Delta) : c[i+\Delta] = c[i]\}|.
\]
Under the uniform-block approximation for a candidate block vector $B$,
\begin{align*}
    \Pr(E_s \cap E_{s+\Delta}) 
    = &\sum_{a=0}^{a(\Delta)} \sum_{b=0}^{n-\Delta-a(\Delta)}
    \binom{a(\Delta)}{a}\!\! \binom{n-\Delta-a(\Delta)}{b} 2^{-(n-\Delta)}
    \\&\times F_\Delta(t-a-b)\, F_\Delta\!\big(t-a-(n-\Delta-a(\Delta))+b\big),
\end{align*}
where $F_\Delta(k) = \Pr(\mathrm{Bin}(\Delta, 1/2) \leq k)$ for $k \geq 0$ and $F_\Delta(k) = 0$ for $k < 0$.
\end{lemma}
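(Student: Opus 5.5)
The plan is to make the uniform-block approximation explicit on the union of the two windows and then condition on the overlap. Concretely, the two candidate windows at shifts $s$ and $s+\Delta$ read $n+\Delta$ consecutive hard bits, which I model as i.i.d.\ uniform bits $X_0,\dots,X_{n+\Delta-1}$. This is the natural joint version of the uniform-block model used in Proposition~\ref{prop:single-block-fpr}: each window is then marginally uniform on $\{0,1\}^n$. Writing $B^{(s)}=(X_0,\dots,X_{n-1})$ and $B^{(s+\Delta)}=(X_\Delta,\dots,X_{n+\Delta-1})$, I take $E_s=\{d_H(B^{(s)},c)\le t\}$ and $E_{s+\Delta}=\{d_H(B^{(s+\Delta)},c)\le t\}$. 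I split the positions into three disjoint groups:
\begin{itemize}
\item the left tail $X_0,\dots,X_{\Delta-1}$, which is seen only by window $s$;
\item the overlap $X_\Delta,\dots,X_{n-1}$, of length $n-\Delta$;
\item the right tail $X_n,\dots,X_{n+\Delta-1}$, which is seen only by window $s+\Delta$.
\end{itemize}

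Next I decompose each Hamming distance. Let $D_1$ be the number of mismatches of the left tail with $c[0..\Delta-1]$ and $D_2$ the number of mismatches of the right tail with $c[n-\Delta..n-1]$. Since the bits are uniform and these groups are disjoint from the overlap, $D_1$ and $D_2$ are independent $\mathrm{Bin}(\Delta,1/2)$ variables, independent of the overlap.

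Index the overlap bits by $i\in[0,n-\Delta)$, so that bit $X_{\Delta+i}$ is compared with $c[i+\Delta]$ in window $s$ and with $c[i]$ in window $s+\Delta$. There are two kinds of overlap position.
\begin{itemize}
\item On the $a(\Delta)$ indices with $c[i+\Delta]=c[i]$, a bit mismatches in both windows or in neither. Let $a$ be the number of such double mismatches.
\item On the remaining $N:=n-\Delta-a(\Delta)$ indices, where $c[i+\Delta]\neq c[i]$, every bit mismatches in exactly one window. Let $b$ be the number that mismatch in window $s$; then the other $N-b$ mismatch in window $s+\Delta$.
\end{itemize}
This gives
\[
d_H(B^{(s)},c)=a+b+D_1,\qquad d_H(B^{(s+\Delta)},c)=a+(N-b)+D_2 .
\]

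Then I condition on the overlap configuration $(a,b)$. Under uniformity, the probability of a particular pair of counts is $\binom{a(\Delta)}{a}\binom{N}{b}2^{-(n-\Delta)}$. Given $(a,b)$, the events $E_s$ and $E_{s+\Delta}$ reduce to $D_1\le t-a-b$ and $D_2\le t-a-N+b$. These are independent, with probabilities $F_\Delta(t-a-b)$ and $F_\Delta(t-a-N+b)$. The convention $F_\Delta(k)=0$ for $k<0$ handles infeasible configurations automatically. Summing over $0\le a\le a(\Delta)$ and $0\le b\le N$ by the law of total probability gives the stated formula.

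The only delicate point is the bookkeeping of which codeword coordinate each overlap bit is compared against in the two windows, namely $c[i+\Delta]$ versus $c[i]$. This correspondence is what makes $a(\Delta)$, as defined in the lemma, the right statistic, so I would check the index alignment carefully against how the shifted windows are built in Algorithm~\ref{alg:detect}. Everything else is a routine independence and counting argument.
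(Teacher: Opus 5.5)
Your proposal is correct and follows the paper's proof essentially step for step. The paper likewise writes $B^{(s)}=(X,Y)$ and $B^{(s+\Delta)}=(Y,Z)$ with independent uniform pieces, counts double mismatches $A\sim\mathrm{Bin}(a(\Delta),1/2)$ and one-sided mismatches $B\sim\mathrm{Bin}(n-\Delta-a(\Delta),1/2)$ on the overlap, and conditions on $(A,B)$ using the independence of the two $\mathrm{Bin}(\Delta,1/2)$ tail counts. Your index bookkeeping, in which each overlap bit is compared with $c[i+\Delta]$ in window $s$ and with $c[i]$ in window $s+\Delta$, agrees with the paper's.
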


\begin{proof}
Decompose the two windows as $B^{(s)} = (X, Y)$ and $B^{(s+\Delta)} = (Y, Z)$, where $|X| = |Z| = \Delta$ and $|Y| = n - \Delta$. Under the uniform-block approximation, the non-overlapping portions $X$ and $Z$ and the shared portion $Y$ are treated as independent uniform bit strings. Define:
\begin{itemize}
\item $A \sim \mathrm{Bin}(a(\Delta), 1/2)$: errors of $Y$ at the $a(\Delta)$ 
positions where $c[i+\Delta] = c[i]$ (which contribute identically to both events);
\item $B \sim \mathrm{Bin}(n-\Delta-a(\Delta), 1/2)$: errors of $Y$ in $E_s$ at 
positions where $c[i+\Delta] \neq c[i]$ (errors in $E_{s+\Delta}$ at these 
positions equal $n-\Delta-a(\Delta) - B$);
\item $W_X \sim \mathrm{Bin}(\Delta, 1/2)$ and $W_Z \sim \mathrm{Bin}(\Delta, 1/2)$:
errors on the non-shared portions.
\end{itemize}
The total errors satisfy $d_H(B^{(s)}, c) = W_X + A + B$ and 
$d_H(B^{(s+\Delta)}, c) = W_Z + A + (n-\Delta-a(\Delta)) - B$.
The joint event $\{E_s \cap E_{s+\Delta}\}$ requires both sums $\leq t$.
Conditioning on $(A, B)$ and using independence of $W_X, W_Z$ yields the stated 
closed form.
\end{proof}
\begin{proposition}[Window-shifting bound, Strict-Safe]
Under the uniform-block approximation for candidate block vectors, the single-block false positive probability under window shifting satisfies:
\[
\Pr\left(\bigcup_{s \in \mathcal{S}} E_s\right) 
\leq S \cdot p_0 - (S-1) \cdot \Pr(E_s \cap E_{s+1}) 
\leq S \cdot p_0.
\]
\end{proposition}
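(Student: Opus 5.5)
We will prove the bound as a direct application of Hunter's spanning-tree inequality~\cite{Hun76} to the family of shift events $\{E_s\}_{s\in\mathcal{S}}$. For any finite family of events and any spanning tree $\mathcal{T}$ on the index set, Hunter's inequality gives
\[
\Pr\Bigl(\bigcup_{s\in\mathcal{S}} E_s\Bigr)\;\le\;\sum_{s\in\mathcal{S}}\Pr(E_s)\;-\;\sum_{(s,s')\in\mathcal{T}}\Pr(E_s\cap E_{s'}).
\]
We choose $\mathcal{T}$ to be the path tree on $\mathcal{S}=\{-s_{\max},\ldots,s_{\max}\}$ that links consecutive shifts $(s,s+1)$. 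This tree has exactly $S-1$ edges. It is natural here because adjacent windows overlap in $n-1$ positions, which makes their pairwise intersection probabilities the largest.

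The first step evaluates the marginal terms. Under the uniform-block approximation, each candidate vector $B^{(j,s)}$ is uniform on $\{0,1\}^n$. Proposition~\ref{prop:single-block-fpr} then gives $\Pr(E_s)=p_0$ for every $s$, so the first sum equals $S\,p_0$.

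The second step evaluates the edge terms. Each edge probability $\Pr(E_s\cap E_{s+1})$ is exactly the quantity computed in Lemma~\ref{lem:joint} with $\Delta=1$. That closed form depends only on $n$, $t$, and the codeword autocorrelation $a(1)$, and not on $s$, because the designated codeword $c^{(j)}$ is fixed within block $j$. The $S-1$ edge terms are therefore all equal, and the second sum is $(S-1)\Pr(E_s\cap E_{s+1})$. This yields the first inequality. The second inequality is immediate, since probabilities are nonnegative.

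The main obstacle is the shift-invariance of the joint term. A shifted window may extend into prompt positions, beyond $T$, or across a block boundary. Such positions are invalid, and in Strict-Safe mode they are filled with $\mathrm{PRF}(K,j,b)$ bits; the uniform-block model treats these as uniform independent bits as well. Under that model every window pair still has the $(X,Y,Z)$ decomposition used in Lemma~\ref{lem:joint}, with independent uniform parts. I would state this dependence on the modelling assumption explicitly. If exact equality of the edge terms failed, the argument would still go through with the sum of the individual edge terms in place of $(S-1)\Pr(E_s\cap E_{s+1})$; the stated form then holds with that term read as the common value under the model.
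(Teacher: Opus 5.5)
Your proof is correct and is the paper's proof: Hunter's spanning-tree bound on the chain tree over $\mathcal{S}$, marginals equal to $p_0$ by Proposition~\ref{prop:single-block-fpr}, the $S-1$ edge terms given by Lemma~\ref{lem:joint} with $\Delta=1$, and nonnegativity of $\Pr(E_s\cap E_{s+1})$ for the second inequality. Your boundary remark goes beyond the paper and has one slip that does not affect the main argument. Strict-Safe fills erasures with $\mathrm{PRF}(K,j,b)$, keyed by the within-window index $b$, so a filled position in the overlap receives independent fill bits in windows $s$ and $s+1$, and $Y$ is not literally shared there. As you note, however, Hunter's bound still holds with the actual sum of edge terms (equivalently, with $\min_s\Pr(E_s\cap E_{s+1})$ in place of a common value), so the main argument under the idealized model stands.
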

\begin{proof}
The first inequality follows from Hunter's spanning-tree bound~\cite{Hun76} applied to the chain spanning tree on $\mathcal{S}$. The second follows from $\Pr(E_s \cap E_{s+1}) \geq 0$. The joint probability $\Pr(E_s \cap E_{s+1})$ admits the closed-form expression 
given in Lemma~\ref{lem:joint} with $\Delta=1$, depending on the codeword shift autocorrelation $a(1) = |\{i : c[i+1] = c[i]\}|$.
\end{proof}



\begin{proposition}[Aggregate FPR Chernoff bound under an idealized aggregation model]
Let $X_1,\ldots,X_M$ be block-level spurious match indicators under $H_0$. 
Under the idealized block-level aggregation model, suppose these indicators are treated as independent Bernoulli variables satisfying $\Pr(X_j=1)\le p_0$ for all $j$.
Let
\[
\rho_M=\frac{1}{M}\sum_{j=1}^M X_j.
\]
Then for any threshold $\theta\in(p_0,1)$,
\[
\Pr(\rho_M\ge \theta)\le \exp\!\bigl(-M\,D(\theta\|p_0)\bigr),
\]
where $D(\theta\|p_0)=\theta\log\frac{\theta}{p_0}+(1-\theta)\log\frac{1-\theta}{1-p_0}$ is the binary Kullback--Leibler divergence.
\end{proposition}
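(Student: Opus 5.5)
The plan is the standard Chernoff--Cramér argument for sums of independent Bernoulli variables, run in three steps: stochastic domination, an exponential Markov bound, and optimization over the free parameter.

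\emph{Step 1: reduce to the case of equal success probabilities.} Each indicator has $\Pr(X_j=1)=p_j\le p_0$. For any $\lambda>0$, $e^{\lambda x}$ is increasing in $x$, so
\[
\mathbb{E}[e^{\lambda X_j}]=1-p_j+p_je^{\lambda}\le 1-p_0+p_0e^{\lambda}.
\]
It therefore suffices to bound the upper tail as if every $X_j\sim\mathrm{Bernoulli}(p_0)$. The monotonicity of $e^{\lambda x}$ is what makes this domination legitimate.

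\emph{Step 2: exponential Markov and factorization.} For any $\lambda>0$, Markov's inequality applied to $e^{\lambda\sum_j X_j}$ gives
\[
\Pr(\rho_M\ge\theta)\le e^{-\lambda M\theta}\,\mathbb{E}\bigl[e^{\lambda\sum_j X_j}\bigr].
\]
Under the idealized aggregation model the $X_j$ are independent, so the expectation factorizes. Combining with Step 1 yields
\[
\Pr(\rho_M\ge\theta)\le \exp\!\Bigl(-M\bigl[\lambda\theta-\log(1-p_0+p_0e^{\lambda})\bigr]\Bigr).
\]

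\emph{Step 3: optimize over $\lambda$.} The exponent $g(\lambda)=\lambda\theta-\log(1-p_0+p_0e^{\lambda})$ is concave. Setting $g'(\lambda)=0$ gives
\[
e^{\lambda^\star}=\frac{\theta(1-p_0)}{p_0(1-\theta)}.
\]
This satisfies $\lambda^\star>0$ precisely because $\theta>p_0$, which is where the hypothesis $\theta\in(p_0,1)$ enters. Substituting, we get
\[
1-p_0+p_0e^{\lambda^\star}=\frac{1-p_0}{1-\theta},
\]
so
\[
g(\lambda^\star)=\theta\log\frac{\theta(1-p_0)}{p_0(1-\theta)}-\log\frac{1-p_0}{1-\theta},
\]
which simplifies to $\theta\log\frac{\theta}{p_0}+(1-\theta)\log\frac{1-\theta}{1-p_0}=D(\theta\|p_0)$.

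No step is a genuine obstacle; the argument is routine. The only points needing care are the domination in Step 1, where $\lambda>0$ is essential, and the algebraic simplification in Step 3. The result is understood as holding within the idealized independence model stated in the proposition, as the paper itself emphasizes.
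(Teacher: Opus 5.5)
Your proposal is correct and is essentially the paper's argument: the paper simply asserts domination by independent $\mathrm{Bernoulli}(p_0)$ indicators and cites the standard Chernoff upper-tail bound, and you have written that bound out in full, doing the domination at the level of moment generating functions (valid because $e^{\lambda}>1$ for $\lambda>0$) and carrying out the optimization, whose algebra checks out. The only case you leave implicit is $p_0=0$, where $\lambda^\star$ is infinite; there $D(\theta\|p_0)=+\infty$ and the bound holds trivially, since every $X_j=0$ almost surely (equivalently, let $\lambda\to\infty$).
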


\begin{proof}
Under the idealized aggregation model, the block-level indicators are dominated by independent Bernoulli variables with success probability at most $p_0$. 
The result follows from the standard Chernoff bound for the upper tail of their empirical mean.
\end{proof}



\paragraph{Strict-Safe text-level FPR.}
Applying the aggregate bound with $p_0$ replaced by the shifted single-block bound $p_{0,\mathrm{strict}}^{(\mathrm{shift})}$ gives the idealized Strict-Safe text-level FPR bound used for analysis in Section~\ref{sec:Theory}. 
For the fixed-partition instantiation, this bound should be interpreted together with empirical FPR calibration and validation under $H_0$.

We denote the shifted single-block false-match bound by
\[
p_{0,\mathrm{strict}}^{(\mathrm{shift})} = S p_0 - (S-1)\Pr(E_s\cap E_{s+1}) \le S p_0 .
\]




\subsection{Strict-Safe: Single-Block Success Probability under $H_1$}
For completeness, we restate the standard bounded-distance success probability of the \emph{Strict-Safe} detector under the alternative hypothesis $H_1$, assuming a BSC observation model on the detector's PRF-filled hard-decision representation.

\begin{proposition}[Single-block success probability, Strict-Safe]
Assume that under $H_1$ (no attacks) the channel from a designated codeword $c\in C$ to the detector's PRF-filled hard-decision block $\widetilde{B}\in\{0,1\}^n$ is a binary symmetric channel with crossover probability $q\in[0,1/2)$. 
Assume the Strict-Safe block rule succeeds whenever the number of bit errors in $\widetilde{B}$ is at most $t$ and fails otherwise.
Then the single-block success probability is
\[
p_{1,\mathrm{strict}}
=\Pr\!\bigl(\mathrm{dist}_H(\widetilde{B},c)\le t \,\big|\, H_1\bigr)
=\sum_{i=0}^{t}\binom{n}{i}q^i(1-q)^{n-i}.
\]
\end{proposition}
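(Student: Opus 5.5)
The plan is to reduce the success event to a count of bit errors and then read off the binomial distribution that the BSC assumption induces on that count.

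First, fix the designated codeword $c$ and write the PRF-filled hard-decision block as $\widetilde{B}=c\oplus E$. Here $E\in\{0,1\}^n$ is the error pattern. Under the stated BSC assumption, the coordinates $E_1,\ldots,E_n$ are independent Bernoulli($q$) variables; this is exactly what a memoryless binary symmetric channel with crossover probability $q$ means. Then $\mathrm{dist}_H(\widetilde{B},c)=\sum_{b} E_b=:W$. By the definition of the Strict-Safe block rule (the acceptance check $d_H(\widetilde{B}^{(j,s)},c^{(j)})\le t_{\mathrm{BCH}}$ in Algorithm~\ref{alg:detect}, taken at the correct alignment with no shifts needed since there are no attacks), success is precisely the event $\{W\le t\}$. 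The hypothesis that the rule succeeds iff at most $t$ errors occur makes this identification immediate.

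Second, $W$ is a sum of $n$ i.i.d. Bernoulli($q$) variables, so $W\sim\mathrm{Bin}(n,q)$. Concretely, for each $i$ there are $\binom{n}{i}$ error patterns of weight $i$, and each has probability $q^i(1-q)^{n-i}$ by independence. Summing over $i=0,\ldots,t$ gives the stated formula for $p_{1,\mathrm{strict}}$. The restriction $q<1/2$ plays no role in the identity itself. It only ensures that the expression is meaningful as a high success probability when $nq<t$.

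The only real subtlety is the modeling step: justifying that the PRF-filled representation is a BSC. Erased positions are filled with key-derived pseudorandom bits, so they match $c$ with probability $1/2$. Non-erased positions match with probability $p^\star$, by the Constant hit-rate property proposition. So the channel is really non-identical across positions. Because the proposition assumes a single crossover $q$, I will take that as given and not prove it. If heterogeneity must be handled, the same argument goes through with $W$ Poisson-binomial, and the closed form is replaced by the corresponding tail sum.
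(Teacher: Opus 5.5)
Your proof is correct and is essentially the paper's argument: under the assumed BSC the error count $\mathrm{dist}_H(\widetilde{B},c)$ is $\mathrm{Bin}(n,q)$, and by hypothesis the Strict-Safe rule succeeds exactly when this count is at most $t$, which gives the stated sum. Your closing caveat about heterogeneity is reasonable but lies outside the proposition. Also, non-erased positions have hit rate exactly $p^\star$ only when the bias cap $\delta_{\max}$ does not bind; otherwise the hit rate lies in $(1/2,p^\star)$.
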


\begin{proof}
Under a BSC with crossover $q$ on the PRF-filled hard-decision representation, the number of bit errors $\mathrm{dist}_H(\widetilde{B},c)$ has distribution $\mathrm{Bin}(n,q)$. 
The Strict-Safe rule succeeds iff the error count is at most $t$, yielding the stated sum.
\end{proof}

\subsection{FPR-Calibrated: Score-Based Tail Bounds}
We next analyze the score-based acceptance rule used in the \emph{FPR-Calibrated} mode. Since the block score is a sum of bounded random variables, it admits exponential tail bounds under both $H_0$ and $H_1$. These bounds provide a way to control false positives through score-based rejection, without requiring the acceptance region to coincide with a bounded-distance Hamming ball.

For the fixed-partition instantiation, the independence assumptions in this subsection should be understood as an idealized bounded-variable analysis model for score calibration rather than exact probabilistic guarantees.
Let $p^\star\in(1/2,1)$ be the Constant Hit-Rate parameter and let
\[
L = \log\!\left(\frac{p^\star}{1-p^\star}\right).
\]
For each block $j$ and shift $s$, define per-bit LLRs $\Lambda_b^{(j,s)}\in[-L,L]$, where erasures correspond to $\Lambda=0$.
Define the designated-codeword score
\[
S^{(j)}(s)=\sum_{b=0}^{n-1}(2c^{(j)}[b]-1)\,\Lambda_b^{(j,s)}.
\]
In FPR-Calibrated Mode, a block can be accepted only if $\max_s S^{(j)}(s)\ge \tau_{\mathrm{blk}}$ (and optionally additional checks);
therefore, bounding $\Pr(\max_s S^{(j)}(s)\ge \tau_{\mathrm{blk}})$ suffices to upper bound the block-level false positive probability.

\begin{proposition}[Score tail bounds under $H_0$ and $H_1$]
For the following score tail bound, we use an analysis model in which, for a fixed block $j$ and shift $s$, the variables
\[
Z_b=(2c^{(j)}[b]-1)\,\Lambda_b^{(j,s)}
\]
are treated as independent and bounded, with $Z_b\in[-L,L]$.

\textbf{(Null hypothesis $H_0$.)}
If $\mathbb{E}[Z_b]=0$ for all $b$ under $H_0$, then for any $\tau_{\mathrm{blk}}>0$,
\[
\Pr_{H_0}\!\bigl(S^{(j)}(s)\ge \tau_{\mathrm{blk}}\bigr)
\le
\exp\!\left(-\frac{\tau_{\mathrm{blk}}^2}{2nL^2}\right).
\]
Consequently, with window shifting over $S=2s_{\max}+1$ shifts,
\[
\Pr_{H_0}\!\left(\max_{s} S^{(j)}(s)\ge \tau_{\mathrm{blk}}\right)
\le
S\exp\!\left(-\frac{\tau_{\mathrm{blk}}^2}{2nL^2}\right).
\]

\textbf{(Alternative hypothesis $H_1$.)}
Suppose that at the true alignment $s=s_{\mathrm{true}}$, there are $n_{\mathrm{eff}}$ non-erased positions and
the non-erased $Z_b$ satisfy $\mathbb{E}[Z_b]=\mu>0$ while erased positions contribute $Z_b=0$.
Then for any $\tau_{\mathrm{blk}} < n_{\mathrm{eff}}\mu$,
\[
\Pr_{H_1}\!\bigl(S^{(j)}(s_{\mathrm{true}})\le \tau_{\mathrm{blk}}\bigr)
\le
\exp\!\left(-\frac{(n_{\mathrm{eff}}\mu-\tau_{\mathrm{blk}})^2}{2n_{\mathrm{eff}}L^2}\right).
\]
\end{proposition}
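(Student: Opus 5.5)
The plan is to apply Hoeffding's inequality directly to the block score, which under the stated analysis model is a sum of independent bounded variables, and then take a union bound over shifts. For a fixed block $j$ and shift $s$, write $S^{(j)}(s)=\sum_{b=0}^{n-1}Z_b$ with $Z_b=(2c^{(j)}[b]-1)\Lambda_b^{(j,s)}$. Each $\Lambda_b^{(j,s)}$ lies in $\{-L,0,L\}\subseteq[-L,L]$, where the value $0$ corresponds to an erasure or an invalid position. Since $|2c^{(j)}[b]-1|=1$, each $Z_b$ also lies in $[-L,L]$, an interval of width $2L$.

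\textbf{Null case.} Under $H_0$ with $\mathbb{E}[Z_b]=0$, we have $\mathbb{E}[S^{(j)}(s)]=0$. Hoeffding's upper-tail inequality for $n$ independent summands, each with range $2L$, gives
\[
\Pr_{H_0}\bigl(S^{(j)}(s)\ge\tau_{\mathrm{blk}}\bigr)\le\exp\!\left(-\frac{2\tau_{\mathrm{blk}}^2}{n(2L)^2}\right)=\exp\!\left(-\frac{\tau_{\mathrm{blk}}^2}{2nL^2}\right).
\]
The event $\{\max_s S^{(j)}(s)\ge\tau_{\mathrm{blk}}\}$ is the union over $s\in\mathcal{S}$ of the single-shift events. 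A plain union bound over the $S=2s_{\max}+1$ shifts therefore yields the factor $S$. This needs no independence across shifts, which matters because shifted windows overlap.

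\textbf{Alternative case.} At $s=s_{\mathrm{true}}$, the erased positions contribute exactly $0$, so $S^{(j)}(s_{\mathrm{true}})=\sum_{b\in\mathcal{N}}Z_b$, where $\mathcal{N}$ is the set of $n_{\mathrm{eff}}$ non-erased positions. This sum has mean $n_{\mathrm{eff}}\mu$. For $\tau_{\mathrm{blk}}<n_{\mathrm{eff}}\mu$, the event $\{S\le\tau_{\mathrm{blk}}\}$ is the deviation event $\{S-n_{\mathrm{eff}}\mu\le-(n_{\mathrm{eff}}\mu-\tau_{\mathrm{blk}})\}$. Hoeffding's lower-tail inequality over $n_{\mathrm{eff}}$ summands of range $2L$ gives $\exp\!\bigl(-2(n_{\mathrm{eff}}\mu-\tau_{\mathrm{blk}})^2/(4n_{\mathrm{eff}}L^2)\bigr)$, which is the claimed bound. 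Dropping the deterministic zeros is what replaces $n$ by $n_{\mathrm{eff}}$ in the denominator.

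\textbf{Main subtlety.} The hard part is not the calculation but the status of the erasure pattern. Erasures are determined by the model's distribution on the observed prefix, so $\mathcal{N}$ and $n_{\mathrm{eff}}$ are in principle random and correlated with the tokens. I would handle this by conditioning on the erasure pattern. Given $\mathcal{N}$, the analysis model treats the non-erased $Z_b$ as independent, bounded, and with the stated means, and the conditional Hoeffding bound holds. Under $H_0$ the conditional bound with $n_{\mathrm{eff}}\le n$ summands is at most the one stated with $n$, so taking expectations over the pattern preserves the $H_0$ bound. The $H_1$ statement is read conditionally on $n_{\mathrm{eff}}$, as phrased. I would state explicitly that these steps rely on the idealized independence model rather than exact guarantees, consistent with the paper's caveat about the fixed-partition instantiation.
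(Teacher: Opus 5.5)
Your proof is correct and follows the paper's argument: Hoeffding's inequality with summand range $2L$ for the single-shift $H_0$ bound, a plain union bound over the $S$ overlapping shifts, and Hoeffding over only the $n_{\mathrm{eff}}$ non-erased positions under $H_1$. Your extra step of conditioning on the erasure pattern is not in the paper, and it is valid here: in this detector erasure is exactly the event $\{Z_b=0\}$, which is a function of $Z_b$ alone, so conditioning keeps the non-erased $Z_b$ independent, and under $H_0$ their conditional means stay zero.
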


\begin{proof}
Under this bounded-independence analysis model, both inequalities follow from Hoeffding's inequality for sums of bounded independent random variables~\cite{Hoeffding1963}. Under $H_0$, apply Hoeffding to $S^{(j)}(s)=\sum_b Z_b$ with mean $0$ and bounds $\pm L$.
The shift-max bound follows by a union bound over $S$ shifts. Under $H_1$, apply Hoeffding to the sum over the $n_{\mathrm{eff}}$ non-erased positions (erased positions contribute $0$ deterministically).
\end{proof}

\paragraph{Remark (decoding does not increase FPR beyond the score bound).}
In Mode~II we also require that a decoding wrapper returns the designated codeword $c^{(j)}$.
This can only \emph{reduce} acceptance relative to the pure score test, so the probability bounds above remain valid upper bounds for block-level false positives. Conversely, the score separation under $H_1$ explains why Mode~II can increase detection power, especially when a reliability-aware decoder (e.g., Chase-type list decoding) exploits favorable reliability patterns~\cite{chase1972,lin2004error,richardson2008modern}.

\subsection{Aggregate False Negative Bound}
We state the standard binomial lower-tail bound used to convert single-block success probabilities into a text-level FNR bound.

\begin{proposition}[Aggregate false negative bound]
Let $Y_1,\ldots,Y_M$ be block-level success indicators under $H_1$. 
Under the block-level independence analysis model, suppose that $\Pr(Y_j=1)\ge p_1$ for all $j$.
Let
\[
\rho_M=\frac{1}{M}\sum_{j=1}^M Y_j.
\]
Then for any threshold $\theta\in(0,1)$ with $\theta<p_1$,
\[
\Pr(\rho_M<\theta)\le \exp\!\bigl(-M\,D(\theta\|p_1)\bigr),
\]
where $D(\theta\|p_1)$ is the binary relative entropy.
\end{proposition}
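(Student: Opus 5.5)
This is the mirror image of the aggregate FPR Chernoff bound proved above, so the plan is to reduce to that argument by complementation. Define failure indicators $F_j = 1 - Y_j$. Under the block-level independence analysis model, the $F_j$ are independent Bernoulli variables with $\Pr(F_j=1) \le 1-p_1 =: q_1$. The event $\{\rho_M < \theta\}$ is contained in $\{\frac1M\sum_j F_j \ge 1-\theta\}$. Since $\theta < p_1$, we have $1-\theta > q_1$, so this is an upper-tail event for the failures. The earlier Chernoff proposition then gives the bound $\exp(-M\,D(1-\theta\|1-p_1))$. This equals $\exp(-M\,D(\theta\|p_1))$ by the symmetry $D(x\|y)=D(1-x\|1-y)$ of binary relative entropy, which you check by swapping the two terms in the definition.

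To keep the proof self-contained, I would instead run the Chernoff argument directly on the lower tail. For $\lambda>0$, Markov's inequality applied to $e^{-\lambda\sum_j Y_j}$ gives
\[
\Pr(\rho_M<\theta)\le e^{\lambda M\theta}\prod_{j=1}^M \mathbb{E}\bigl[e^{-\lambda Y_j}\bigr].
\]
Each factor is $1-\Pr(Y_j=1)(1-e^{-\lambda})$. This is decreasing in $\Pr(Y_j=1)$, so it is at most $1-p_1+p_1e^{-\lambda}$. Thus the heterogeneous success probabilities are handled by monotonicity, which also corresponds to stochastic domination of $Y_j$ over $\mathrm{Bern}(p_1)$.

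Next, optimize over $\lambda$ by taking
\[
e^{-\lambda}=\frac{\theta(1-p_1)}{p_1(1-\theta)},
\]
which is less than $1$ precisely because $\theta<p_1$, so $\lambda>0$ is admissible. Substituting this choice gives $\exp\bigl(-M[\theta\log\frac{\theta}{p_1}+(1-\theta)\log\frac{1-\theta}{1-p_1}]\bigr)$. This is a routine calculation matching the standard Cramér-Chernoff exponent.

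The main subtlety is not the algebra but handling $\Pr(Y_j=1)\ge p_1$ instead of equality. The monotonicity step above settles it: the exponential moment bound holds for each independent factor separately. Independence enters only through the factorization of the moment generating function, consistent with the idealized model already used for the FPR bound. One further point is that the bound is stated for the strict event $\rho_M<\theta$; it follows a fortiori from the bound for $\rho_M\le\theta$, so no boundary issue arises.
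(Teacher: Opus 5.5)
Your proposal is correct and takes the same approach as the paper. The paper's proof is a one-line appeal to the standard Chernoff lower-tail bound with dominating Bernoulli parameter $p_1$ under block independence; your MGF-monotonicity step (equivalently, the complementation $F_j=1-Y_j$ with $D(x\|y)=D(1-x\|1-y)$) and the explicit choice of $e^{-\lambda}$ just spell that argument out. The one case you leave implicit is $p_1=1$, where your choice of $e^{-\lambda}$ degenerates to $0$; there $Y_j=1$ almost surely, so the bound holds trivially, or by letting $\lambda\to\infty$.
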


\begin{proof}
Under the block-level independence analysis model, the lower tail of the average success rate is bounded by the standard Chernoff bound with dominating Bernoulli parameter $p_1$. This yields the stated inequality.
\end{proof}

\paragraph{Application to Strict-Safe vs.\ Calibrated modes.}
For Strict-Safe, $p_1$ can be taken as $p_{1,\mathrm{strict}}$ (or its shifted variant).
For FPR-Calibrated mode, $p_1$ can be lower bounded using the $H_1$ score tail bound together with the probability of
selecting the correct alignment in the shift search; we report these trade-offs empirically in Section~\ref{sec:experiments}.

\subsection{Concentration of Empirical FPR (Finite-sample justification)}
We provide a standard concentration inequality justifying empirical estimation of (single-block or text-level) FPR. This result applies to independently sampled evaluation texts or trials and is used to justify uncertainty estimates for empirical FPR measurements.

\begin{lemma}[Concentration of empirical FPR]
Let $X_1,\ldots,X_M$ be independent Bernoulli variables with $\Pr(X_j=1)=p_0$,
representing false positives for $M$ independent trials under $H_0$ (a ``trial'' may refer to a block event or a text-level event).
Let
\[
\hat{p}_{0,M}=\frac{1}{M}\sum_{j=1}^M X_j
\]
be the empirical estimate of $p_0$. Then for any $\varepsilon>0$,
\[
\Pr\!\bigl(|\hat{p}_{0,M}-p_0|\ge \varepsilon\bigr)
\le 2\exp(-2M\varepsilon^2).
\]
In particular, to guarantee $\Pr(|\hat{p}_{0,M}-p_0|\ge\varepsilon)\le \delta$, it suffices to take
\[
M \ge \frac{1}{2\varepsilon^2}\log\!\left(\frac{2}{\delta}\right).
\]
\end{lemma}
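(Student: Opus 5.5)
The plan is to apply Hoeffding's inequality for bounded independent variables to the centred sum, using the same device as in the score tail bounds earlier in the paper. Each $X_j$ takes values in $[0,1]$ and has mean $p_0$. Hence $\sum_j X_j$ has mean $Mp_0$, and each summand has range length $1$. Hoeffding's inequality then gives, for each tail separately,
\[
\Pr\!\bigl(\hat{p}_{0,M}-p_0\ge \varepsilon\bigr)\le \exp\!\left(-\frac{2(M\varepsilon)^2}{\sum_{j=1}^M 1^2}\right)=\exp(-2M\varepsilon^2).
\]
The same argument applied to $1-X_j$ (equivalently, to $-X_j$) gives the identical bound for $\Pr(p_0-\hat{p}_{0,M}\ge\varepsilon)$.

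Next, I will combine the two tails with a union bound. The event $|\hat{p}_{0,M}-p_0|\ge\varepsilon$ is the union of the two one-sided events, so its probability is at most $2\exp(-2M\varepsilon^2)$. If one wants a self-contained argument instead of citing Hoeffding, the standard route is a Chernoff argument: apply Markov's inequality to $e^{\lambda\sum_j(X_j-p_0)}$, factor the expectation using independence, and bound each factor by Hoeffding's lemma $\mathbb{E}e^{\lambda(X-p_0)}\le e^{\lambda^2/8}$. Optimising at $\lambda=4\varepsilon$ then recovers the bound.

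For the sample-size statement, it suffices to require $2\exp(-2M\varepsilon^2)\le\delta$. Taking logarithms, this is equivalent to $M\ge \frac{1}{2\varepsilon^2}\log(2/\delta)$. Monotonicity of the bound in $M$ shows that any larger $M$ also works.

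No step is really an obstacle here. The only point needing care is the modelling remark that the trials must be genuinely independent: for example, independently sampled evaluation texts rather than blocks of a single text under the fixed partition. The lemma's hypothesis assumes this independence directly, so the proof need not address it.
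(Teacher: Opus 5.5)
Your proposal is correct and follows the same route as the paper, which simply cites Hoeffding's inequality for the empirical mean of bounded independent variables. Your two one-sided bounds combined by a union bound, the optional Chernoff derivation with $\lambda=4\varepsilon$, and the inversion of $2\exp(-2M\varepsilon^2)\le\delta$ just spell out the details behind that citation.
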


\begin{proof}
This is Hoeffding's inequality applied to the empirical mean of bounded independent random variables~\cite{Hoeffding1963}.
\end{proof}


\section{Additional Experiments}
\label{app:additional-experiments}

This appendix reports additional empirical results and diagnostics that complement the main experiments in Section~\ref{sec:experiments}.
Our goals are threefold: (i) to provide stronger empirical evidence for the robustness--quality trade-offs of constant hit-rate embedding and decoding, (ii) to clarify the source of detection gains and the mechanism of false-positive control when the decoder explores candidates beyond the unique-decoding radius, and (iii) to provide practical evaluation templates for future multi-bit LLM watermark studies~\cite{kirchenbauer2023watermark,qu2025provably,chao2024watermarking,kim2026blockwisecodewordembeddingreliable}.


\paragraph{Schemes compared.}
Unless otherwise stated, all experiments follow the setup in Section~\ref{sec:experiments}, using the same models, prompts, sampling settings, payload sizes, BCH parameters, and shift range.
We compare:
\begin{itemize}
  \item \textbf{BREW~\cite{kim2026blockwisecodewordembeddingreliable}:} the block-wise designated-codeword baseline, using fixed-bias embedding, hard-decision BCH decoding, and window-shifting detection.
  \item \textbf{CORE-BREW-Strict:} Constant Hit-Rate embedding (Section~\ref{sec:method_embedding}) with Strict-Safe detection. Erased positions are filled with key-derived PRF bits, and a block is accepted only through bounded-distance decoding against the designated codeword. This variant serves as a conservative reference point that preserves the baseline bounded-distance acceptance region.
  \item \textbf{CORE-BREW-Cal:} Constant Hit-Rate embedding with entropy-aware erasures and likelihood-based decoding. This variant preserves erasures as zero-LLR positions and uses reliability-aware list decoding with score-based rejection, allowing improved detection power while explicitly characterizing the FPR--TPR trade-off~\cite{chase1972,lin2004error,richardson2008modern}.
\end{itemize}

\paragraph{Entropy-aware embedding and erasures.}
CORE-BREW uses entropy-aware erasures to avoid excessive biasing in low-entropy contexts. After computing and clamping the raw Constant Hit-Rate bias, we erase a position if the base distribution is too peaked ($\max_v p_t(v)\ge q_{\max}$) or if the clamped bias cannot keep the target-list mass safely above $1/2$. Erased positions are sampled without watermark bias and assigned zero token-level LLR at detection, allowing soft decoding to handle them naturally~\cite{lin2004error,richardson2008modern}.

\subsection{Clean Detection Results}
\label{app:clean-breakdown}


\begin{table}[t]
    \centering
    \caption{Detection performance under the clean setting ($T=500$) on the C4 dataset. We report TPR/FPR with 95\% confidence intervals and match rates in percent. All confidence intervals are reported as [lower, upper]. All results are reported using the default detection configuration for each method. CORE-BREW is evaluated at the primary operating point $p^\star = 0.9$.}
    \label{tab:no_attack_T500}
    \setlength{\tabcolsep}{5pt}
    \resizebox{\columnwidth}{!}{%
    \begin{tabular}{l c c c c}
    \toprule
    \textbf{Scheme} & \textbf{Parameter} & \textbf{TPR} & \textbf{FPR} & \textbf{Match rate (\%)} \\
    \midrule
    MPAC
    & $\delta = 3$
    & $1.000\,[0.991,1.000]$
    & $0.850\,[0.801,0.899]$
    & -- \\
    \midrule
    Qu et al.~\cite{qu2025provably}
    & $\delta = 3$
    & $1.000\,[0.991,1.000]$
    & $0.915\,[0.876,0.954]$
    & -- \\
    \midrule
    BREW
    & $\delta = 3$
    & $0.920\,[0.875,0.965]$
    & $0.000\,[0.000,0.009]$
    & $89.50\,[84.42,94.58]$ \\
    \midrule
    CORE-BREW-Strict
    & $p^\star = 0.9$
    & $0.903\,[0.863,0.943]$
    & $0.000\,[0.000,0.009]$
    & $88.58\,[86.79,90.37]$ \\
    \midrule
    CORE-BREW-Cal
    & $p^\star = 0.9$
    & $0.940\,[0.907,0.973]$
    & $0.002\,[0.000,0.009]$
    & $92.58\,[89.52,95.64]$ \\
    \bottomrule
    \end{tabular}%
    }
\end{table}

Table~\ref{tab:no_attack_T500} summarizes detection performance under the clean setting at text length $T=500$, evaluated using the default detection configuration for each method. MPAC~\cite{yoo2024advancing} and Qu et al.~\cite{qu2025provably} exhibit very high false positive rates, indicating weak discriminative power even in the absence of attacks. In contrast, the BREW~\cite{kim2026blockwisecodewordembeddingreliable} and CORE-BREW variants achieve near-zero FPR while maintaining strong detection performance. Among these, CORE-BREW-Cal achieves the highest match rate, indicating more reliable aggregation of block-level evidence without sacrificing false positive behavior. All CORE-BREW results are reported at the primary operating point $p^\star = 0.9$, while a full hit-rate sweep is provided in Appendix~\ref{app:hit-rate-sweep}.

\paragraph{Reviewer-critical diagnostic: beyond-radius corrections.}
To clarify the source of gains, we report
\begin{equation}
\pi_{>t} \;=\; \Pr\!\left( d_{\mathrm{H}}\!\left(B^{(j,s^\star)},\,c^{(j)}\right) > t \;\middle|\; \widehat{c}^{(j)} = c^{(j)} \right),
\end{equation}
the fraction of accepted blocks whose raw hard-decision vector lies outside the radius-$t$ Hamming ball around the designated codeword.


By construction, $\pi_{>t} = 0$ for CORE-BREW-Strict. For CORE-BREW-Cal, we observe a small but nonzero value of $\pi_{>t} = 0.0114 \pm 0.0403$ at $p^\star = 0.9$, confirming that the calibrated decoder can occasionally recover designated codewords beyond the unique-decoding radius. Since this fraction is small, the gains should be interpreted as arising from the combined effect of beyond-radius recovery, erasure-aware soft scoring, and reliability-aware candidate evaluation, rather than from beyond-radius recovery alone.

\subsection{Text Quality Evaluation}
\label{app:text-quality-sweep}

\begin{table}[t]
\centering
\caption{Text quality under varying watermark strengths ($T=500$). CORE-BREW maintains stable semantic quality across different $p^\star$, while baseline methods degrade as watermark strength increases.}
\label{tab:text_quality_sweep}
\begin{tabular}{l c c c c}
\toprule
Scheme & Parameter & PPL $\downarrow$ & BLEU $\uparrow$ & BERTScore $\uparrow$ \\
\midrule
Unwatermarked & -- & 15.51 & 31.81 & 0.8201 \\
\midrule
MPAC & $\delta=2$ & 11.27 & 28.25 & 0.8068 \\
     & $\delta=3$ & 13.68 & 22.29 & 0.7771 \\
\midrule
Qu et al.~\cite{qu2025provably} & $\delta=2$ & 13.50 & 26.44 & 0.7995 \\
                               & $\delta=3$ & 16.34 & 21.41 & 0.7657 \\
\midrule
BREW & $\delta=2$ & 12.83 & 28.39 & 0.8014 \\
              & $\delta=3$ & 16.27 & 21.89 & 0.7703 \\
\midrule
CORE-BREW-Strict & $p^\star=0.6$ & 16.97 & 31.72 & 0.8201 \\
                & $p^\star=0.7$ & 16.99 & 31.71 & 0.8201 \\
                & $p^\star=0.8$ & 17.35 & 31.71 & 0.8201 \\
                & $p^\star=0.9$ & 18.41 & 31.75 & 0.8201 \\
\midrule
CORE-BREW-Cal & $p^\star=0.6$ & 16.68 & 31.80 & 0.8201 \\
             & $p^\star=0.7$ & 17.11 & 31.81 & 0.8201 \\
             & $p^\star=0.8$ & 17.48 & 31.81 & 0.8201 \\
             & $p^\star=0.9$ & 18.27 & 31.82 & 0.8201 \\
\bottomrule
\end{tabular}
\end{table}

We evaluate text quality under varying watermark strengths by sweeping the main embedding parameters of each scheme. Baseline methods control watermark strength via a fixed logit bias magnitude $\delta$, whereas CORE-BREW regulates embedding strength through a target hit-rate $p^\star$ with entropy-aware safeguards. Table~\ref{tab:text_quality_sweep} reports perplexity (PPL), BLEU~\cite{Papineni2002}, and BERTScore~\cite{zhang2020bertscore}, all measured relative to the unwatermarked continuation.

As watermark strength increases, baseline methods exhibit clear degradation in both surface-level and semantic similarity: larger $\delta$ values substantially reduce BLEU and BERTScore, indicating noticeable changes in wording and meaning. In contrast, CORE-BREW maintains BLEU scores around 31.8 and BERTScore values around 0.82 across different $p^\star$ settings, closely matching the unwatermarked baseline. This indicates that CORE-BREW preserves both surface form and semantic content despite enforcing a stronger watermark signal.

CORE-BREW operates at higher perplexity compared to baseline methods, reflecting a controlled and intentional shift in the model distribution induced by watermark embedding. Importantly, this increase in perplexity does not correspond to degraded semantic or perceptual quality, as evidenced by consistently high BLEU and BERTScore. From a watermarking perspective, modifying likelihood is not a drawback but a necessary mechanism to encode a detectable signal.

Moreover, CORE-BREW exhibits \emph{stable perplexity across different $p^\star$ values}, whereas baseline methods show substantial variation as $\delta$ increases. This indicates that CORE-BREW provides a predictable and well-behaved mechanism for controlling watermark strength without introducing abrupt quality changes. Such stability is particularly important in practical deployment settings, where consistent behavior across operating points is required.

Taken together, these results show that CORE-BREW sacrifices base-model likelihood in a controlled and stable manner while preserving semantic meaning. This behavior is desirable for watermarking: the signal is strong enough to influence likelihood-based statistics, yet remains imperceptible in terms of meaning, style, and readability.

\paragraph{Why entropy-aware safeguards matter.}
A naive constant hit-rate rule can require very large instantaneous biases $|\delta_t|$ when the target-list mass $m_t$ is close to $0$ or $1$, which frequently occurs in low-entropy contexts. Such extreme biasing can distort the output distribution and degrade quality. The entropy-aware mechanism bounds $|\delta_t|$ and converts these positions into erasures, preventing significant semantic degradation while preserving detectability through erasure-tolerant decoding~\cite{lin2004error,richardson2008modern}.

\subsection{Code Length and Error-Correction Parameters}
\label{app:code-length}

\begin{table*}[t]
  \caption{Detection performance (TPR/FPR) of different BCH configurations at $p^\star=0.9$, $s_{\max}=5$ ($T=200$).}
  \label{tab:code_length}
  \centering
  \scriptsize
  \setlength{\tabcolsep}{2pt}
  \renewcommand{\arraystretch}{0.9}
  \resizebox{\textwidth}{!}{
  \begin{tabular}{l ccc ccc ccc ccc}
    \toprule
    & \multicolumn{6}{c}{\textbf{10\% Deletion}}
    & \multicolumn{6}{c}{\textbf{10\% Insertion}} \\
    \cmidrule(lr){2-4}\cmidrule(lr){5-7}\cmidrule(lr){8-10}\cmidrule(lr){11-13}
    & \multicolumn{3}{c}{Strict}
    & \multicolumn{3}{c}{Cal}
    & \multicolumn{3}{c}{Strict}
    & \multicolumn{3}{c}{Cal} \\
    \cmidrule(lr){2-4}\cmidrule(lr){5-7}\cmidrule(lr){8-10}\cmidrule(lr){11-13}
    & (15,5,3) & (31,6,7) & (63,7,15)
    & (15,5,3) & (31,6,7) & (63,7,15)
    & (15,5,3) & (31,6,7) & (63,7,15)
    & (15,5,3) & (31,6,7) & (63,7,15) \\
    \midrule
    TPR
    & 0.905 & 0.915 & 0.865
    & 0.940 & 0.885 & 0.845
    & 0.660 & 0.400 & 0.120
    & 0.730 & 0.510 & 0.105 \\
    FPR
    & 0.165 & 0.085 & 0.000
    & 0.350 & 0.240 & 0.000
    & 0.150 & 0.105 & 0.000
    & 0.360 & 0.250 & 0.005 \\
    \bottomrule
  \end{tabular}
  }
\end{table*}




Table~\ref{tab:code_length} compares three BCH configurations with increasing code length and error-correction capability under 10\% deletion and insertion attacks. For this code-length sweep, we fix the window-shift budget to $s_{\max}=5$, a moderate value within the sweep range considered in Appendix~\ref{app:Window_shift}. This provides a balanced setting for comparing code choices without overly favoring configurations that benefit from a larger alignment-search budget.

As the code length increases, we observe a clear trade-off between detection power and false positive control. Shorter codes such as $(15,5,3)$ achieve high TPR but suffer from significantly elevated FPR, particularly for the calibrated variant. In contrast, the longest code $(63,7,15)$ consistently achieves near-zero FPR across all settings, demonstrating strong robustness in terms of false positive control. However, this improvement comes at the cost of reduced detection power under insertion attacks, where $(63,7,15)$ exhibits lower TPR due to increased sensitivity to synchronization errors.

Despite this limitation, we prioritize reliable false positive control as the primary design objective. Therefore, we adopt $(63,7,15)$ as the default BCH configuration in subsequent experiments, as it provides the strongest false-positive control while maintaining acceptable detection performance. In the main insertion/deletion experiments, we use the larger default shift budget $s_{\max}=10$ to improve alignment robustness, as analyzed in Appendix~\ref{app:Window_shift}.

\subsection{Hit-Rate Sweep under Fixed False Positive Constraint}
\label{app:hit-rate-sweep}

\begin{table}[t]
\centering
\caption{Effect of hit-rate $p^\star$ on detection performance under the clean setting ($T=500$).}
\label{tab:hit_rate_sweep}
\begin{tabular}{l c c c c}
\toprule
Scheme & $p^\star$ & TPR & FPR & Match rate (\%) \\
\midrule
CORE-BREW-Strict
& 0.6 & 0.0283 & 0.0000 & 1.42 \\
& 0.7 & 0.5533 & 0.0000 & 34.58 \\
& 0.8 & 0.8983 & 0.0000 & 85.50 \\
& 0.9 & 0.9033 & 0.0000 & 88.58 \\
\midrule
CORE-BREW-Cal
& 0.6 & 0.0350 & 0.0000 & 1.75 \\
& 0.7 & 0.6433 & 0.0000 & 44.17 \\
& 0.8 & 0.9100 & 0.0000 & 87.83 \\
& 0.9 & 0.9400 & 0.0017 & 92.58 \\
\bottomrule
\end{tabular}
\end{table}

We analyze how the target hit-rate $p^\star$ affects detection performance under the clean setting. Table~\ref{tab:hit_rate_sweep} reports the corresponding TPR, FPR, and match rate as $p^\star$ varies. As $p^\star$ increases, both CORE-BREW variants exhibit a clear and monotonic improvement in TPR while maintaining near-zero FPR across all settings. In particular, TPR rises sharply from low values at $p^\star=0.6$ to near-saturation at $p^\star \in \{0.8, 0.9\}$, a trend that is consistently reflected in the block-level match rate, indicating increasingly reliable recovery of designated codewords. Comparing the two variants, CORE-BREW-Cal achieves consistently higher TPR and match rate than CORE-BREW-Strict, at the cost of a slight increase in FPR at high $p^\star$. Overall, these results demonstrate that $p^\star$ serves as an effective control knob for improving detection reliability while preserving strict false positive control, with performance stabilizing at high values (e.g., $p^\star=0.9$), which we adopt as the default operating point.

\subsection{Ablation on Entropy-Aware Erasures}
\label{app:erasure-ablation}

\begin{figure}[t]
  \begin{center}
    \includegraphics[width=0.7\columnwidth]{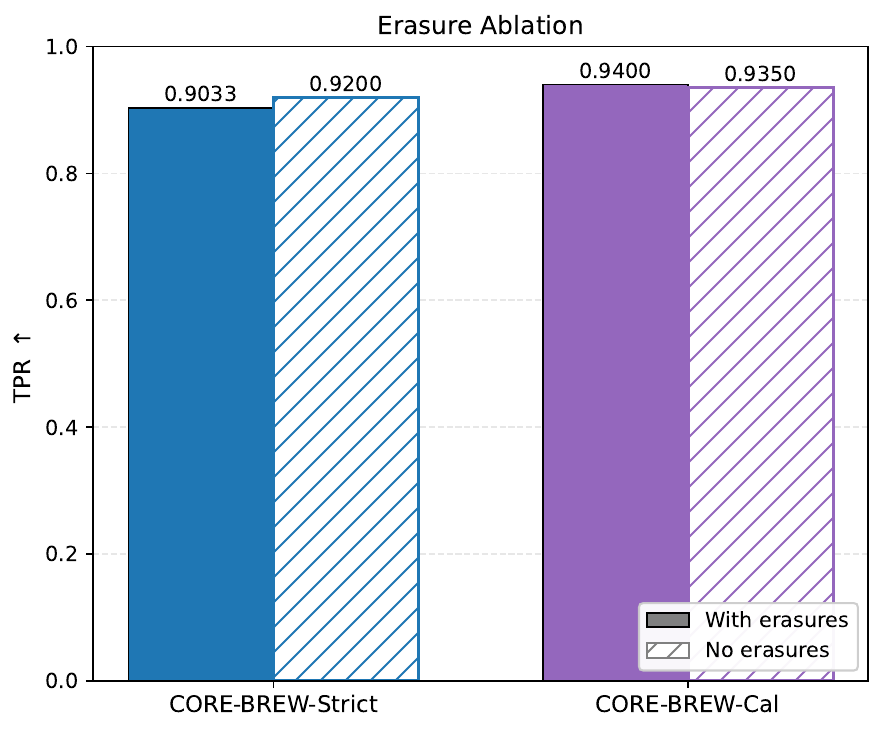}
    \caption{Ablation on entropy-aware erasures under the clean setting at $p^\star=0.9$ and $T=500$. Bars compare TPR with and without erasures for CORE-BREW-Strict and CORE-BREW-Cal. CORE-BREW-Cal exhibits nearly unchanged TPR across the two settings, indicating that calibrated detection compensates well for erasure-induced evidence reduction while preserving clean detection performance.}
    \label{fig:erasure_ablation_p09}
  \end{center}
\end{figure}

\begin{table}[t]
\centering
\caption{Ablation on entropy-aware erasures under the clean setting ($T=500$). We compare CORE-BREW with and without erasures as the target hit-rate $p^\star$ varies.}
\label{tab:erasure_ablation}
\setlength{\tabcolsep}{5pt}
\begin{tabular}{l c cc cc}
\toprule
Scheme & $p^\star$
& \multicolumn{2}{c}{With erasures}
& \multicolumn{2}{c}{No erasures} \\
\cmidrule(lr){3-4} \cmidrule(lr){5-6}
& & TPR & FPR & TPR & FPR \\
\midrule
CORE-BREW-Strict
& 0.6 & 0.0283 & 0.0000 & 0.0400 & 0.0000 \\
& 0.7 & 0.5533 & 0.0000 & 0.6150 & 0.0000 \\
& 0.8 & 0.8983 & 0.0000 & 0.9000 & 0.0000 \\
& 0.9 & 0.9033 & 0.0000 & 0.9200 & 0.0000 \\
\midrule
CORE-BREW-Cal
& 0.6 & 0.0350 & 0.0000 & 0.0500 & 0.0000 \\
& 0.7 & 0.6433 & 0.0000 & 0.7000 & 0.0000 \\
& 0.8 & 0.9100 & 0.0000 & 0.9050 & 0.0000 \\
& 0.9 & 0.9400 & 0.0017 & 0.9350 & 0.0000 \\
\bottomrule
\end{tabular}
\end{table}




Table~\ref{tab:erasure_ablation} ablates the entropy-aware erasure mechanism under the clean setting, and Figure~\ref{fig:erasure_ablation_p09} visualizes the comparison at $p^\star=0.9$. Since this ablation is intended to isolate text-level detection behavior, we report only TPR and FPR; distinct match rate is not measured for the no-erasure variant.

Removing erasures can slightly increase TPR in some low-hit-rate settings because more token positions contribute watermark evidence. For example, at $p^\star=0.7$, TPR increases from 0.5533 to 0.6150 for CORE-BREW-Strict and from 0.6433 to 0.7000 for CORE-BREW-Cal. However, as $p^\star$ increases, the gap becomes small. This trend is also visible in Figure~\ref{fig:erasure_ablation_p09}: at $p^\star=0.9$, CORE-BREW-Strict shows only a modest TPR increase when erasures are removed, from 0.9033 to 0.9200, while CORE-BREW-Cal remains nearly unchanged, changing from 0.9400 with erasures to 0.9350 without erasures. This suggests that the calibrated detector is largely insensitive to the presence of erasures and can compensate well for the reduced amount of usable evidence introduced by entropy-aware erasure.

Importantly, both with- and without-erasure variants maintain near-zero FPR across all settings. These results suggest that entropy-aware erasures are not primarily introduced to improve clean detection accuracy. Rather, they serve as a distortion-control safeguard by avoiding excessive logit shifts in low-entropy contexts, while preserving comparable clean detection behavior. In particular, Figure~\ref{fig:erasure_ablation_p09} highlights that CORE-BREW-Cal achieves almost the same TPR with erasures as without erasures, supporting the use of erasures as a safeguard that does not sacrifice detection reliability.

\subsection{Window-Shift Sensitivity under Insertion Attacks}
\label{app:Window_shift}

\begin{figure}[t]
  \begin{center}
    \includegraphics[width=\columnwidth]{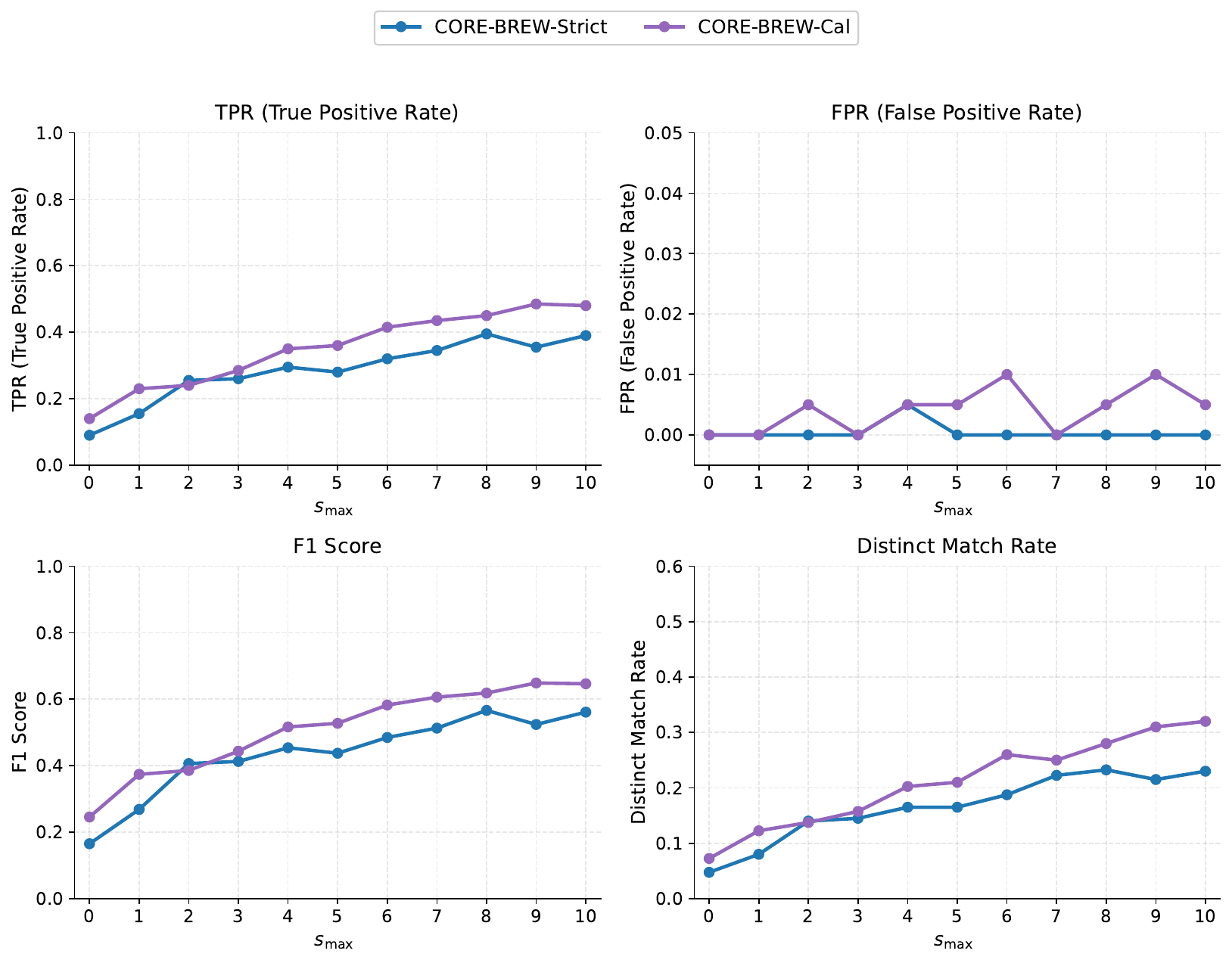}
    \caption{Sensitivity of detection performance to the window-shift budget $s_{\max}$ under a 10\% insertion attack, with $p^\star = 0.9$. We report TPR, FPR, F1 score, and distinct match rate as functions of $s_{\max} \in \{0,\dots,10\}$. Increasing $s_{\max}$ improves alignment robustness and detection performance, while introducing a modest increase in false positives, illustrating the trade-off between alignment tolerance and false positive control.}
    \label{fig:S_max_Performance_Analysis}
  \end{center}
\end{figure}

We analyze the sensitivity of detection performance to the window-shift budget $s_{\max}$ under insertion attacks, which explicitly disrupt token--block alignment. Unlike the hit-rate $p^\star$, which controls the reliability of the induced bit channel, $s_{\max}$ governs the degree of local realignment allowed during detection, introducing a trade-off between robustness to misalignment and false positive control.

Figure~\ref{fig:S_max_Performance_Analysis} reports TPR, FPR, F1 score, and distinct match rate as functions of $s_{\max} \in \{0,\dots,10\}$ under a 10\% insertion attack, with $p^\star = 0.9$. Results are shown for both C4 and OpenGen datasets, and for the CORE-BREW-Strict and CORE-BREW-Cal variants.
As $s_{\max}$ increases, both variants exhibit consistent improvements in TPR and match rate, reflecting increased tolerance to local misalignment introduced by token insertions. Notably, the gains are most significant in the low-to-moderate range of $s_{\max}$ and gradually saturate at higher values. At the same time, FPR remains near zero for CORE-BREW-Strict and increases only marginally for CORE-BREW-Cal, indicating that improved alignment does not significantly compromise false positive control.

These results suggest that increasing $s_{\max}$ provides substantial benefits in robustness without introducing severe false positive degradation. Based on this observation, we adopt $s_{\max}=10$ as the default setting for insertion and deletion attacks in the main experiments, as it offers strong alignment robustness with minimal impact on false positive behavior.

\subsection{Additional Results on Mistral-7B}
\label{app:mistral-results}

\begin{figure}[t]
    \centering
    \includegraphics[width=\linewidth]{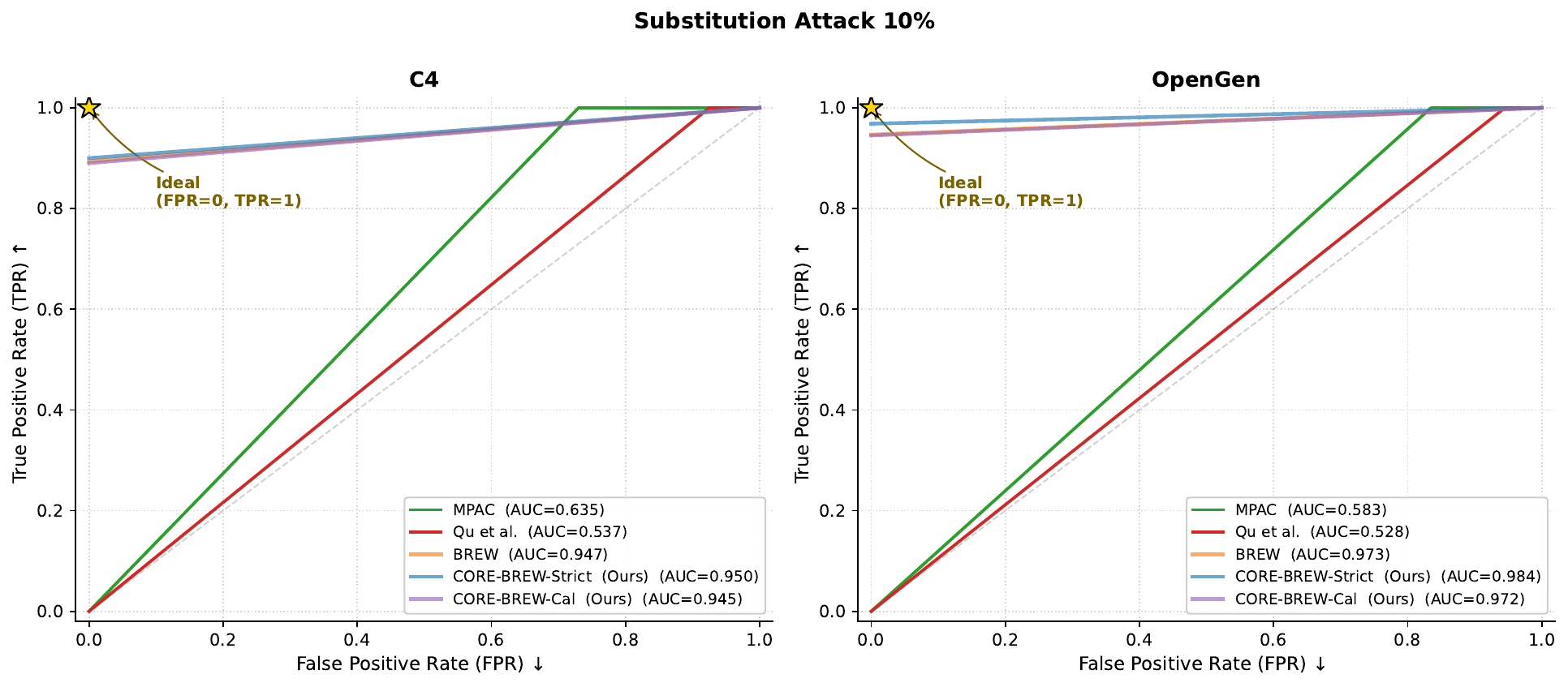}
    \caption{ROC curves under 10\% token-level substitution attacks on C4 (left) and OpenGen (right). CORE-BREW and BREW~\cite{kim2026blockwisecodewordembeddingreliable} maintain strong discrimination in the low-FPR region, while MPAC and Qu et al.~\cite{qu2025provably} degrade toward near-random performance.}
    \label{fig:mistral_sub}
\end{figure}

\begin{figure}[t]
    \centering
    \includegraphics[width=\linewidth]{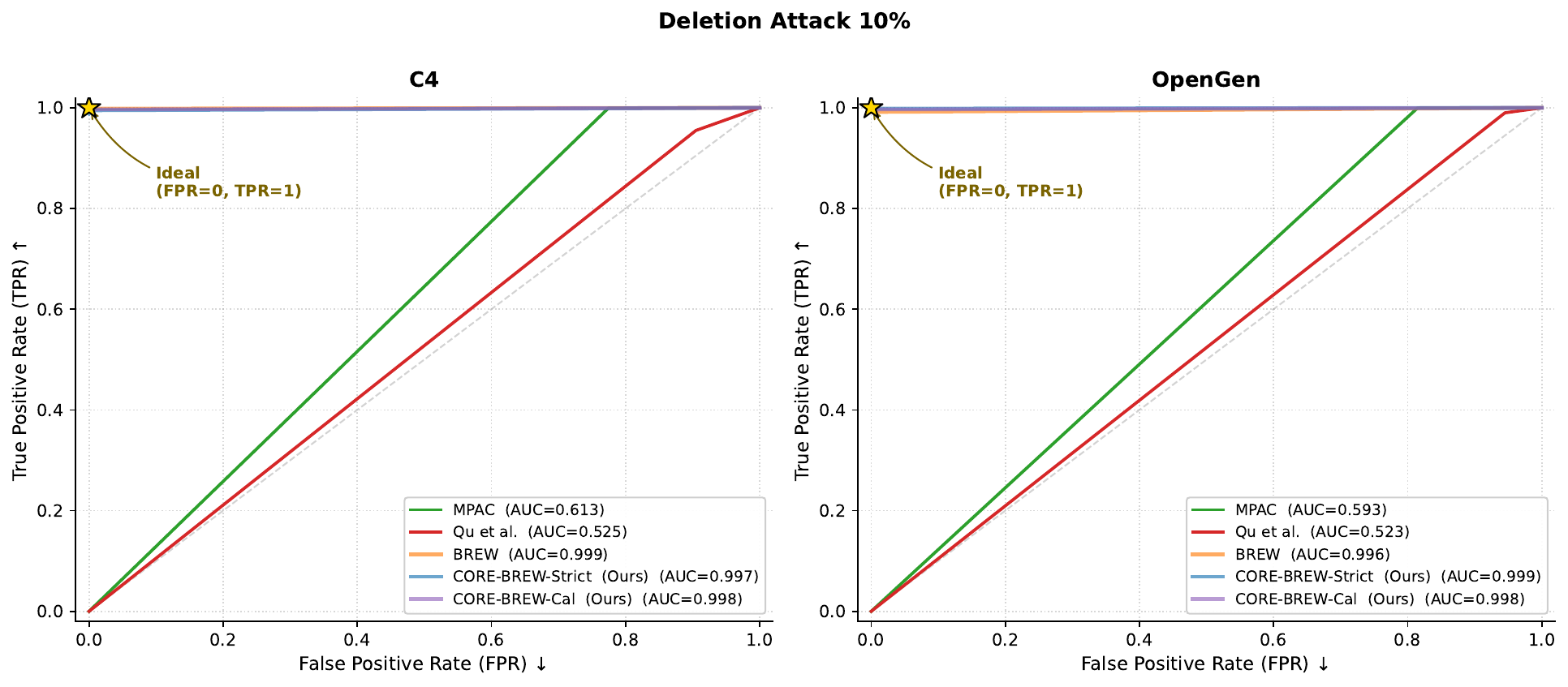}
    \caption{ROC curves under 10\% deletion attacks on C4 (left) and OpenGen (right). CORE-BREW and BREW~\cite{kim2026blockwisecodewordembeddingreliable} achieve near-perfect detection with strong low-FPR behavior.}
    \label{fig:mistral_del}
\end{figure}

\begin{figure}[t]
    \centering
    \includegraphics[width=\linewidth]{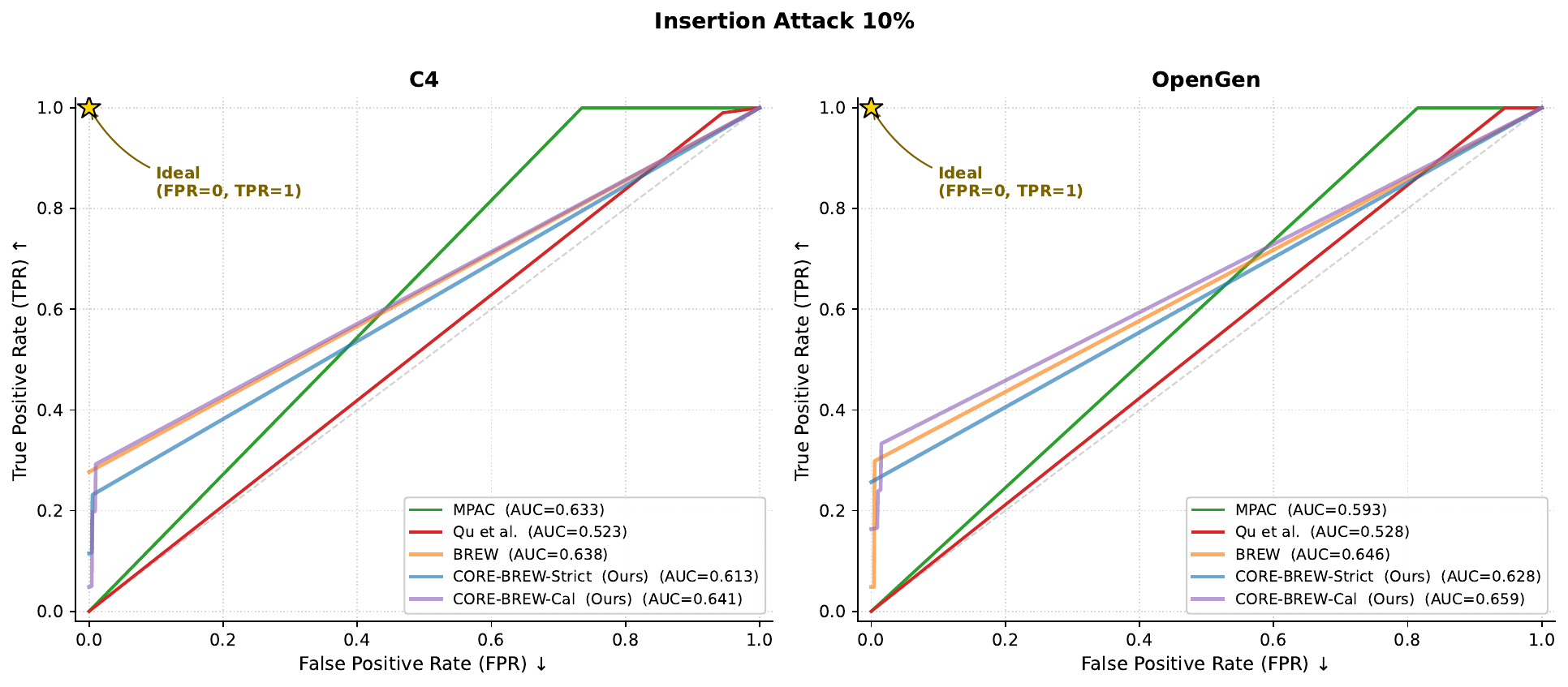}
    \caption{ROC curves under 10\% insertion attacks on C4 (left) and OpenGen (right). All methods experience performance degradation, but CORE-BREW maintains stronger discrimination than MPAC and Qu et al.~\cite{qu2025provably}, particularly in the low-FPR region.}
    \label{fig:mistral_ins}
\end{figure}


This appendix reports additional robustness results under synthetic token-level attacks using Mistral-7B as the backbone model. All experiments follow the same evaluation protocol, attack configurations, and parameter settings as those described in Section~\ref{sec:synthetic-attacks}.

Figure~\ref{fig:mistral_sub} presents results under substitution attacks, while Figures~\ref{fig:mistral_del} and \ref{fig:mistral_ins} report results under deletion and insertion attacks, respectively. Overall, the results on Mistral-7B exhibit trends consistent with those observed for OPT-1.3B.
Under substitution attacks, CORE-BREW maintains strong discrimination in the low-FPR region, whereas MPAC and Qu et al.~\cite{qu2025provably} degrade toward near-random performance. Under deletion, detection achieves near-perfect discrimination, with both CORE-BREW and BREW approaching ideal ROC behavior. Insertion remains the most challenging setting, where all methods degrade; however, CORE-BREW consistently maintains stronger discrimination than MPAC and Qu et al., particularly at low FPR.

\subsection{Detailed Results under Synthetic Token-Level Attacks}
\label{Appendix:Synthetic_Token_Level_Attacks}

\begin{table*}[t]
\centering
\caption{OPT results on the C4 dataset under 10\% substitution, deletion, and insertion attacks. We report TPR/FPR with 95\% confidence intervals, shown as [lower, upper]. For MPAC~\cite{yoo2024advancing}, Qu et al.~\cite{qu2025provably}, and BREW~\cite{kim2026blockwisecodewordembeddingreliable}, the parameter denotes $\delta$; for CORE-BREW-Strict and CORE-BREW-Cal, it denotes the target hit-rate setting.}
\label{tab:opt_c4_attack_tpr_fpr}
\resizebox{\textwidth}{!}{%
\begin{tabular}{lccccccc}
\toprule
& & \multicolumn{2}{c}{Substitution 10\%} & \multicolumn{2}{c}{Deletion 10\%} & \multicolumn{2}{c}{Insertion 10\%} \\
\cmidrule(lr){3-4}\cmidrule(lr){5-6}\cmidrule(lr){7-8}
Algorithm & Parameter & TPR & FPR & TPR & FPR & TPR & FPR \\
\midrule
MPAC
& $\delta=3$
& $1.000\,[0.991,1.000]$ & $0.840\,[0.789,0.891]$
& $1.000\,[0.991,1.000]$ & $0.825\,[0.772,0.878]$
& $1.000\,[0.991,1.000]$ & $0.840\,[0.789,0.891]$ \\

Qu et al.~\cite{qu2025provably}
& $\delta=3$
& $0.965\,[0.938,0.992]$ & $0.940\,[0.906,0.974]$
& $0.955\,[0.925,0.985]$ & $0.945\,[0.913,0.977]$
& $0.965\,[0.938,0.992]$ & $0.945\,[0.913,0.977]$ \\

BREW
& $\delta=3$
& $0.655\,[0.580,0.730]$ & $0.000\,[0.000,0.009]$
& $0.990\,[0.990,0.990]$ & $0.000\,[0.000,0.009]$
& $0.448\,[0.323,0.573]$ & $0.002\,[0.000,0.009]$ \\

CORE-BREW-Strict
& $p^\star=0.9$
& $0.657\,[0.553,0.761]$ & $0.000\,[0.000,0.009]$
& $0.985\,[0.952,1.000]$ & $0.003\,[0.000,0.010]$
& $0.442\,[0.416,0.468]$ & $0.003\,[0.000,0.017]$ \\

CORE-BREW-Cal
& $p^\star=0.9$
& $0.673\,[0.600,0.746]$ & $0.000\,[0.000,0.009]$
& $0.985\,[0.973,0.997]$ & $0.002\,[0.000,0.009]$
& $0.460\,[0.403,0.517]$ & $0.003\,[0.000,0.010]$ \\
\bottomrule
\end{tabular}%
}
\end{table*}

\begin{table*}[t]
\centering
\caption{OPT results on the OpenGen dataset under 10\% substitution, deletion, and insertion attacks. We report TPR/FPR with 95\% confidence intervals, shown as [lower, upper]. For MPAC~\cite{yoo2024advancing}, Qu et al.~\cite{qu2025provably}, and BREW~\cite{kim2026blockwisecodewordembeddingreliable}, the parameter denotes $\delta$; for CORE-BREW-Strict and CORE-BREW-Cal, it denotes the target hit-rate setting.}
\label{tab:opt_opengen_attack_tpr_fpr}
\resizebox{\textwidth}{!}{%
\begin{tabular}{lccccccc}
\toprule
& & \multicolumn{2}{c}{Substitution 10\%} & \multicolumn{2}{c}{Deletion 10\%} & \multicolumn{2}{c}{Insertion 10\%} \\
\cmidrule(lr){3-4}\cmidrule(lr){5-6}\cmidrule(lr){7-8}
Algorithm & Parameter & TPR & FPR & TPR & FPR & TPR & FPR \\
\midrule
MPAC
& $\delta=3$
& $1.000\,[0.991,1.000]$ & $0.880\,[0.835,0.925]$
& $1.000\,[0.991,1.000]$ & $0.825\,[0.772,0.878]$
& $1.000\,[0.991,1.000]$ & $0.895\,[0.852,0.938]$ \\

Qu et al.~\cite{qu2025provably}
& $\delta=3$
& $1.000\,[0.991,1.000]$ & $0.940\,[0.906,0.974]$
& $1.000\,[0.991,1.000]$ & $0.945\,[0.913,0.977]$
& $1.000\,[0.991,1.000]$ & $0.960\,[0.932,0.988]$ \\

BREW
& $\delta=3$
& $0.673\,[0.629,0.717]$ & $0.000\,[0.000,0.009]$
& $1.000\,[1.000,1.000]$ & $0.003\,[0.000,0.010]$
& $0.428\,[0.402,0.454]$ & $0.002\,[0.000,0.009]$ \\

CORE-BREW-Strict
& $p^\star=0.9$
& $0.692\,[0.598,0.786]$ & $0.000\,[0.000,0.009]$
& $1.000\,[1.000,1.000]$ & $0.002\,[0.000,0.009]$
& $0.397\,[0.359,0.435]$ & $0.003\,[0.000,0.010]$ \\

CORE-BREW-Cal
& $p^\star=0.9$
& $0.713\,[0.675,0.751]$ & $0.000\,[0.000,0.009]$
& $1.000\,[1.000,1.000]$ & $0.007\,[0.000,0.026]$
& $0.467\,[0.375,0.559]$ & $0.007\,[0.000,0.026]$ \\
\bottomrule
\end{tabular}%
}
\end{table*}

Tables~\ref{tab:opt_c4_attack_tpr_fpr} and \ref{tab:opt_opengen_attack_tpr_fpr} summarize the OPT-based robustness results under 10\% substitution, deletion, and insertion attacks. While MPAC~\cite{yoo2024advancing} and Qu et al.~\cite{qu2025provably} obtain high TPR, they also incur very high FPR across all attacks. By contrast, BREW~\cite{kim2026blockwisecodewordembeddingreliable} and CORE-BREW variants maintain near-zero FPR while preserving strong robustness, especially under deletion. Insertion remains the most challenging setting because it disrupts block alignment, but CORE-BREW-Cal consistently achieves the best insertion TPR among the CORE-BREW variants with only marginal FPR.


\begin{table*}[t]
\centering
\caption{Mistral results on the C4 dataset under 10\% substitution, deletion, and insertion attacks. We report TPR/FPR with 95\% confidence intervals, shown as [lower, upper]. For MPAC~\cite{yoo2024advancing}, Qu et al.~\cite{qu2025provably}, and BREW~\cite{kim2026blockwisecodewordembeddingreliable}, the parameter denotes $\delta$; for CORE-BREW-Strict and CORE-BREW-Cal, it denotes the target hit-rate setting.}
\label{tab:mistral_c4_attack_tpr_fpr}
\resizebox{\textwidth}{!}{%
\begin{tabular}{lccccccc}
\toprule
& & \multicolumn{2}{c}{Substitution 10\%} & \multicolumn{2}{c}{Deletion 10\%} & \multicolumn{2}{c}{Insertion 10\%} \\
\cmidrule(lr){3-4}\cmidrule(lr){5-6}\cmidrule(lr){7-8}
Algorithm & Parameter & TPR & FPR & TPR & FPR & TPR & FPR \\
\midrule
MPAC 
& $\delta=3$
& $1.000\,[0.991,1.000]$ & $0.730\,[0.669,0.791]$
& $1.000\,[0.991,1.000]$ & $0.775\,[0.717,0.833]$
& $1.000\,[0.991,1.000]$ & $0.735\,[0.674,0.796]$ \\

Qu et al.~\cite{qu2025provably}
& $\delta=3$
& $1.000\,[0.991,1.000]$ & $0.925\,[0.888,0.962]$
& $1.000\,[0.991,1.000]$ & $0.905\,[0.864,0.946]$
& $1.000\,[0.991,1.000]$ & $0.905\,[0.864,0.946]$ \\

BREW
& $\delta=3$
& $0.893\,[0.830,0.956]$ & $0.000\,[0.000,0.009]$
& $0.998\,[0.991,1.000]$ & $0.003\,[0.000,0.017]$
& $0.277\,[0.216,0.338]$ & $0.000\,[0.000,0.009]$ \\

CORE-BREW-Strict
& $p^\star=0.9$
& $0.900\,[0.875,0.925]$ & $0.000\,[0.000,0.009]$
& $0.995\,[0.983,1.000]$ & $0.002\,[0.000,0.009]$
& $0.230\,[0.193,0.267]$ & $0.003\,[0.000,0.010]$ \\

CORE-BREW-Cal
& $p^\star=0.9$
& $0.900\,[0.858,0.942]$ & $0.000\,[0.000,0.009]$
& $0.997\,[0.990,1.000]$ & $0.002\,[0.000,0.009]$
& $0.290\,[0.268,0.312]$ & $0.007\,[0.000,0.014]$ \\
\bottomrule
\end{tabular}%
}
\end{table*}

\begin{table*}[t]
\centering
\caption{Mistral results on the OpenGen dataset under 10\% substitution, deletion, and insertion attacks. We report TPR/FPR with 95\% confidence intervals, shown as [lower, upper]. For MPAC~\cite{yoo2024advancing}, Qu et al.~\cite{qu2025provably}, and BREW~\cite{kim2026blockwisecodewordembeddingreliable}, the parameter denotes $\delta$; for CORE-BREW-Strict and CORE-BREW-Cal, it denotes the target hit-rate setting.}
\label{tab:mistral_opengen_attack_tpr_fpr}
\resizebox{\textwidth}{!}{%
\begin{tabular}{lccccccc}
\toprule
& & \multicolumn{2}{c}{Substitution 10\%} & \multicolumn{2}{c}{Deletion 10\%} & \multicolumn{2}{c}{Insertion 10\%} \\
\cmidrule(lr){3-4}\cmidrule(lr){5-6}\cmidrule(lr){7-8}
Algorithm & Parameter & TPR & FPR & TPR & FPR & TPR & FPR \\
\midrule
MPAC
& $\delta=3$
& $1.000\,[0.991,1.000]$ & $0.835\,[0.784,0.886]$
& $1.000\,[0.991,1.000]$ & $0.815\,[0.761,0.869]$
& $1.000\,[0.991,1.000]$ & $0.815\,[0.761,0.869]$ \\

Qu et al.~\cite{qu2025provably}
& $\delta=3$
& $1.000\,[0.991,1.000]$ & $0.945\,[0.913,0.977]$
& $0.990\,[0.974,1.000]$ & $0.945\,[0.913,0.977]$
& $1.000\,[0.991,1.000]$ & $0.945\,[0.913,0.977]$ \\

BREW
& $\delta=3$
& $0.947\,[0.940,0.954]$ & $0.000\,[0.000,0.009]$
& $0.992\,[0.966,1.000]$ & $0.000\,[0.000,0.009]$
& $0.298\,[0.211,0.385]$ & $0.005\,[0.005,0.005]$ \\

CORE-BREW-Strict
& $p^\star=0.9$
& $0.968\,[0.918,1.000]$ & $0.000\,[0.000,0.009]$
& $0.998\,[0.991,1.000]$ & $0.002\,[0.000,0.009]$
& $0.257\,[0.136,0.378]$ & $0.000\,[0.000,0.009]$ \\

CORE-BREW-Cal
& $p^\star=0.9$
& $0.945\,[0.913,0.977]$ & $0.000\,[0.000,0.009]$
& $0.997\,[0.983,1.000]$ & $0.000\,[0.000,0.009]$
& $0.328\,[0.228,0.428]$ & $0.008\,[0.000,0.027]$ \\
\bottomrule
\end{tabular}%
}
\end{table*}

Tables~\ref{tab:mistral_c4_attack_tpr_fpr} and \ref{tab:mistral_opengen_attack_tpr_fpr} report Mistral-based robustness under 10\% substitution, deletion, and insertion attacks on C4 and OpenGen. We present TPR and FPR together to directly compare detection sensitivity and false-positive behavior. Although MPAC~\cite{yoo2024advancing} and Qu et al.~\cite{qu2025provably} achieve near-perfect TPR, their FPR remains very high across all attacks, often exceeding 0.7 and reaching above 0.9. In contrast, BREW~\cite{kim2026blockwisecodewordembeddingreliable} and CORE-BREW variants maintain near-zero FPR while preserving strong robustness under substitution and deletion. Insertion is the most challenging setting because it disrupts block alignment, but CORE-BREW-Cal achieves the highest insertion TPR among the CORE-BREW variants on both datasets, with only marginal FPR.

\subsection{Detailed Results under Paraphrasing Attacks}
\label{Appendix:Paraphrasing_attacks}

\begin{table*}[t]
\centering
\caption{Detection performance under T5 paraphrasing attacks on C4 and OpenGen. We report TPR/FPR with approximate 95\% confidence intervals and match rates in percent.}
\label{tab:t5_paraphrasing}
\setlength{\tabcolsep}{4pt}
\resizebox{\textwidth}{!}{%
\begin{tabular}{l c ccc ccc}
\toprule
& & \multicolumn{3}{c}{\textbf{C4}} & \multicolumn{3}{c}{\textbf{OpenGen}} \\
\cmidrule(lr){3-5} \cmidrule(lr){6-8}
\textbf{Algorithm} & \textbf{Parameter}
& \textbf{TPR} & \textbf{FPR} & \textbf{Match rate (\%)}
& \textbf{TPR} & \textbf{FPR} & \textbf{Match rate (\%)} \\
\midrule
MPAC
& $\delta = 2$
& $0.990_{\pm 0.016}$ & $0.560_{\pm 0.068}$ & --
& $1.000_{\pm 0.009}$ & $0.660_{\pm 0.065}$ & -- \\

& $\delta = 3$
& $1.000_{\pm 0.009}$ & $0.580_{\pm 0.068}$ & --
& $1.000_{\pm 0.009}$ & $0.630_{\pm 0.066}$ & -- \\
\midrule
Qu et al.~\cite{qu2025provably}
& $\delta = 2$
& $0.920_{\pm 0.038}$ & $0.930_{\pm 0.036}$ & --
& $0.960_{\pm 0.028}$ & $0.960_{\pm 0.028}$ & -- \\

& $\delta = 3$
& $0.960_{\pm 0.028}$ & $0.970_{\pm 0.025}$ & --
& $0.960_{\pm 0.028}$ & $0.920_{\pm 0.038}$ & -- \\
\midrule
BREW
& $\delta = 2$
& $0.330_{\pm 0.065}$ & $0.000_{\pm 0.009}$ & $31.50_{\pm 6.4}$
& $0.510_{\pm 0.069}$ & $0.000_{\pm 0.009}$ & $48.00_{\pm 6.9}$ \\

& $\delta = 3$
& $0.540_{\pm 0.068}$ & $0.000_{\pm 0.009}$ & $53.00_{\pm 6.9}$
& $0.610_{\pm 0.067}$ & $0.000_{\pm 0.009}$ & $59.50_{\pm 6.7}$ \\
\midrule
CORE-BREW-Strict
& $p^\star = 0.8$
& $0.410_{\pm 0.068}$ & $0.000_{\pm 0.009}$ & $33.50_{\pm 6.5}$
& $0.560_{\pm 0.068}$ & $0.000_{\pm 0.009}$ & $43.50_{\pm 6.8}$ \\

& $p^\star = 0.9$
& $0.590_{\pm 0.068}$ & $0.000_{\pm 0.009}$ & $57.50_{\pm 6.8}$
& $0.640_{\pm 0.066}$ & $0.000_{\pm 0.009}$ & $63.00_{\pm 6.6}$ \\
\midrule
CORE-BREW-Cal
& $p^\star = 0.8$
& $0.460_{\pm 0.068}$ & $0.000_{\pm 0.009}$ & $38.00_{\pm 6.7}$
& $0.580_{\pm 0.068}$ & $0.000_{\pm 0.009}$ & $49.50_{\pm 6.9}$ \\

& $p^\star = 0.9$
& $0.640_{\pm 0.066}$ & $0.000_{\pm 0.009}$ & $63.50_{\pm 6.6}$
& $0.630_{\pm 0.066}$ & $0.000_{\pm 0.009}$ & $63.00_{\pm 6.6}$ \\
\bottomrule
\end{tabular}%
}
\end{table*}

Table~\ref{tab:t5_paraphrasing} reports detection performance under T5-based paraphrasing attacks on C4 and OpenGen ($T=200$), including text-level TPR, FPR, and block-level match rate.
MPAC~\cite{yoo2024advancing} maintains high TPR under paraphrasing but incurs consistently high FPR ($>0.5$), indicating a lack of effective false positive control. Similarly, Qu et al.~\cite{qu2025provably} exhibits extremely high FPR (up to $\sim$0.95), approaching random-guess behavior despite high TPR.
BREW~\cite{kim2026blockwisecodewordembeddingreliable} achieves strict false positive control (FPR $\approx 0$), but suffers from relatively low TPR, reflecting its conservative bounded-distance acceptance.

In contrast, CORE-BREW variants maintain strict FPR control while achieving substantially higher TPR and match rates. In particular, CORE-BREW-Cal consistently provides the best trade-off, achieving the highest TPR and match rate across datasets without sacrificing false positive control. These results demonstrate that CORE-BREW enables robust and reliable detection even under strong semantic rewriting.
\FloatBarrier

\section{Implementation Details and Reproducibility}
\label{app:impl_repro_details}


This appendix summarizes implementation choices and settings required to reproduce our results,
with special emphasis on: (i) the separation between Strict-Safe and FPR-Calibrated detection modes
(Sections~\ref{sec:method}--\ref{sec:experiments}), (ii) the entropy-aware safeguards for low-entropy contexts, including erasure handling
and bias control, (iii) detection thresholds, score-based rejection, and FPR--TPR evaluation, and
(iv) the licenses and terms of use for external assets used in our experiments. Our implementation builds on standard open-source tooling for deep learning and text processing~\cite{Paszke2019,wolf2020transformers}.

\subsection{Codebase Organization}
The codebase is organized as a Python package. We implement the designated-codeword baseline components following the standard block-wise framework described in recent literature~\cite{qu2025provably,kim2026blockwisecodewordembeddingreliable}. The package includes modules for:
(i) keyed vocabulary partitioning and payload-to-codeword construction,
(ii) embedding, including logit interception and Constant Hit-Rate biasing,
(iii) extraction and detection, including bit extraction, window shifting, decoding, and scoring,
(iv) attack pipelines for token edits and paraphrasing,
and (v) evaluation, including metric computation and result aggregation.
Entry-point scripts provide executable examples for running generation and detection and for producing the corresponding output summaries. 
The released code is intended to support reproducibility of the main CORE-BREW pipeline, while full-scale table and figure generation uses the saved artifacts and evaluation logs described below.

To support reproducibility and auditing, we save intermediate artifacts, including generated texts (watermarked and unwatermarked), attacked texts, per-block matches, and per-block scores.
This allows evaluation and plotting to be rerun without regenerating text.

\subsection{Software and hardware environment}
\label{app:hardware}
\paragraph{Software stack.}
All experiments are implemented in Python and use:
PyTorch for model execution~\cite{Paszke2019},
Hugging Face Transformers for model loading and generation~\cite{wolf2020transformers}, and SentencePiece for models requiring subword tokenization~\cite{Kudo2018}.
We use standard scientific Python utilities (NumPy/SciPy) for evaluation and statistics.

\paragraph{Hardware.}
Generation is run on CUDA-enabled GPUs when available. Detection can be run on CPU for offline or batched evaluation, although model-aware erasure computation requires access to the generator or an equivalent scoring model.


\paragraph{Compute resources.}
All experiments are inference-time evaluations; no model training or fine-tuning is performed. Experiments were run on a local CUDA 13.0 server with four NVIDIA RTX PRO 6000 Blackwell GPUs, each with 96GB of VRAM, an Intel Xeon 6740P CPU with 96 physical cores, and 1.5TB of RAM.

The dominant compute costs are autoregressive text generation, attack generation, and model-aware detection, which requires additional forward passes to recompute token-level probabilities and erasure indicators. Token-only detection and metric aggregation are lightweight and can be run offline from cached outputs. We cache generated texts, attacked texts, per-block matches, and per-block scores, so downstream analyses such as threshold sweeps, ROC construction, and summary table/figure generation can be run without regenerating model outputs. Table~\ref{tab:compute-per-run-delta3-p09} reports approximate per-run compute for the main $\delta=3$ and $p^\star=0.9$ experiments, reconstructed from completed notebook runtime logs.

\begin{table}[t]
\centering
\small
\caption{Approximate per-run compute for the main $\delta=3$ and $p^\star=0.9$ experiments. Nominal 4-GPU hours are computed as $4\times$ wall-clock hours.}
\setlength{\tabcolsep}{8pt}
\begin{tabular}{l c c}
\toprule
\textbf{Algorithm} & \textbf{Runtime/run} & \textbf{Nominal 4-GPU hours/run} \\
\midrule
MPAC & 3:53:37 (3.89 h) & 15.57 \\
Qu et al.~\cite{qu2025provably} & 3:59:10 (3.99 h) & 15.94 \\
BREW & 1:41:26 (1.69 h) & 6.76 \\
CORE-BREW-Strict & 1:32:15 (1.54 h) & 6.15 \\
CORE-BREW-Cal & 1:42:22 (1.71 h) & 6.82 \\
\bottomrule
\end{tabular}
\label{tab:compute-per-run-delta3-p09}
\end{table}

\paragraph{Determinism notes.}
Exact bitwise determinism can be affected by nondeterministic GPU kernels and external paraphrasers.
We therefore report averages over multiple seeds and log the seeds, configurations, generated outputs, and detection summaries needed to audit the reported results.

\subsection{Models, tokenization, and datasets}
\paragraph{Models and checkpoints.}
We evaluate on open-source autoregressive LLM checkpoints loaded via Transformers~\cite{wolf2020transformers}.
Model identifiers (name, revision hash, tokenizer version) are recorded and written into each run's log.

\paragraph{Tokenization.}
All watermark operations (partition membership, bit extraction, and block reconstruction) are performed on token IDs.
Word-level attacks such as synonym substitution are filtered according to the token-level constraints required by each experiment.

\paragraph{Datasets and prompt sampling.}

We use C4 as the long-form corpus and OpenGen, consisting of short two-sentence prompts derived from WikiText-103, as the short-form prompt set. Prompt splits are kept disjoint across development and test evaluation. All reported numbers are computed on test prompts, and the same splits are reused across schemes and attacks for fair comparison.

\subsection{Asset licenses and terms of use}
\label{app:asset-licenses}

Table~\ref{tab:asset-licenses} summarizes the external assets used in our experiments. We use all assets under their stated licenses or terms of use and do not redistribute pretrained model checkpoints or third-party baseline code. The supplementary package includes small processed prompt/evaluation subsets derived from existing public datasets; the original dataset licenses and terms of use apply to these derived files.

\begin{table}[t]
    \centering
    \caption{External assets used in our experiments. We list dataset and model
    assets separately because the corresponding licenses and terms differ by
    asset. Detailed URLs, revisions, commit hashes, and package versions are
    provided in the supplementary README.}
    \label{tab:asset-licenses}
    \setlength{\tabcolsep}{4pt}
    \resizebox{\columnwidth}{!}{%
    \begin{tabular}{p{0.20\columnwidth} p{0.20\columnwidth} p{0.26\columnwidth} p{0.26\columnwidth}}
    \toprule
    \textbf{Asset} & \textbf{Role} & \textbf{License / terms} & \textbf{Use in this work} \\
    \midrule
    OPT-1.3B
    & Target LLM
    & OPT license / model terms
    & Used for inference-time evaluation; checkpoint not redistributed. \\
    \midrule
    Mistral-7B-v0.3
    & Target LLM
    & Apache-2.0
    & Used for inference-time evaluation; checkpoint not redistributed. \\
    \midrule
    C4 processed subset
    & Prompt/evaluation corpus
    & ODC-BY 1.0; Common Crawl terms
    & Processed prompt/evaluation subset included in the supplementary package; original terms apply. \\
    \midrule
    OpenGen processed subset
    & Prompt/evaluation corpus
    & Source dataset terms
    & Processed prompt/evaluation subset included in the supplementary package; original terms apply. \\
    \midrule
    WikiText-103, if used to construct OpenGen
    & Source corpus
    & CC BY-SA / GFDL according to the source release
    & Used only through the derived OpenGen prompt/evaluation subset. \\
    \midrule
    MarkLLM
    & Evaluation framework
    & Apache-2.0
    & Used as a unified watermarking evaluation framework with attribution. \\
    \midrule
    MPAC implementation
    & Baseline
    & Stated repository license
    & Used for research comparison; original notices preserved where applicable. \\
    \midrule
    Qu et al. implementation
    & Baseline
    & Stated repository license
    & Used for research comparison; original notices preserved where applicable. \\
    \midrule
    BREW implementation
    & Baseline
    & Stated repository license
    & Used for comparison; original notices preserved where applicable. \\
    \midrule
    PyTorch; NumPy; SciPy
    & Software libraries
    & BSD-style licenses
    & Used for model inference, numerical computation, and evaluation. \\
    \midrule
    Transformers; SentencePiece
    & Software libraries
    & Apache-2.0
    & Used for loading pretrained models and tokenizers. \\
    \bottomrule
    \end{tabular}%
    }
\end{table}

\FloatBarrier

\subsection{Generation and watermark embedding configuration}
\paragraph{Sampling settings.}
Unless otherwise stated, generation uses temperature $1.0$ and nucleus (top-$p$) sampling with $p=0.9$~\cite{holtzman2020curious}. We log all decoding parameters (temperature, top-$p$, max new tokens, stop criteria).

\paragraph{ECC and block parameters.}
We use a binary BCH code $C\subseteq\{0,1\}^n$ with code parameters $(n,k,t)$ as specified in Section~\ref{sec:experiments} and the appendix ablations~\cite{lin2004error,richardson2008modern}. The block length is set to $n$. The payload $m\in\{0,1\}^k$ is either fixed per run or sampled per text,
and the mapping from payload to per-block designated codewords $c^{(j)}$ is deterministic given the key and block index.


\paragraph{Constant hit-rate embedding.}
At each generation step $t$ (block $j$, offset $b$), we compute the base distribution $p_t(\cdot)$ from the model logits and the target-list mass $m_t=\sum_{v\in L^{(j)}_{z_t}}p_t(v)$. 
The raw Constant Hit-Rate bias is
\[
\delta_t^{\mathrm{raw}}
=
\log\!\left(\frac{p^\star}{1-p^\star}\right)
-
\log\!\left(\frac{m_t}{1-m_t}\right).
\]
For numerical stability, we clamp $m_t$ to $[\varepsilon,1-\varepsilon]$ and compute $m_t$ using log-sum-exp over the target list and its complement. The applied bias is then obtained after the distortion-guard procedure described below.

\paragraph{Distortion guards and erasures (low-entropy safety).}
To prevent extreme biases in low-entropy contexts, we first compute the raw Constant Hit-Rate bias, clamp it to the range $[-\delta_{\max},\delta_{\max}]$, and then compute the resulting post-bias target-list mass. If the base distribution is too peaked (i.e., $\max_v p_t(v)\ge q_{\max}$) or if the clamped bias cannot keep the target-list mass safely above $1/2$, we skip watermarking at step $t$ by setting $\delta_t=0$ and marking the position as an erasure. In detection, erasures contribute LLR $0$ (Section~\ref{sec:method}; Appendix~\ref{app:algorithms}, Algorithms~\ref{alg:chr_embedding} and~\ref{alg:detect}). We log the erasure rate and tail statistics of $|\delta_t|$ for diagnostic analysis.

\subsection{Detection implementation: Strict-Safe vs.\ FPR-Calibrated}
Detection follows the block-wise designated-codeword framework with window shifting~\cite{kim2026blockwisecodewordembeddingreliable}. For each block, we test candidate offsets $s\in\{-s_{\max},\ldots,s_{\max}\}$ around the nominal block anchor and compute either a hard-decision block or an LLR-based score (Appendix~\ref{app:algorithms}, Algorithm~\ref{alg:detect}).

\paragraph{Per-bit LLRs.}
Under Constant Hit-Rate embedding, each non-erased position has constant LLR magnitude
\[
\lambda=\log\!\left(\frac{p^\star}{1-p^\star}\right),
\]
with sign determined by keyed list membership. Erased positions contribute zero LLR.

\paragraph{Strict-Safe mode.}
Strict-Safe preserves the bounded-distance acceptance region of the designated-codeword baseline. For each candidate offset, erased positions are filled with key-derived PRF bits, the resulting hard-decision vector is decoded using a bounded-distance BCH decoder, and a block is accepted only if the decoded codeword equals the designated codeword. By construction, this mode does not expand the baseline bounded-distance acceptance region and therefore serves as a conservative detector with baseline-style FPR behavior.

\paragraph{FPR-Calibrated mode.}
FPR-Calibrated preserves erasures as zero-LLR positions and uses score-based rejection with reliability-aware list decoding. For each candidate offset, we compute the designated-codeword score
\[
S^{(j)}(s)=\sum_{b=0}^{n-1}(2c^{(j)}[b]-1)\Lambda_b^{(j,s)},
\]
and require the best candidate score to exceed a block-level threshold $\tau_{\mathrm{blk}}$, optionally with a margin condition. The list-decoding wrapper follows a Chase/OSD-style procedure: it flips a limited set of least-reliable positions, runs the bounded-distance decoder on each candidate, and selects the codeword maximizing the LLR-consistent score~\cite{chase1972,lin2004error,richardson2008modern}. This allows recovery of the designated codeword even when the raw hard decision lies beyond radius $t$, while false positives are controlled through score-based rejection and a bounded candidate budget.

\paragraph{Detection thresholds and list-decoding parameters.}
For text-level detection, a sample is classified as watermarked if at least one complete block is accepted as matching its designated codeword. Equivalently, the text-level threshold is set to require $M_{\mathrm{match}}\ge 1$. For CORE-BREW-Cal, a shifted candidate block is accepted only if the Chase-decoded codeword equals the designated codeword and its LLR score exceeds the block-level threshold $\tau_{\mathrm{blk}}$. Unless otherwise specified, we set $\tau_{\mathrm{blk}}=10.0$. For Chase-style list decoding, we sort bit positions by increasing reliability $|\Lambda_b|$, select the $\ell$ least reliable positions as flip candidates, and enumerate flip patterns up to a maximum list size $L$. We use $\ell=2$ and $L=32$, resulting in at most $\min(2^\ell,L)=4$ candidate hard-decision vectors per shifted block. For insertion and deletion attacks, we additionally search over $2s_{\max}+1$ shifted alignments per block.

\paragraph{Detector capability setting.}
When erasures are enabled, the detector recomputes the same erasure indicators from the observed prefix using the base model or an equivalent scoring model.
This model-aware setting is most appropriate for provider-side verification.
For token-only detection, erasures are disabled and the detector uses the constant-magnitude LLR rule without prefix probability recomputation; such settings are explicitly labeled in configurations and results.

\subsection{Threshold Selection and Reporting}

\paragraph{Reporting protocol.}
We report detection performance using text-level TPR and FPR, and additionally provide ROC curves when comparing trade-offs across thresholds. Unless otherwise stated, each method is evaluated using its default detection configuration specified in the corresponding experiment. When threshold sweeps are used, we report the resulting FPR--TPR trade-off rather than relying solely on a single target-FPR operating point.

\paragraph{Confidence intervals.}
For the main TPR/FPR results, we report 95\% confidence intervals for detection rates. Each reported value is the empirical mean detection rate over the evaluated test samples, aggregated across random seeds when multiple seeds are used. The intervals capture variability from the finite number of evaluated samples, with randomness induced by prompt sampling, watermark key/payload sampling, model generation, and attack procedures. For a detection rate $\hat{p}$ estimated from $N$ binary detection outcomes, we compute the confidence interval using the normal approximation to the binomial proportion,
\[
\hat{p} \pm 1.96\sqrt{\frac{\hat{p}(1-\hat{p})}{N}}.
\]
For rates near 0 or 1, we clip the interval to the valid range $[0,1]$.

\paragraph{Threshold sweeps.}
For hard-decision methods such as the BREW baseline and CORE-BREW-Strict, threshold sweeps vary the text-level match threshold $\theta$. For CORE-BREW-Cal, sweeps may vary the block-level score threshold $\tau_{\mathrm{blk}}$, the text-level threshold $\theta$, and the margin parameter when used. All reported values are computed on disjoint test prompts, and unwatermarked test texts are used to estimate FPR.

\paragraph{Attacked-$H_0$ FPR.}
To assess whether false positive behavior remains stable under distribution shift, we apply the same attack pipelines to unwatermarked texts and report the resulting attacked-$H_0$ FPR. This is particularly important for score-based detectors, whose score distributions may shift under paraphrasing or token-level edits.

\subsection{Attack pipelines and evaluation metrics}
\paragraph{Token-level edits.}
We implement substitution/insertion/deletion pipelines inspired by TextAttack~\cite{morris2020textattack}.
For token-preserving substitution, we restrict to synonym replacements that preserve tokenization length to maintain alignment.
For insertion/deletion-like attacks, we allow length changes and rely on window shifting during detection.

\paragraph{Paraphrasing.}
Paraphrasing attacks use a separate paraphraser model (distinct from the generator) following common paraphrase evaluation setups~\cite{Wieting2018,krishna2023paraphrasing}. The paraphraser does not receive the watermark key. We quantify paraphrase strength using BLEU~\cite{Papineni2002} and semantic similarity via BERTScore (or a comparable embedding-based metric)~\cite{zhang2020bertscore}.

\paragraph{Metrics.}
We report text-level TPR and FPR, distinct designated-codeword match rate, and text-quality metrics including PPL, BLEU, and BERTScore. We also log diagnostic quantities such as erasure rate and $|\delta_t|$ tail statistics to analyze the behavior of entropy-aware safeguards.

\subsection{Random seeds, logging, and released artifacts}
We control randomness at multiple layers: prompt sampling, watermark key generation, payload sampling, model sampling RNG states
(Python/NumPy/PyTorch CPU and CUDA), and attack randomness. We log all hyperparameters in machine-readable configuration files
(YAML/JSON) including: model checkpoint, dataset split, $(n,k,t)$, $p^\star$, $s_{\max}$, $\theta$, $\tau_{\mathrm{blk}}$,
$\delta_{\max}$, $q_{\max}$, list-decoding budget parameters, and all generation/attack settings.

We release (or will release upon acceptance, consistent with policy) scripts to: (i) regenerate watermarked and unwatermarked corpora from prompts, (ii) apply attack pipelines, (iii) run detection and threshold sweeps, and (iv) reproduce all reported figures and tables from saved artifacts.

\section{Limitations, Ethics, and Threat Model Clarification}
\label{app:limitations}

This appendix clarifies the threat model and limitations of our approach, and discusses ethical considerations.
We clarify three aspects of the proposed framework:
(i) the separation between \emph{Strict-Safe} and \emph{FPR-Calibrated} detection modes,
(ii) the practical risks of distortion in low-entropy contexts and our safeguards,
and (iii) the limits of robustness under strong semantic rewriting.

\subsection{Threat model and deployment assumptions}
\paragraph{Keyed provenance, not universal detection.}
Our objective is \emph{keyed} provenance verification: given a secret key $K$, a provider (or authorized verifier)
can test whether a text was generated by a watermarked model under that key. This is distinct from open-set,
keyless classifiers for ``AI-generated'' text and does not attempt to detect text generated by arbitrary models
without key access~\cite{kirchenbauer2023watermark,mitchell2023detectgpt}.

\paragraph{Detector capabilities: token-only vs.\ model-aware.}
We consider two detector capability settings:
\begin{itemize}
\item \textbf{Token-only detector:} the detector receives only the final text and the key $K$.
This supports hard-decision extraction and Strict-Safe detection without recomputing model probabilities.
\item \textbf{Model-aware detector:} the detector also has access to the base model (or an equivalent scoring model),
allowing it to recompute token probabilities on the observed prefix.
This enables entropy-aware erasures and LLR-based per-token scoring (Sections~\ref{sec:method}--\ref{sec:experiments} and Appendix~\ref{app:impl_repro_details}).
\end{itemize}
In provider-side deployments (e.g., a model service verifying its own outputs), model-aware detection is natural.
In third-party auditing without model access, token-only detection is more realistic, but it cannot implement
all safeguards and may be less robust in low-entropy regimes.

\paragraph{Adversary goals and knowledge.}
We assume an adversary can post-process the text via paraphrasing, synonym substitution, insertion, deletion,
or reformatting, as commonly studied in watermark robustness evaluations~\cite{morris2020textattack,wolff2020attacking}.
We primarily consider \emph{black-box} adversaries who do not know the key $K$. A fully informed attacker who knows
$K$ and can exactly simulate the detector could, in principle, remove the watermark by targeted rewriting; like most
keyed watermarking schemes, we do not claim robustness against such a worst-case white-box adversary.

\subsection{Limitations of the channel model}
\paragraph{Idealized BSC and independence assumptions.}
Our theory models Constant Hit-Rate embedding as inducing a stationary memoryless binary channel with known parameter $p^\star$. While channel shaping reduces context dependence relative to fixed-bias watermarking, real LLM generation is not strictly i.i.d.: token distributions depend on long-range context, and adversarial edits introduce correlated changes. Consequently, the BSC model and approximate block independence are best viewed as \emph{analysis tools} rather than exact realities.
We therefore complement theoretical bounds with empirical evaluations of the observed FPR--TPR trade-off in Section~\ref{sec:experiments}. For deployment, detection thresholds should be selected using in-domain unwatermarked validation text.

\paragraph{Fixed partitions and Strict-Safe interpretation.}
Our evaluated instantiation uses a fixed keyed vocabulary partition to improve reconstruction stability under insertion/deletion attacks, since a token's partition bit remains stable when its position shifts. 
However, reuse of the same partition can introduce dependencies through repeated tokens and natural-language correlations, so aggregate FPR bounds should be interpreted as idealized analysis tools complemented by empirical calibration on unwatermarked text. 
Similarly, Strict-Safe preserves a bounded-distance rule with respect to the detector's PRF-filled hard-decision representation, not necessarily with respect to the raw token-only hard vector when erasure replacement changes observed bits.

\paragraph{Shift search limitations.}
Window shifting partially addresses insertion/deletion noise by searching a bounded set of alignments.
However, large or highly structured edits can break local alignment assumptions and reduce match rates.
More powerful synchronization mechanisms (e.g., learned alignment or marker-based synchronization) are outside our scope.

\subsection{Limitations of robustness under semantic rewriting}
Paraphrasing and semantic rewriting can eventually destroy most token-level watermarks, including ours,
if the attacker is allowed sufficient edit budget and can rewrite aggressively~\cite{wolff2020attacking,krishna2023paraphrasing}.
Our goal is to improve robustness in the practically relevant regime of \emph{moderate} edits while maintaining strict false-positive control.
We report ROC curves and TPR/FPR measurements under paraphrasing attacks to characterize degradation under semantic rewriting (Section~\ref{sec:experiments}).
We do not claim unconditional robustness to arbitrary meaning-preserving transformations, which remains an open challenge for token-level watermarking.


\subsection{Low-Entropy Contexts and Quality Risks}
Enforcing a fixed hit-rate in low-entropy contexts can require large logit shifts when the base model assigns very small probability mass to the target list. Such shifts may affect fluency, factuality, or style~\cite{holtzman2020curious}. To mitigate this risk, we use bias control and feasibility-based erasures (Section~\ref{sec:method}), and analyze their effects through text-quality and erasure ablations in Appendices~\ref{app:text-quality-sweep} and~\ref{app:erasure-ablation}. These safeguards trade off watermark capacity for quality: when many positions are erased, the effective evidence per block decreases, so longer text or more blocks may be needed for reliable detection.

\subsection{Acceptance Regions and Detection Modes}
A persistent technical ambiguity in soft-decision watermarking is whether one can simultaneously
(i) preserve the hard-decision bounded-distance acceptance region exactly (thereby inheriting a combinatorial FPR argument),
and (ii) correct strictly more errors than a bounded-distance decoder.
In our framework these are intentionally separated:
\begin{itemize}
\item \textbf{Strict-Safe mode} preserves the bounded-distance acceptance region and therefore inherits baseline-style FPR behavior,
but it does not claim beyond-$t$ correction.
\item \textbf{FPR-Calibrated mode} uses an explicit likelihood-score threshold and limited list decoding (e.g., Chase-type)~\cite{chase1972,lin2004error},
which can recover the designated codeword beyond the nominal radius when reliability patterns are favorable.
False positives are controlled statistically through score-based rejection, bounded candidate search, and tail-bound analysis (Section~\ref{sec:Theory}), and detection performance is reported through the observed FPR--TPR trade-off (Section~\ref{sec:experiments}).
\end{itemize}
This separation makes the trade-off explicit and avoids relying on logically inconsistent acceptance-region claims.

\subsection{Ethical considerations and potential misuse}
\label{app:Ethical}
\paragraph{Intended use.}
Our primary intended use is provenance verification for transparency and accountability, including attribution of model-generated content,
auditing of policy compliance tags, and tracing misuse when a model provider is responsible for outputs.
The designated-codeword design targets extremely low FPR to reduce the risk of falsely accusing a human author.

\paragraph{Misuse and over-reliance.}
Watermarks should not be treated as a sole or definitive indicator of authorship. Watermarks can be removed or corrupted,
and absence of a watermark does not imply human authorship. Over-reliance can lead to erroneous conclusions,
especially under distribution shift or when the detector is used outside its validated domain.
We therefore recommend using watermark signals as one component in a broader provenance and safety pipeline.

\paragraph{Privacy and data handling.}
Multi-bit payloads can encode identifiers. Even when keys are secret, embedding persistent identifiers may raise privacy concerns.
We recommend minimizing payload content, adopting rotation/expiration policies for identifiers, and restricting detector access.
We also emphasize that our evaluation does not require training on private user data; it uses public corpora and model outputs.


\newpage
\section*{NeurIPS Paper Checklist}

\begin{enumerate}


\item {\bf Claims}
    \item[] Question: Do the main claims made in the abstract and introduction accurately reflect the paper's contributions and scope?
    \item[] Answer: \answerYes{}. 
    \item[] Justification: The abstract and introduction summarize the paper's main contributions: Constant Hit-Rate embedding, calibrated LLR-based decoding, and the separation between Strict-Safe and FPR-Calibrated detection modes. These claims are supported by the method description in Section~\ref{sec:method}, the theoretical analysis in Section~\ref{sec:Theory}, and the empirical evaluation in Section~\ref{sec:experiments}. The empirical claims are scoped to the evaluated open-source LLMs, datasets, BCH settings, and attack configurations. In particular, the paper does not claim that CORE-BREW uniformly dominates every baseline in every metric or attack setting; rather, it reports the observed FPR--TPR trade-offs, low-FPR behavior, and robustness--quality trade-offs, including challenging cases such as insertion attacks and paraphrasing. Limitations and deployment assumptions are discussed in Section~\ref{sec:limitations} and Appendix~\ref{app:limitations}.
    \item[] Guidelines:
    \begin{itemize}
        \item The answer \answerNA{} means that the abstract and introduction do not include the claims made in the paper.
        \item The abstract and/or introduction should clearly state the claims made, including the contributions made in the paper and important assumptions and limitations. A \answerNo{} or \answerNA{} answer to this question will not be perceived well by the reviewers. 
        \item The claims made should match theoretical and experimental results, and reflect how much the results can be expected to generalize to other settings. 
        \item It is fine to include aspirational goals as motivation as long as it is clear that these goals are not attained by the paper. 
    \end{itemize}

\item {\bf Limitations}
    \item[] Question: Does the paper discuss the limitations of the work performed by the authors?
    \item[] Answer: \answerYes{}.
    \item[] Justification: The paper discusses limitations and deployment assumptions in Section~\ref{sec:limitations} and Appendix~\ref{app:limitations}, including model-aware detector requirements, robustness limits under strong semantic rewriting, channel-model assumptions, and quality--robustness trade-offs.
    \item[] Guidelines:
    \begin{itemize}
        \item The answer \answerNA{} means that the paper has no limitation while the answer \answerNo{} means that the paper has limitations, but those are not discussed in the paper. 
        \item The authors are encouraged to create a separate ``Limitations'' section in their paper.
        \item The paper should point out any strong assumptions and how robust the results are to violations of these assumptions (e.g., independence assumptions, noiseless settings, model well-specification, asymptotic approximations only holding locally). The authors should reflect on how these assumptions might be violated in practice and what the implications would be.
        \item The authors should reflect on the scope of the claims made, e.g., if the approach was only tested on a few datasets or with a few runs. In general, empirical results often depend on implicit assumptions, which should be articulated.
        \item The authors should reflect on the factors that influence the performance of the approach. For example, a facial recognition algorithm may perform poorly when image resolution is low or images are taken in low lighting. Or a speech-to-text system might not be used reliably to provide closed captions for online lectures because it fails to handle technical jargon.
        \item The authors should discuss the computational efficiency of the proposed algorithms and how they scale with dataset size.
        \item If applicable, the authors should discuss possible limitations of their approach to address problems of privacy and fairness.
        \item While the authors might fear that complete honesty about limitations might be used by reviewers as grounds for rejection, a worse outcome might be that reviewers discover limitations that aren't acknowledged in the paper. The authors should use their best judgment and recognize that individual actions in favor of transparency play an important role in developing norms that preserve the integrity of the community. Reviewers will be specifically instructed to not penalize honesty concerning limitations.
    \end{itemize}

\item {\bf Theory assumptions and proofs}
    \item[] Question: For each theoretical result, does the paper provide the full set of assumptions and a complete (and correct) proof?
    \item[] Answer: \answerYes{}.
    \item[] Justification: The paper states the theoretical assumptions and results in Section~\ref{sec:Theory}, with complete proofs provided in Appendix~\ref{app:theory-proofs}. The appendix covers the Constant Hit-Rate property, Strict-Safe false-positive bounds, FPR-Calibrated score-based tail bounds, aggregate error bounds, and empirical FPR concentration under the stated channel, independence, and bounded-LLR assumptions.
    \item[] Guidelines:
    \begin{itemize}
        \item The answer \answerNA{} means that the paper does not include theoretical results. 
        \item All the theorems, formulas, and proofs in the paper should be numbered and cross-referenced.
        \item All assumptions should be clearly stated or referenced in the statement of any theorems.
        \item The proofs can either appear in the main paper or the supplemental material, but if they appear in the supplemental material, the authors are encouraged to provide a short proof sketch to provide intuition. 
        \item Inversely, any informal proof provided in the core of the paper should be complemented by formal proofs provided in appendix or supplemental material.
        \item Theorems and Lemmas that the proof relies upon should be properly referenced. 
    \end{itemize}

    \item {\bf Experimental result reproducibility}
    \item[] Question: Does the paper fully disclose all the information needed to reproduce the main experimental results of the paper to the extent that it affects the main claims and/or conclusions of the paper (regardless of whether the code and data are provided or not)?
    \item[] Answer:  \answerYes{}.
    \item[] Justification: The paper provides the information needed to reproduce the main experimental results in Section~\ref{sec:experiments} and Appendix~\ref{app:impl_repro_details}, including models, datasets, baselines, BCH parameters, generation settings, attack pipelines, detection thresholds, and evaluation metrics. Appendix~\ref{app:impl_repro_details} further specifies implementation details, software and hardware settings, random seeds, logging, and saved artifacts used to reproduce the reported tables and figures.
    \item[] Guidelines:
    \begin{itemize}
        \item The answer \answerNA{} means that the paper does not include experiments.
        \item If the paper includes experiments, a \answerNo{} answer to this question will not be perceived well by the reviewers: Making the paper reproducible is important, regardless of whether the code and data are provided or not.
        \item If the contribution is a dataset and\slash or model, the authors should describe the steps taken to make their results reproducible or verifiable. 
        \item Depending on the contribution, reproducibility can be accomplished in various ways. For example, if the contribution is a novel architecture, describing the architecture fully might suffice, or if the contribution is a specific model and empirical evaluation, it may be necessary to either make it possible for others to replicate the model with the same dataset, or provide access to the model. In general. releasing code and data is often one good way to accomplish this, but reproducibility can also be provided via detailed instructions for how to replicate the results, access to a hosted model (e.g., in the case of a large language model), releasing of a model checkpoint, or other means that are appropriate to the research performed.
        \item While NeurIPS does not require releasing code, the conference does require all submissions to provide some reasonable avenue for reproducibility, which may depend on the nature of the contribution. For example
        \begin{enumerate}
            \item If the contribution is primarily a new algorithm, the paper should make it clear how to reproduce that algorithm.
            \item If the contribution is primarily a new model architecture, the paper should describe the architecture clearly and fully.
            \item If the contribution is a new model (e.g., a large language model), then there should either be a way to access this model for reproducing the results or a way to reproduce the model (e.g., with an open-source dataset or instructions for how to construct the dataset).
            \item We recognize that reproducibility may be tricky in some cases, in which case authors are welcome to describe the particular way they provide for reproducibility. In the case of closed-source models, it may be that access to the model is limited in some way (e.g., to registered users), but it should be possible for other researchers to have some path to reproducing or verifying the results.
        \end{enumerate}
    \end{itemize}

\item {\bf Open access to data and code}
    \item[] Question: Does the paper provide open access to the data and code, with sufficient instructions to faithfully reproduce the main experimental results, as described in supplemental material?
    \item[] Answer: \answerYes{}.
    \item[] Justification: We provide an anonymized code release with instructions for reproducing the main CORE-BREW experiments. The released materials include the implementations of CORE-BREW-Strict and CORE-BREW-Cal, configuration files, environment requirements, processed C4 and OpenGen prompt files, and an executable script for generation and detection. The README provides setup instructions, including the Python/SageMath environment, package requirements, model download procedure, example commands, and output locations. Additional implementation and reproducibility details, including hardware, hyperparameters, random seeds, logging, and evaluation protocols, are provided in Appendix~\ref{app:impl_repro_details}.
    \item[] Guidelines:
    \begin{itemize}
        \item The answer \answerNA{} means that paper does not include experiments requiring code.
        \item Please see the NeurIPS code and data submission guidelines (\url{https://neurips.cc/public/guides/CodeSubmissionPolicy}) for more details.
        \item While we encourage the release of code and data, we understand that this might not be possible, so \answerNo{} is an acceptable answer. Papers cannot be rejected simply for not including code, unless this is central to the contribution (e.g., for a new open-source benchmark).
        \item The instructions should contain the exact command and environment needed to run to reproduce the results. See the NeurIPS code and data submission guidelines (\url{https://neurips.cc/public/guides/CodeSubmissionPolicy}) for more details.
        \item The authors should provide instructions on data access and preparation, including how to access the raw data, preprocessed data, intermediate data, and generated data, etc.
        \item The authors should provide scripts to reproduce all experimental results for the new proposed method and baselines. If only a subset of experiments are reproducible, they should state which ones are omitted from the script and why.
        \item At submission time, to preserve anonymity, the authors should release anonymized versions (if applicable).
        \item Providing as much information as possible in supplemental material (appended to the paper) is recommended, but including URLs to data and code is permitted.
    \end{itemize}

\item {\bf Experimental setting/details}
    \item[] Question: Does the paper specify all the training and test details (e.g., data splits, hyperparameters, how they were chosen, type of optimizer) necessary to understand the results?
    \item[] Answer: \answerYes{}.
    \item[] Justification: The paper specifies the experimental settings in Section~\ref{sec:experiments} and Appendix~\ref{app:impl_repro_details}, including the models, datasets, prompt splits, generation settings, BCH parameters, watermark hyperparameters, attack pipelines, detection thresholds, and evaluation metrics. Since the method does not involve additional model training or fine-tuning, optimizer details are not applicable; all reported experiments are inference-time generation and detection evaluations.
    \item[] Guidelines:
    \begin{itemize}
        \item The answer \answerNA{} means that the paper does not include experiments.
        \item The experimental setting should be presented in the core of the paper to a level of detail that is necessary to appreciate the results and make sense of them.
        \item The full details can be provided either with the code, in appendix, or as supplemental material.
    \end{itemize}

\item {\bf Experiment statistical significance}
    \item[] Question: Does the paper report error bars suitably and correctly defined or other appropriate information about the statistical significance of the experiments?
    \item[] Answer: \answerYes{}.
    \item[] Justification: The paper reports approximate TPR/FPR with 95\% confidence intervals for the main numerical results, including the clean setting in Appendix~\ref{app:clean-breakdown} and the robustness experiments under substitution, deletion, insertion, and paraphrasing attacks in Appendices~\ref{Appendix:Synthetic_Token_Level_Attacks} and~\ref{Appendix:Paraphrasing_attacks}. Appendix~\ref{app:impl_repro_details} explains how the intervals are computed for detection rates, what sources of variability they capture, including finite test-sample variability, prompt sampling, watermark key/payload sampling, model generation randomness, and attack randomness, and notes that the intervals are clipped to the valid range $[0,1]$ for rates near 0 or 1. In addition, the paper logs random seeds and configurations as described in Appendix~\ref{app:impl_repro_details}, supporting reproducibility of the reported estimates.
    \item[] Guidelines:
    \begin{itemize}
        \item The answer \answerNA{} means that the paper does not include experiments.
        \item The authors should answer \answerYes{} if the results are accompanied by error bars, confidence intervals, or statistical significance tests, at least for the experiments that support the main claims of the paper.
        \item The factors of variability that the error bars are capturing should be clearly stated (for example, train/test split, initialization, random drawing of some parameter, or overall run with given experimental conditions).
        \item The method for calculating the error bars should be explained (closed form formula, call to a library function, bootstrap, etc.)
        \item The assumptions made should be given (e.g., Normally distributed errors).
        \item It should be clear whether the error bar is the standard deviation or the standard error of the mean.
        \item It is OK to report 1-sigma error bars, but one should state it. The authors should preferably report a 2-sigma error bar than state that they have a 96\% CI, if the hypothesis of Normality of errors is not verified.
        \item For asymmetric distributions, the authors should be careful not to show in tables or figures symmetric error bars that would yield results that are out of range (e.g., negative error rates).
        \item If error bars are reported in tables or plots, the authors should explain in the text how they were calculated and reference the corresponding figures or tables in the text.
    \end{itemize}

\item {\bf Experiments compute resources}
    \item[] Question: For each experiment, does the paper provide sufficient information on the computer resources (type of compute workers, memory, time of execution) needed to reproduce the experiments?
    \item[] Answer: \answerYes{}.
    \item[] Justification: Appendix~\ref{app:hardware} reports the main compute environment, including a local CUDA 13.0 server with four NVIDIA RTX PRO 6000 Blackwell GPUs, 96GB VRAM per GPU, an Intel Xeon 6740P CPU with 96 physical cores, and 1.5TB system memory. The paper clarifies that all reported experiments are inference-only and do not involve model training or fine-tuning. It also explains which stages dominate compute cost---text generation, attack generation, and model-aware detection---and which stages can be rerun cheaply from cached artifacts. Table~\ref{tab:compute-per-run-delta3-p09} provides approximate per-run wall-clock time and nominal 4-GPU hours for the main $\delta=3$ and $p^\star=0.9$ experiments.
    \item[] Guidelines:
    \begin{itemize}
        \item The answer \answerNA{} means that the paper does not include experiments.
        \item The paper should indicate the type of compute workers CPU or GPU, internal cluster, or cloud provider, including relevant memory and storage.
        \item The paper should provide the amount of compute required for each of the individual experimental runs as well as estimate the total compute. 
        \item The paper should disclose whether the full research project required more compute than the experiments reported in the paper (e.g., preliminary or failed experiments that didn't make it into the paper). 
    \end{itemize}
    
\item {\bf Code of ethics}
    \item[] Question: Does the research conducted in the paper conform, in every respect, with the NeurIPS Code of Ethics \url{https://neurips.cc/public/EthicsGuidelines}?
    \item[] Answer: \answerYes{}.
    \item[] Justification: The research conforms to the NeurIPS Code of Ethics. The paper uses public corpora and model outputs rather than private user data or human-subject experiments, preserves anonymity in the submission, and discusses intended use, misuse risks, over-reliance, and privacy considerations in Appendix~\ref{app:limitations}.
    \item[] Guidelines:
    \begin{itemize}
        \item The answer \answerNA{} means that the authors have not reviewed the NeurIPS Code of Ethics.
        \item If the authors answer \answerNo{}, they should explain the special circumstances that require a deviation from the Code of Ethics.
        \item The authors should make sure to preserve anonymity (e.g., if there is a special consideration due to laws or regulations in their jurisdiction).
    \end{itemize}

\item {\bf Broader impacts}
    \item[] Question: Does the paper discuss both potential positive societal impacts and negative societal impacts of the work performed?
    \item[] Answer: \answerYes{}.
    \item[] Justification: The paper discusses both positive and negative societal impacts in Appendix~\ref{app:Ethical}. It describes intended uses such as provenance verification, accountability, and policy-compliance auditing, while also addressing risks of misuse, over-reliance on watermark signals, false attribution, and privacy concerns associated with multi-bit payloads.
    \item[] Guidelines:
    \begin{itemize}
        \item The answer \answerNA{} means that there is no societal impact of the work performed.
        \item If the authors answer \answerNA{} or \answerNo{}, they should explain why their work has no societal impact or why the paper does not address societal impact.
        \item Examples of negative societal impacts include potential malicious or unintended uses (e.g., disinformation, generating fake profiles, surveillance), fairness considerations (e.g., deployment of technologies that could make decisions that unfairly impact specific groups), privacy considerations, and security considerations.
        \item The conference expects that many papers will be foundational research and not tied to particular applications, let alone deployments. However, if there is a direct path to any negative applications, the authors should point it out. For example, it is legitimate to point out that an improvement in the quality of generative models could be used to generate Deepfakes for disinformation. On the other hand, it is not needed to point out that a generic algorithm for optimizing neural networks could enable people to train models that generate Deepfakes faster.
        \item The authors should consider possible harms that could arise when the technology is being used as intended and functioning correctly, harms that could arise when the technology is being used as intended but gives incorrect results, and harms following from (intentional or unintentional) misuse of the technology.
        \item If there are negative societal impacts, the authors could also discuss possible mitigation strategies (e.g., gated release of models, providing defenses in addition to attacks, mechanisms for monitoring misuse, mechanisms to monitor how a system learns from feedback over time, improving the efficiency and accessibility of ML).
    \end{itemize}
    
\item {\bf Safeguards}
    \item[] Question: Does the paper describe safeguards that have been put in place for responsible release of data or models that have a high risk for misuse (e.g., pre-trained language models, image generators, or scraped datasets)?
    \item[] Answer: \answerYes{}.
    \item[] Justification: We do not release pretrained model checkpoints, image generators, scraped raw corpora, or full repackaged scraped datasets. The supplementary materials contain anonymized code, configuration files, prompt sampling scripts, prompt identifiers, and the processed C4 and OpenGen prompt/evaluation subsets used for evaluation and reproducibility. To reduce redistribution and misuse risks, the released prompt/evaluation subsets are limited to the sampled evaluation material needed to reproduce the reported experiments rather than the full source corpora. The provenance, licenses, and terms of use for these assets are documented in Appendix~\ref{app:asset-licenses} and Table~\ref{tab:asset-licenses}, including the C4 processed subset, the OpenGen processed subset, and the relevant source-corpus terms. The supplementary README further records detailed URLs, revisions, commit hashes, package versions, prompt processing details, and instructions for accessing the original public datasets through their respective providers and licenses. Responsible-use considerations, including misuse, over-reliance, false attribution, and privacy concerns, are discussed in Appendix~\ref{app:Ethical}.
    \item[] Guidelines:
    \begin{itemize}
        \item The answer \answerNA{} means that the paper poses no such risks.
        \item Released models that have a high risk for misuse or dual-use should be released with necessary safeguards to allow for controlled use of the model, for example by requiring that users adhere to usage guidelines or restrictions to access the model or implementing safety filters. 
        \item Datasets that have been scraped from the Internet could pose safety risks. The authors should describe how they avoided releasing unsafe images.
        \item We recognize that providing effective safeguards is challenging, and many papers do not require this, but we encourage authors to take this into account and make a best faith effort.
    \end{itemize}

\item {\bf Licenses for existing assets}
    \item[] Question: Are the creators or original owners of assets (e.g., code, data, models), used in the paper, properly credited and are the license and terms of use explicitly mentioned and properly respected?
    \item[] Answer: \answerYes{}.
    \item[] Justification: We credit the original creators of existing assets used in the paper, including model checkpoints, datasets, baseline methods, the MarkLLM framework, software libraries, attack/paraphrase tools, and evaluation metrics. Appendix~\ref{app:asset-licenses} lists the representative external assets separately and summarizes their corresponding license names or terms of use. We use pretrained model checkpoints only for inference-time evaluation and do not redistribute model checkpoints or third-party baseline code. The supplementary package includes small processed prompt/evaluation subsets derived from public datasets; the original dataset licenses and terms of use apply to these files. For each baseline implementation, we preserve the original license notice where applicable and cite the corresponding paper. The supplementary README lists the exact repository URL, commit hash or version, and license file used for each baseline and software dependency.
    \item[] Guidelines:
    \begin{itemize}
        \item The answer \answerNA{} means that the paper does not use existing assets.
        \item The authors should cite the original paper that produced the code package or dataset.
        \item The authors should state which version of the asset is used and, if possible, include a URL.
        \item The name of the license (e.g., CC-BY 4.0) should be included for each asset.
        \item For scraped data from a particular source (e.g., website), the copyright and terms of service of that source should be provided.
        \item If assets are released, the license, copyright information, and terms of use in the package should be provided. For popular datasets, \url{paperswithcode.com/datasets} has curated licenses for some datasets. Their licensing guide can help determine the license of a dataset.
        \item For existing datasets that are re-packaged, both the original license and the license of the derived asset (if it has changed) should be provided.
        \item If this information is not available online, the authors are encouraged to reach out to the asset's creators.
    \end{itemize}

\item {\bf New assets}
    \item[] Question: Are new assets introduced in the paper well documented and is the documentation provided alongside the assets?
    \item[] Answer: \answerYes{}.
    \item[] Justification: The paper introduces an anonymized code package for CORE-BREW. The released materials include implementation scripts, configuration files, environment requirements, processed prompt files derived from public corpora, and executable scripts for reproducing the main generation and detection experiments. The documentation is provided alongside the code in a README, which specifies the setup procedure, package requirements, model download instructions, data preparation, example commands, output locations, and limitations of the release. No new model checkpoints or standalone benchmark datasets are introduced.
    \item[] Guidelines:
    \begin{itemize}
        \item The answer \answerNA{} means that the paper does not release new assets.
        \item Researchers should communicate the details of the dataset\slash code\slash model as part of their submissions via structured templates. This includes details about training, license, limitations, etc. 
        \item The paper should discuss whether and how consent was obtained from people whose asset is used.
        \item At submission time, remember to anonymize your assets (if applicable). You can either create an anonymized URL or include an anonymized zip file.
    \end{itemize}

\item {\bf Crowdsourcing and research with human subjects}
    \item[] Question: For crowdsourcing experiments and research with human subjects, does the paper include the full text of instructions given to participants and screenshots, if applicable, as well as details about compensation (if any)? 
    \item[] Answer: \answerNA{}.
    \item[] Justification: The paper does not involve crowdsourcing, user studies, surveys, or research with human subjects. All experiments are conducted using public corpora, open-source language models, generated model outputs, and automated evaluation metrics.
    \item[] Guidelines:
    \begin{itemize}
        \item The answer \answerNA{} means that the paper does not involve crowdsourcing nor research with human subjects.
        \item Including this information in the supplemental material is fine, but if the main contribution of the paper involves human subjects, then as much detail as possible should be included in the main paper. 
        \item According to the NeurIPS Code of Ethics, workers involved in data collection, curation, or other labor should be paid at least the minimum wage in the country of the data collector. 
    \end{itemize}

\item {\bf Institutional review board (IRB) approvals or equivalent for research with human subjects}
    \item[] Question: Does the paper describe potential risks incurred by study participants, whether such risks were disclosed to the subjects, and whether Institutional Review Board (IRB) approvals (or an equivalent approval/review based on the requirements of your country or institution) were obtained?
    \item[] Answer: \answerNA{}.
    \item[] Justification: The paper does not involve crowdsourcing, user studies, surveys, or research with human subjects. Therefore, IRB approval or an equivalent human-subjects review is not applicable.
    \item[] Guidelines:
    \begin{itemize}
        \item The answer \answerNA{} means that the paper does not involve crowdsourcing nor research with human subjects.
        \item Depending on the country in which research is conducted, IRB approval (or equivalent) may be required for any human subjects research. If you obtained IRB approval, you should clearly state this in the paper. 
        \item We recognize that the procedures for this may vary significantly between institutions and locations, and we expect authors to adhere to the NeurIPS Code of Ethics and the guidelines for their institution. 
        \item For initial submissions, do not include any information that would break anonymity (if applicable), such as the institution conducting the review.
    \end{itemize}

\item {\bf Declaration of LLM usage}
    \item[] Question: Does the paper describe the usage of LLMs if it is an important, original, or non-standard component of the core methods in this research? Note that if the LLM is used only for writing, editing, or formatting purposes and does \emph{not} impact the core methodology, scientific rigor, or originality of the research, declaration is not required.
    \item[] Answer: \answerYes{}.
    \item[] Justification: The paper uses LLMs as the target generation models for watermark embedding, attack evaluation, and detection. These experimental uses are described in Section~\ref{sec:experiments}, with model and generation details provided in Appendix~\ref{app:impl_repro_details}. In particular, the evaluated target models include OPT-1.3B and Mistral-7B. Separately, any additional LLM assistance, if used during paper preparation, was limited to proofreading, translation, and formatting, and did not affect the core methodology, experimental design, implementation, evaluation protocol, or scientific conclusions.
    \item[] Guidelines:
    \begin{itemize}
        \item The answer \answerNA{} means that the core method development in this research does not involve LLMs as any important, original, or non-standard components.
        \item Please refer to our LLM policy in the NeurIPS handbook for what should or should not be described.
    \end{itemize}

\end{enumerate}

\end{document}